\documentclass{ectj}

\usepackage{bbm, dsfont}

\usepackage{amsfonts,amssymb,graphics,epsfig,verbatim,bm,latexsym,amsmath,url,amsbsy,mathtools}
\RequirePackage{mathspec}
\usepackage{natbib}
\usepackage{booktabs}
\usepackage{array}
\usepackage{tabularx}
\usepackage{threeparttable}
\usepackage{enumitem}

\newtheorem{theorem}{Theorem}
\newtheorem{assumption}{Assumption}
\newtheorem{proposition}{Proposition}
\newtheorem{corollary}{Corollary}
\newtheorem{lemma}{Lemma}

\newtheorem{remark}{Remark}

\renewcommand{\thesection}{\arabic{section}}
\renewcommand{\theequation}{\arabic{section}.\arabic{equation}}
\renewcommand{\thetheorem}{\arabic{section}.\arabic{theorem}}
\renewcommand{\theassumption}{\arabic{assumption}}
\renewcommand{\theproposition}{\arabic{section}.\arabic{proposition}}
\renewcommand{\thecorollary}{\arabic{section}.\arabic{corollary}}
\renewcommand{\thelemma}{\arabic{section}.\arabic{lemma}}
\renewcommand{\theexample}{\arabic{section}.\arabic{example}}
\renewcommand{\theremark}{\arabic{section}.\arabic{remark}}

\usepackage{xcolor}
\definecolor{ssvs}{HTML}{0072B2}
\definecolor{gets}{HTML}{D55E00}
\definecolor{alasso}{HTML}{009E73}

\received{...}
\accepted{...}
\volume{21}

\title[Bayesian Indicator-Saturated Regression]{Bayesian Indicator-Saturated Regression % \\ for Climate Policy Evaluation
}

\author[Konrad, Vashold, Crespo Cuaresma]{Lucas D. Konrad$^{\dagger}$,
 Lukas Vashold$^{\ddagger}$ and Jesus Crespo Cuaresma$^{*}$}

\address{$^{\dagger}$WU Vienna University of Economics and Business,\\ Welthandelsplatz 1, 1020 Vienna (Austria)}
\email{lucas.konrad@wu.ac.at}

\address{$^{\ddagger}$WU Vienna University of Economics and Business}

\address{$^{*}$WU Vienna University of Economics and Business,\\ Wittgenstein Centre for Demography and Global Human Capital, \\ Austrian Institute of Economic Research}

\def\AmSTeX{$\cal A$\kern-.1667em\lower.5ex\hbox{$\cal M$}\kern-.125em
 $\cal S$-\TeX}
\def\BibTeX{{\rm B\kern-.05em{\sc i\kern-.025em b}\kern-.08em
 T\kern-.1667em\lower.7ex\hbox{E}\kern-.125emX}}

\begin{document}

 \begin{abstract}

Structural break detection has emerged as an important tool for assessing the effects of policies in settings where conventional policy evaluation methods might not be applicable.
% for evaluating the effectiveness of climate change mitigation policies. 
In this paper, we introduce a unified Bayesian framework for detecting structural breaks with unknown timing and arbitrary sequence in longitudinal data. 
The proposed setup builds on a indicator-saturated regression design and uses a spike-and-slab prior for selection among indicators. 
We establish that the particular choice of an inverse-moment density as the slab component provides model selection consistency in this model class. 
% We further establish that this setup provides model selection consistency. 
Simulation results show that the method outperforms comparable frequentist approaches, particularly in environments with a high probability of structural breaks. 
We illustrate the proposed framework by analysing climate policies in the European road transport sector.

 %of on the other hand has a long tradition in the field of econometrics.
 % Breaks in time series or longitudinal data can be indicative for shifts in the structural process generating these data. There exist many approaches for the search and estimation of the magnitude of such shifts but they often require a large temporal dimension or lack a coherent approach to assess break uncertainty.
 % The search for such shifts can be cast in the form of indicator-saturated regressions.
 % Recent approaches in the frequentist paradigm rely on block-search algorithms that lack a proper framework of break uncertainty quantification.

 \keywords{Structural breaks, model selection, spike-and-slab priors, indicator saturation, climate policy evaluation}

 \end{abstract}

 % main body

 \section{Introduction}

Conventional program evaluations are typically based on the ex-post analysis of interventions with known implementation dates. Inference is based on the comparison of a subset of units that is affected by the intervention (the treated group) to units that are not affected (the control group). Under the assumption of conditional ignorability of treatment assignment, this framework delivers consistent estimates of causal treatment effects \citep[see, for example,][]{imbens2009}. 
In many cases, policies are not implemented (quasi-)randomly or in isolation, and maintaining this assumption may not be viable. For example, climate change mitigation policies are typically enacted at the national or supranational level, target multiple emissions-generating sectors simultaneously, and interact with other regulatory and fiscal instruments. 
This institutional environment limits the availability of quasi-experimental variation in policy exposure and thereby complicates the application of conventional causal inference methods based on cross-sectional or temporal contrasts.

Given these identification challenges, an indirect strategy for uncovering effective policies (or mixes thereof) is to rely on structural break detection methods for time series or panel data.\footnote{
    The literature on dating structural breaks at unknown points in time has evolved substantially over the last decades and includes general tests for parameter instability \citep{Andrews1993,AndrewsPloberger1994}, methods for estimating and testing multiple breaks \citep{Bai1997,BaiPerron1998,BaiPerron2003}, and Bayesian approaches that use posterior simulation and model comparison to infer break locations and regime-specific parameters \citep{BarryHartigan1993,Chib1998,WangZivot2000,KoopPotter2007}. \citet{CasiniPerron2019} provides a comprehensive review of this line of research.
}
An emerging strand of the literature operationalizes policy effects as statistically significant level shifts detected using indicator-saturated regression (ISR) methods \citep{castle2015, pretis_discovering_2026}.
The breaks detected are subsequently attributed to specific policy interventions or bundles of measures using institutional and historical information about the implementation of policies.
ISR provides a general-to-specific modelling strategy that enables the detection of multiple structural breaks without prior knowledge of their timing.
This approach has been applied to global climate policy assessments \citep{stechemesser2024climate}, to the study of mitigation policy effectiveness at the sectoral level \citep{koch2022} and at the subnational level \citep{pretis2022b}, as well as to the construction of harmonized datasets of breakpoints in greenhouse gas emissions  \citep{tebecis2025dataset}.
% Recent contributions to the empirical identification of climate change mitigation policies exploit systematic variation in carbon emissions data to detect structural changes associated with policy interventions \citep{koch2022, stechemesser2024climate, tebecis2025dataset}. 
% Within this framework, policy effects are operationalized as statistically significant level shifts detected using indicator-saturated regression (ISR) methods \citep{castle2015, pretis_discovering_2026}. 

In this paper, we develop a Bayesian extension of the ISR framework tailored to high-dimensional structural break detection in time series or panel data. While existing contributions rely on frequentist selection procedures within ISR \citep{castle2015, pretis_discovering_2026}, we embed break detection in a fully probabilistic model selection framework. As in \citet{castle2015} and \citet{pretis_discovering_2026}, our approach considers an unknown number of structural breaks of unknown timing%within a two-way fixed effects specification
, but departs from these contributions by introducing a Bayesian setup with a spike-and-slab prior that combines a Dirac spike and a non-local slab component. We show that this prior structure is necessary to obtain consistent model selection within the ISR framework. Our setting penalises weak signals more aggressively than conventional local priors, and provides posterior uncertainty quantification within a unified probabilistic framework. Our particular choice of non-local prior, the inverse moment (iMOM) prior \citep[see][]{rossell2012nlp_modsel}, accumulates evidence against a spurious break at a near exponential speed. This keeps the large amount of potential excess breaks in check, while its Cauchy-like tails impose minimal shrinkage on large signals \citep[in line with][]{jeffreys1998, bayarri2007extending}. In addition, the model explicitly accommodates the presence of outlying observations, improving robustness in applications with noisy data while remaining computationally tractable in high-dimensional settings.

The proposed methodological framework is related to the literature on Bayesian change-point models.\footnote{
    Examples of this literature include product-partition approaches \citep{BarryHartigan1993}, latent-state and simulation-based methods for multiple change-point models \citep{Chib1998,GiordaniKohn2008}, or flexible Bayesian econometric models with multiple structural breaks that employ hierarchical priors \citep{KoopPotter2007}. 
} These methods aim at placing a posterior distribution on the number and timing of regime changes and are particularly well suited to settings in which structural change can be represented as a relatively low-dimensional segmentation problem of the time series into distinct regimes.
Our approach employs a different parametrization of structural change. We represent potential shifts through a saturated set of unit-specific step-shift indicators in a regression model and treat their detection as a high-dimensional model selection problem. Such a flexible formulation is particularly useful in settings where structural changes might follow an individual break process for each observation --- implying potentially many breaks of different size in any direction --- or where posterior evidence for a structural change is spread across (neighbouring) candidate break dates indicating gradual breaks. 

We evaluate the finite-sample performance of the proposed estimator in a comprehensive simulation study and benchmark it against leading frequentist procedures for indicator saturated regression. The results indicate that our approach performs competitively in sparse designs and exhibits clear advantages in environments characterised by multiple or closely spaced breaks.
Applying our method to European road transport emissions, we replicate the analysis of \citet{koch2022}. While our results broadly confirm the principal conclusions of \citet{koch2022}, we also detect additional policy interventions associated with reductions in transport emissions.

The remainder of the paper is organised as follows. Section~\ref{sec:methodology} presents the proposed framework, which we term Bayesian Indicator Saturated Modelling (BISAM). Section~\ref{sec:simulation} evaluates its finite-sample performance using simulated data and benchmarks it against existing frequentist approaches to structural break detection. Section~\ref{sec:applications} reports the empirical application to European road transport emissions and discusses the implications for the detection of effective climate mitigation policies. Section~\ref{sec:conclusion} concludes.

\section{The Bayesian Indicator Saturated Model}\label{sec:methodology}

In this section, we present our methodological framework for the detection of structural breaks in panel data settings. We formalise the problem of detecting an unknown number of structural breaks at unknown timing within an ISR framework. %, an extremely flexible model class. % widely used in applications to climate policy evaluation due to its interpretability \citep{koch2022, stechemesser2024climate}.

\subsection{Indicator Saturated Regressions: Specification}\label{subsec:bisam}

The literature on structural break detection in time series is extensive, yet the majority of existing approaches either require long time horizons % ---a luxury rarely afforded by macro-environmental panel data---
or impose strong restrictions on the break process. Our modelling framework uses a break function that takes the form of indicator saturation, accommodating practically every combination of transitions while performing well on relatively short time horizons.

Let $i\in \{1, \dots, N\}$ index cross-sectional units and $t\in\{1, \dots, T\}$ index time periods. 
For the observed panel $\{(y_{i,t},\, x_{i,t})\}_{i=1,\dots,N;\; t=1,\dots,T}$ with $y_{i,t} \in \mathbb{R}$ and $x_{i,t} \in \mathbb{R}^{p}$, we define the \emph{general break model}
\begin{equation}
 y_{i,t}
 = f(x_{i,t})
 + \sum_{s \in \mathcal{S}} \gamma_{s}\, d_{s}(i,t)
 + \varepsilon_{i,t},
 \label{eq:mod_general}
\end{equation}
where $f \colon \mathbb{R}^p \to \mathbb{R}$ is a mean function modelling the response under the absence of breaks. $\mathcal{S}$ is a (potentially very large) candidate set of indicator indices, with $q\coloneqq\lvert\mathcal{S}\rvert$ denoting the size of this set, i.e., the number of all possible breaks. The associated break coefficients are denoted $\gamma_s \in \mathbb{R}$, and $d_s \colon \{1, \dots, N\} \times \{1, \dots, T\} \to \{0, 1\}$ are indicator functions selecting the units and periods affected by break~$s$.
The disturbance term $\varepsilon_{i,t}$ captures idiosyncratic shocks. 
% For notational convenience, let $q$ denote the number of all possible breaks in the model.

By choosing $\mathcal{S}$ and the form of $d_s$, this formulation nests a variety of saturation approaches, including impulse indicator saturation (IIS), step-shift indicator saturation (SIS), and trend indicator saturation (TIS). Following \citet{pretis_discovering_2026}, we focus on a particular specification of the model in equation \eqref{eq:mod_general} and impose a linear mean function $f(x_{i,t}) = x_{i,t}' \beta$. The break function is defined making use of step-shift indicators
$d_{j,s}(i,t) = \mathbbm{1}_{\{i = j\}} \, \mathbbm{1}_{\{t \geq s\}}$, and indexing $\mathcal{S}$ by pairs $(j, s)$ with $j \in \{1, \dots, N\}$ and $s \in \{3, \dots, T-1\}$.\footnote{
 Note that we have $s \notin \{1,2,T\}$ as we generally consider fixed effects as well as the presence of outliers. For $s=1$, including the step-shift indicator $d_{j,1}(i,t) = \mathbbm{1}_{\{i = j\}} \, \mathbbm{1}_{\{t \geq 1\}}$ introduces a term that is perfectly collinear with the cross-sectional fixed effect for unit $i$, preventing separate identifiability. Breaks at $s=2,T$ on the other hand are equivalent to outlying observations at $t=1,T$ that are in our framework separately identified by and treated within the framework introduced in Section~\ref{subsec:outlier} below.
 % Note that we have $s \notin \{1,2,T\}$ as we generally consider cross-sectional and time fixed effects. In this case, for $j=i$, including $d_{1}(i,t) = \mathbbm{1}_{t \geq 1} = \mathbbm{1} \, \forall \, t$ or $d_{T}(i,t) =  \mathbbm{1}_{t \geq T} = \mathbbm{1}_{t = T}$ introduces terms that are perfectly collinear with the unit or time fixed effect, preventing separate identifiability. 
 % For $s=2$, since the indicator $d_{2}(i,t)$ can be expressed as $d_{2}(i,t) = \mathbbm{1}_{\{t \geq 2\}} = \mathbbm{1}_{t \geq 1} - \mathbbm{1}_{\{t = 1\}} = d_{1}(i,t) - \mathbbm{1}_{t = 1}$, which is a linear combination of the unit and a time fixed effect, it is also not separately identifiable.
} This yields the SIS linear model
%Following \citet{pretis_discovering_2026}, we cast the search for structural breaks in terms of a step-shift indicator-saturated (SIS) linear model,
\begin{equation}
 y_{i,t} = x_{i,t}' \beta + \sum_{j = 1}^N \sum_{s = 3}^{T-1} \gamma_{j,s} \mathbbm{1}_{\{j = i\}} \mathbbm{1}_{\{t \geq s\}} + \varepsilon_{i,t}.
 \label{eq:mod_simple}
\end{equation}
%where $y_{i,t}$ denotes the outcome of interest for unit $i \in \{1, \dots, N\}$ at time $t \in \{1, \dots, T\}$ \citep[for example, the log of carbon dioxide emissions as in][]{koch2022, stechemesser2024climate}. 
The vector $x_{i,t}$ %is $p$-dimensional and 
may include cross-sectional fixed effects, time fixed effects, and additional covariates associated with the outcome $y_{i,t}$. 
The indicator functions $\mathbbm{1}_{\{j = i\}}$ and $\mathbbm{1}_{\{t \geq s\}}$ define unit-specific step-shift indicators that take the value one for unit $i=j$ and periods $t \geq s$, respectively, and zero otherwise.

In its unrestricted form, the model in equation~\eqref{eq:mod_simple} contains $N(T-3)+p + 1$ free parameters to be estimated from $NT$ observations, assuming homoskedasticity and uncorrelated errors across units. This high-dimensional setting gives rise to a challenging model selection problem in practice. In a classical framework, strategies to address this curse of dimensionality include general-to-specific (GETS) block-search algorithms \citep{santos2008automatic}, as employed by \citet{castle2012model, castle2015} and \citet{pretis_discovering_2026}, as well as regularisation techniques such as the (adaptive) Least Absolute Shrinkage and Selection Operator (LASSO) \citep{tibshirani1996}, used by \citet{li2016}, \citet{qian2016} and \citet{pretis_discovering_2026}.

We propose a fully Bayesian approach designed to address the high-dimensional model selection problem inherent in such indicator-saturated regressions. Bayesian Indicator Saturated Modelling (BISAM) builds on recent advances in Bayesian variable selection. %, achieving %a faster concentration of the Bayes factors through non-local priors \citep{rossell2010nlp_test, rossell2012nlp_modsel}, facilitating 
%model consistency for $q=O(NT)$ as is necessary in an ISR framework \citep[see][]{rossell2010nlp_test, rossell2012nlp_modsel}. 
The coherent probabilistic framework avoids sequential estimation and allows for efficient structural break detection as well as exact posterior uncertainty quantification.

% %---------------- Figure 1 ------------------
% \begin{figure}[t]
% \centering
% \includegraphics[width = \linewidth, trim = {0 0 0 0}, clip]{img/plot2.pdf}
% \vspace*{-10mm}
% \caption{Illustration of step-shift indicator saturation for a single unit to detect the true structural break in $t = 14$ (red line). Blue lines in the left panel depict all potential breaks for which indicators are included in the model. Blue lines in the left panel illustrate the estimated break magnitude when picking the wrong break date, whereas the purple line depicts the bias in estimated break magnitude when ignoring that the observation in period $t = 15$ is an outlier.}
% \label{fig:step_illustration}
% \end{figure}
%--------------------------------------------

% Using the model as specified in \autoref{eq:mod_simple}, the detection of boils down to a selection over them and subsequent estimation of its corresponding magnitude. % to retain only those breaks that correspond to the subset of treated units and time periods

%\subsection{Bayesian Indicator Saturated Modeling with Spike-and-slab Priors}\label{subsec:bisam}

%\subsection{Spike-and-Slab with Inverse Moment Priors}\label{subsec:bisam1}
\subsection{Priors on the Break Function}\label{subsec:bisam1}

The break-size coefficients are modelled using a variant of the Stochastic Search Variable Selection (SSVS) prior. Every potential step-shift is included in the regression, and the estimation procedure enacts the selection of these. For this purpose, we consider a family of spike-and-slab mixture priors on $\gamma_{i,s}$ \citep{mitchell1988, george1993, george1997, ishwaran2005}. In the context of equation~\eqref{eq:mod_simple}, this can be written as
\begin{equation}
 p(\gamma_{i,s} \!\mid\! \delta_{i,s}^{\gamma}) 
 = (1 - \delta_{i,s}^{\gamma}) p_{spike}(\gamma_{i,s}) 
 + \delta_{i,s}^{\gamma} p_{slab}(\gamma_{i,s}),
 \label{eq:ssvs_pri}
\end{equation}
where the binary selection indicator $\delta_{i,s}^{\gamma} \in \{0,1\}$ governs whether the coefficient $\gamma_{i,s}$ stems from $p_{spike}$, the spike density that is typically concentrated at zero, or from $p_{slab}$, the slab density that allows for deviations from zero. Under this specification, structural break detection is naturally understood as variable selection, or equivalently, a posterior inclusion problem: a step-shift is selected whenever $\mathbb{E}(\delta_{i,s}^{\gamma}\!\mid\! y) > P$, for some given probability threshold $P \in (0,1)$.

On the selection indicator $\delta_{i,s}^{\gamma}$ we impose a Bernoulli prior,
\begin{equation}
 p(\delta_{i,s}^{\gamma}) 
 = \omega_{i,s}^{\delta_{i,s}^{\gamma}} 
 (1 - \omega_{i,s})^{1 - \delta_{i,s}^{\gamma}},
\end{equation}
where $\omega_{i,s}$ denotes the prior inclusion probability of the corresponding step-shift indicator. This parameter may be fixed ex ante or treated hierarchically. % and learned from the data.
For example, in the absence of strong prior information on the proportion of structural breaks in the data, it can be specified as constant $\omega_{i,s}=0.5 \, \forall \, i,s$. As detailed further below, this yields the median probability model \citep{barbieri2004optimal, BarbieriBerger2021} for ex-post break selection.
Alternatively, one can learn the sparsity level from the data by assigning a hyperprior such as a Beta distribution on $\omega_{i,s}$ \citep[see, for example,][]{scott2010bayes}. Yet, such a specification might suffer from poor mixing and convergence if the data is only weakly informative.

The flexibility of the break function as given by equation~\eqref{eq:mod_simple} places particular requirements on the prior design of $\gamma_{i,s}$. 
%The search for structural breaks in a specification such as that given by equation~\eqref{eq:mod_simple} places specific requirements on the prior design for $\gamma_{i,s}$, due to the flexibility of the break function. 
First, the prior should induce a clear separation between inclusion and exclusion in the posterior, thereby sharpening break detection while shrinking weak signals aggressively. Second, conditional on inclusion, the slab component should allow for flexible estimation of the break magnitude without imposing excessive shrinkage on large signals.
Third, the overall prior specification should preserve model selection consistency, which in the high-dimensional ISR framework requires consistency in a regime where the number of candidate break parameters grows linearly in $N$ and $T$. 
% Since in such a setting neighbouring step-shift indicators are almost perfectly collinear, consistency is naturally formulated at the level of break \emph{windows}. We establish this window-level consistency argument in Appendix~\ref{app:theory}. 

% \begin{definition}[Non-local priors]
% A prior $\pi(\gamma_{j,s})$ on the break coefficients in \eqref{eq:mod_simple} is a
% \emph{non-local prior} (NLP) if it satisfies
% %
% \begin{equation}
% \lim_{\lvert\gamma_{j,s}\rvert \to 0} \pi(\gamma_{j,s}) = 0,
% \end{equation}
% %
% that is, the prior density vanishes at the null value $\gamma_{j,s} = 0$
% \citep[see][for details]{rossell2017nonlocal}.
% \end{definition}

To satisfy these three requirements, we specify a prior with a Dirac point mass at zero for the spike component and adopt a non-local prior density for the slab component. The Dirac prior for $p_{spike}$ sets shifts to \textit{exactly} zero if $\delta_{i,t}=0$, virtually excluding the break from estimation and thereby adding to the specificity of break detection.\footnote{
 Such a choice, however, increases computational complexity when comparing marginal likelihoods across models that differ in their step-shift indicators, even when Laplace approximations are used. Alternatively, a highly concentrated Gaussian prior for the spike component can be used \citep[see, for example,][]{george1997, ishwaran2005}. Such an approach has the drawback of not fully excluding step-shift indicators, thus decreasing precision of separation for included and excluded breaks.
}
The choice of a non-local prior as slab component further ensures separation from the spike by concentrating probability mass away from zero. Formally, a prior on the break coefficients in equation~\eqref{eq:mod_simple} is non-local if its density $\pi(\gamma_{j,s})$ satisfies
\begin{equation} \label{nlp}
\lim_{\lvert\gamma_{j,s}\rvert \to 0} \pi(\gamma_{j,s}) = 0,
\end{equation}
that is, the prior density vanishes at $\gamma_{j,s}=0$ \citep[see e.g.][for details]{rossell2017nonlocal}. %In contrast, under a local prior density $\pi_L$ with $\pi_L(0)>0$, the Bayes factor in favour of including a break decays only at rate $\mathrm{BF}_T(1\!\mid\!0) = O_p(T^{-1/2})$ when the true coefficient is zero \citep{rossell2010nlp_test}. This slow learning rate implies weak penalization of spurious breaks, particularly in panels with a short time horizon. Non-local priors address this limitation by assigning zero density at the origin, which accelerates the decay of evidence for negligible coefficients. 
% The second requirement of the prior setting, low shrinkage of genuinely large breaks, is governed by the tail behaviour of the slab component. Heavy tails mitigate the bias that proper priors introduce in finite samples by reducing shrinkage of large coefficients \citep[see][]{jeffreys1998}. 
Among non-local prior families, the inverse-moment (iMOM) prior is particularly well suited in our setting, as it combines a rapidly vanishing density at zero with heavy tails that preserve large signals. Its density is given by
\begin{equation}
 \pi_{iMOM}(\gamma_{i,s} \!\mid\! \gamma_0, k, \nu, \tau \sigma^2) 
 = \frac{k (\tau \sigma^2)^{\nu/2}}{\Gamma(\nu/2k)} 
 \lvert \gamma_{i,s} - \gamma_0 \rvert^{-(\nu+1)} 
 \exp \left\{ -\left(\frac{\gamma_{i,s} - \gamma_0}{\sqrt{\tau} \sigma}\right)^{-2k} \right\},
 \label{eq_pimom}
\end{equation}
where $\gamma_0$ denotes the centering value, $\nu > 0$ is a shape parameter, $k > 0$ controls the order of the prior, $\tau$ is a scale hyperparameter, $\sigma^2$ is the error variance and $\Gamma$ denotes the gamma function. The parameters $\nu$ and $k$ jointly determine the tail behaviour and the rate at which density vanishes near $\gamma_0$. In our baseline specification, we set $\gamma_0 = 0$ and $\nu = k = 1$, yielding Cauchy-like tails. See Figure~\ref{fig:tau_calibration} for an illustration of the iMOM density with different scale hyperameters $\tau$.

The suitability of non-local priors in the ISR design can be justified by studying the behaviour of Bayes factors in this setting. Under a local prior density $\pi_L$ with $\pi_L(0)>0$, the Bayes factor in favour of a spurious break decays only at rate $\mathrm{BF}_T(H_A\!\mid\!H_0\!=\!true) = O_p(T^{-1/2})$ when the true coefficient is zero, irrespective of its calibration \citep{rossell2010nlp_test}. This implies a relatively weak penalty for negligible breaks.%, particularly in panels with a short time horizon. 
% Non-local priors accelerate this decay by assigning zero density at the origin. 
% The choice of a non-local slab sharpens this further. 
% Under a pMOM slab of order $k$ the spurious-break Bayes factor decays polynomially, $\mathrm{BF}_T(H_A\!\mid\!H_0) = O_p(T^{-k-1/2})$, whereas under the iMOM (piMOM) slab the decay is near-exponential,
Under an iMOM slab, the decay is near-exponential,
\begin{equation}
 \frac{\log \mathrm{BF}_T(H_A\!\mid\!H_0)}{T^{k/(k+1)}}
 \overset{p}{\to} c < 0,
 \label{eq:pimom_rate}
\end{equation}
so that evidence against a negligible break accumulates far faster than under any prior featuring a local slab. %, whose rate stays $O_p(T^{-1/2})$ irrespective of its calibration \citep[eq.~11]{rossell2010nlp_test}. 

 For $|\gamma_{j,s}| \gg 0$ on the other hand, the exponential term becomes negligible and the density decays polynomially at rate $|\gamma_{j,s}|^{-2}$ \citep[see][]{rossell2010nlp_test}, so that the induced bias on large break estimates is of order $O(1/(T\,\lvert\gamma_{j,s}\rvert^3))$ \citep[see Proposition~2 in][]{rossell2017nonlocal}, which is negligible relative to the $O_p(T^{-1/2})$ sampling variability of the least-squares estimator. Thinner-tailed slab priors, by contrast, may impose excessive shrinkage on large breaks, particularly in short samples.
Thus, this structure combines strong penalization of coefficients near zero with minimal shrinkage of large coefficients, satisfying the first two requirements mentioned about.

Moreover, the characteristic that the density of non-local priors vanishes rapidly at $\gamma_{j,s}=0$ is crucial to meet the third requirement, model selection consistency. 
Since in the ISR setting neighbouring step-shift indicators are almost perfectly collinear, consistency is naturally formulated at the level of break \emph{windows}. 
We establish this window-level consistency argument in Appendix~\ref{app:theory} with a more detailed exposition and associated proofs in Sections~\ref{app:nlp_rates}~and~\ref{app:gauge} of the Online Appendix, respectively. 
In particular, we show that non-locality is not merely a refinement of the slab in our setting but a requirement for false-discovery control once the number of candidate breaks grows at $O(T)$. 
When a break is genuinely present, non-local slabs accumulate evidence at the same exponential rate in $T$, so the additional stringency of them is costless on true breaks. Figure~\ref{fig:prior_rates} confirms both statements directly on the step-shift design. This means that, while exact date recovery is not achievable at fixed break magnitudes owing to the near-collinearity of adjacent step indicators (see Proposition~\ref{prop:resolution} of the Online Appendix), non-locality remains necessary for window-level model selection consistency within the ISR framework.\footnote{
    This argument concerns candidate dates which are separated from every true break, so that collinearity with the true break structure is asymptotically negligible. The complementary problem of \emph{localising} a detected break among its near-collinear neighbours is handled separately by the window rule (see Theorem~\ref{thm:window_consistency} in Appendix~\ref{app:theory}).
}
This feature provides a clear (asymptotic) advantage of BISAM over its frequentist competitors, which is documented in Section~\ref{sec:simulation}.

% In the indicator saturation framework, the number of candidate break parameters scales as $O(NT)$, so that even conditioning on a single cross-sectional unit implies an effective model dimension that grows linearly with $T$. \citet{rossell2012nlp_modsel} show that when the number of coefficients increases linearly in the number of observations, non-local priors %(so-called product moment, pMOM, and product-inverse, piMOM, densities)
% ensure posterior consistency, meaning the posterior model probability of the true break model converges to one as the sample size increases. In contrast, under any continuous local prior density $\pi_L$ with $\pi_L(0) > \varepsilon > 0$, the posterior probability of the true model converges to zero whenever the number of candidate parameters exceeds $O(T^{1/2})$ \citep[see Theorem~2 in][]{rossell2012nlp_modsel}.

% Finally, the iMOM prior assigns zero density at its centre value, $\pi_{iMOM}(\gamma_0)=0$, ensuring a clear separation between exclusion and inclusion. Combined with the Dirac spike at zero, this yields a coherent specification in which excluded coefficients are set exactly to zero, while included coefficients are estimated without undue shrinkage toward the null.
The resulting specification is therefore coherent across all of the three desired properties: the Dirac spike sets excluded coefficients exactly to zero; the heavy tails of the iMOM density allow included breaks to be estimated without undue shrinkage toward the null; and the non-local slab ensures rapid discrimination against spurious breaks, in particular for the iMOM prior, and delivers window-level model selection consistency in the high-dimensional ISR setting. 

% \paragraph{Prior Settings}
This leaves the question for prior specification. It is well known that model selection using SSVS is sensitive to the choice of the slab scale parameter. This also applies in our setting. However, the scale parameter $\tau$ of the iMOM prior has a natural and interpretable role, as it directly governs the prior probability that a break magnitude lies within a neighbourhood of zero. This allows explicit control over the prior mass (and thus shrinkage) imposed on small breaks.

%---------------- Figure 2 ------------------
\begin{figure}[t]
 \centering
 \includegraphics[scale = 0.75]{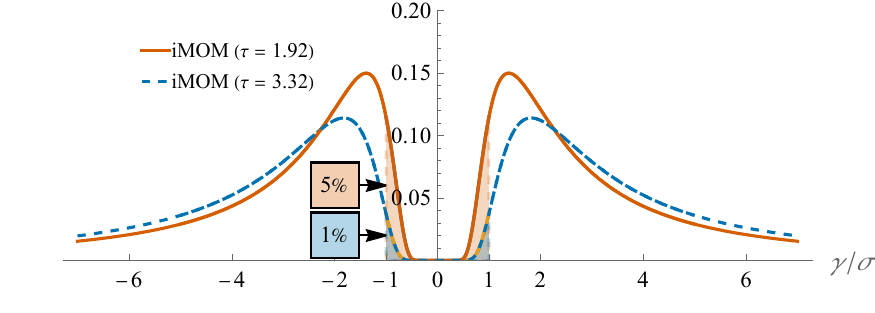}
 \caption{Probability density function of $iMOM(0,1,1,\tau)$ prior for the cases $P(|\gamma_{i,s}|\leq \sigma|\tau)=0.05$ and $P(|\gamma_{i,s}|\leq \sigma|\tau)=0.01$ corresponding to $\tau=1.92$ (-----) and $\tau=3.32$ (- - -), respectively.}
 \label{fig:tau_calibration}
\end{figure}
%--------------------------------------------

This interpretation facilitates calibration based on the prior probability $P(|\gamma_j| \leq \bar{\gamma} \!\mid\! \tau)$ for a chosen threshold $\bar{\gamma} > 0$. For example, setting $\bar{\gamma} = \sigma$ and choosing $\tau = 1.92$ ($\tau = 3.32$) implies that the prior probability of a break lying within $(-\sigma,\sigma)$ is below $0.05$ ($0.01$). Equivalently, this limits the prior probability of a break with signal-to-noise ratio $|\gamma_j|/\sigma \leq 1$ to $0.05$ and $0.01$, respectively. Figure~\ref{fig:tau_calibration} illustrates the implied prior densities for these values of $\tau$.\footnote{
    As for the case of $\omega_{i,s}$, hierarchical prior specifications $\tau$ are possible but mixing and convergence is poor when the data are weakly informative. We thus abstain from such a specification.
}

% The model specification also allows for a hierarchical prior structure, which introduces additional hyperpriors, thus increasing computational complexity. For example, in the absence of strong prior information on the proportion of structural breaks in the data, the prior inclusion probability $\omega_{i,s}$ can be specified as constant $\omega_{i,s}=\omega$. Alternatively, $\omega_{i,s}$ can be treated as unknown and assigned a hyperprior, such as a Beta distribution \citep[see, for example,][]{scott2010bayes}, allowing the sparsity level to be learned from the data. As for the case of $\tau$, hierarchical specifications of the prior for $\omega_{i,s}$ may affect mixing and convergence when the data are weakly informative.

%%% MAKE OWN SUBECTION/PARAGRAPH for uncertainty quantification here? and talk about window stuff in a bit more detail?
\paragraph{Break Selection}
Framing the break detection problem as a model selection exercise allows us to use posterior inclusion probabilities (PIPs) to quantify the evidence for individual break indicators. In our framework, the PIP of a candidate break is the posterior probability that the corresponding break indicator is included in the model, computed as the Monte Carlo average of the posterior draws of the selection indicators $\delta^{\gamma}_{i,s}$. This provides a direct and continuous measure of uncertainty about the presence of a break at a given, individual point in time. 
This probabilistic break uncertainty measure is in itself a major benefit of BISAM, which can be flexibly utilized in ways that best fit specific research problems.
% Such a continuously interpretable break uncertainty measure is in itself a major benefit of BISAM compared to frequentist procedures.
% However, while having a continuously interpretable break uncertainty measure is a major benefit of BISAM compared to frequentist procedures, for practical purposes a heuristic for selecting breaks is useful.

Moreover, structural changes can also be gradual or weak yet persistent, implying that posterior evidence may not be concentrated on a single candidate date but rather spread over adjacent candidate break dates. 
In such cases, a probabilistic measure of break timing that aggregates PIPs across neighbouring time points may be more informative. Consider a time window $W = \{s, s+1, \dots, s + \bar{t} - 1\}$ of length $\bar{t}$ for unit $i$. A natural object of interest may then be the posterior probability that \emph{at least} one break has occurred within the window,
\begin{equation}
 \mathrm{PIP}^{W}_{i} \;\equiv\; P\!\left(A^{W}_{i} \!\mid\! y\right), \qquad
 A^{W}_{i} = \left\{ \exists\, s \in W : \delta^{\gamma}_{i,s} = 1 \right\}.
 \label{eq:window_pip}
\end{equation}
which is again conveniently obtained as the Monte Carlo average using the retained draws of $\delta^{\gamma}_{i,s \in W}$.\footnote{
    Note that this aggregation does not require independence approximation across candidate dates. Where the direction of the break matters for the interpretation of the results, as in our empirical application, the event in \eqref{eq:window_pip} is signed by additionally requiring $\gamma_{i,s \in W} < 0$ (or $\gamma_{i,s \in W} > 0$), using the joint draws of $(\delta^{\gamma}_{i,s \in W}, \gamma_{i,s \in W})$.
}
However, the formulation of \eqref{eq:window_pip} also implies that $\mathrm{PIP}^{W}_{i}$ mechanically increases as $\bar{t}$ grows.
This makes the raw measure of $\mathrm{PIP}^{W}_{i}$ relatively impractical for the selection of time windows in which a break may have occurred.

We propose a simple selection rule for break windows based on $\mathrm{PIP}^{W}_{i}$ that augments a standard classification loss argument with a cost of imprecision that is increasing in their width.
Specifically, let $c_1$ denote the cost of reporting a window that contains no true break, $c_0$ the cost of failing to report a true break, and $\kappa$ the cost of each period of window width in excess of a point detection. The Bayes rule under this loss selects window $W$ whenever
\begin{equation}
 \mathrm{PIP}^{W}_{i} \;>\; P^{\bar{t}} \;=\; \frac{c_1 + \kappa (\bar{t} - 1)}{c_0 + c_1},
 \label{eq:window_rule}
\end{equation}
as shown in Appendix~\ref{app:window_rule}. This rule has three key properties. First, for $\bar{t}=1$ and symmetric classification losses ($c_0 = c_1$), the threshold reduces to $P^{1} = 0.5$. Second, the threshold is monotonically increasing in $\bar{t}$. Third, it implies a maximal reportable window length, since $P^{\bar{t}} \geq 1$ whenever $\bar{t} \geq 1 + c_0/\kappa$.

The first property implies that it nests the median probability model proposed by \citet{barbieri2004optimal} and extended to correlated regressors in \citet{BarbieriBerger2021}. When specifying a uniform prior over all possible model specifications, which is equivalent to setting $\omega_{i,s}=0.5 \forall i,s$, the proposed rule $PIP>0.5$ for an individual break date has a simple posterior-odds interpretation. This means that a break is selected whenever the data shift the posterior support in favour of inclusion rather than exclusion \citep[see][]{salaimartin2004}, constituting the Bayes decision rule for an individual break indicator. The other two properties extend the median probability model to the case of window selection.

Moreover, we show in Appendix~\ref{app:theory} that the rule proposed in equation \eqref{eq:window_rule} is consistent. This implies that, as $T$ grows, $\mathrm{PIP}^{W}_{i}$ tends to one for suitably chosen windows around each true break and to zero for windows further away than $\bar{t}_{\max}$ from every true break. Thus, the rule asymptotically recovers true break windows with probability one (Theorem~\ref{thm:window_consistency}). The result is stated along an asymptotic regime in which the cost ratio $\kappa/c_0$ vanishes slowly, so that the cap $\bar{t}_{\max}\;=\; \left\lceil 1 + c_0/\kappa \right\rceil - 1$ in \eqref{eq:tmax} grows with $T$ while remaining negligible relative to the spacing between true breaks.

As a default, we set $c_0 = c_1 = 1$ and $\kappa = 1/3$, which yields thresholds of $P^{1} = 0.5$, $P^{2} \approx 0.67$ and $P^{3} \approx 0.83$, and a window cap of $\bar{t}_{\max} = \lceil 1 + c_0/\kappa \rceil - 1 = 3$, i.e.\ windows of four periods or more are unattainable. By varying the parameters $c_0$, $c_1$ and $\kappa$, these quantities can be adjusted if one wants to assign different costs to different potential sources of misclassification.

\subsection{Outlier Robustness}\label{subsec:outlier}

The reliable detection of step-shifts may be compromised by outliers, as extreme observations can generate spurious break signals or distort estimated break magnitudes. To enhance robustness, the baseline model is extended by specifying a mixture distribution for the disturbance term $\varepsilon_{i,t}$ and introducing a data augmentation step that probabilistically rescales observations during estimation.\footnote{
Frequentist approaches achieve robustness by introducing impulse indicators and selecting among them via impulse indicator saturation \citep{castle2015, pretis_discovering_2026}, typically in a separate pre-estimation step. In contrast, our approach integrates outlier detection directly within the Bayesian estimation procedure.}

We assume the disturbance term follows the mixture distribution
\begin{align}
 p(\varepsilon_{i,t}) &= (1 - \delta^\varepsilon_{i,t}) \, \pi_N(\varepsilon_{i,t}; 0, \sigma^2_i) 
 + \delta^\varepsilon_{i,t} \, \pi_{iMOM}(\varepsilon_{i,t}; 0, 1, 3, \sigma^2 \tau^{\varepsilon}), \\
 p(\delta^\varepsilon_{i,t}) &= \eta^{\delta^\varepsilon_{i,t}} (1 - \eta)^{1 - \delta^\varepsilon_{i,t}},
\end{align}
where $\pi_N(\cdot; 0, \sigma^2_i)$ denotes the Gaussian density with mean zero and variance $\sigma^2_i$, $\delta^\varepsilon_{i,t} \in \{0,1\}$ indicates whether observation $(i,t)$ is classified as an outlier, and $\eta \in (0,1)$ is the prior inclusion probability. Note that we need $\nu=3$ here for the variance to exist. Under this specification, non-outlying observations follow the Gaussian distribution, while outliers are governed by the heavy-tailed iMOM distribution. Posterior inference on $\delta^\varepsilon_{i,t}$ therefore provides a probabilistic classification of outliers.

To mitigate their influence, observations classified as outliers are downweighted in subsequent estimation steps by the inverse standard deviation of the iMOM distribution, $1/\sqrt{2\tau^{\varepsilon}} \ll 1$. Conditional on $\delta^\varepsilon_{i,t}$ and $\tau^{\varepsilon}$, the rescaled residual at each Gibbs iteration is
\begin{align}
 \tilde{\varepsilon}_{i,t} 
 = (1 - \delta^\varepsilon_{i,t}) \hat{\varepsilon}_{i,t} 
 + \delta^\varepsilon_{i,t} \frac{\hat{\varepsilon}_{i,t}}{\sqrt{2\tau^{\varepsilon}}},
\end{align}
where $\hat{\varepsilon}_{i,t}$ denotes the model residual implied by equation~\eqref{eq:mod_simple}. This rescaling attenuates the influence of extreme observations while preserving the information content of the remaining data.

The resulting specification induces a flexible error structure with observation-specific variance inflation driven by posterior outlier probabilities that remains computationally tractable within the indicator-saturated regression framework. Conceptually, this approach is related to Bayesian models with observation-specific volatility, such as stochastic volatility (SV) models \citep[e.g.,][]{jacquier2002, kastner2014}. Modelling the error term with a SV specification could be useful for data where changes in volatility are likely to be more persistent, although the estimation of such specifications requires a rather long time dimension for reliable estimation.
% , while remaining computationally tractable within the indicator-saturated regression framework.

\subsection{Remaining Prior Setup}

%While in general the priors on $\beta$ and $\sigma$ are free to chose, computational convenience suggests the standard setup for linear regression.
The prior distribution on $\beta$, the coefficients associated with the control variables, can be specified using standard formulations such as the independent Normal prior, Zellner’s $g$-prior \citep{zellner1986}, or, as is our default choice, the fractional ($f$-)prior \citep{ohagan1995}. In settings where strong variable selection of covariates is required, continuous shrinkage priors such as the horseshoe prior \citep{carvalho2010} can provide a flexible alternative to these settings, allowing for more aggressive shrinkage of irrelevant coefficients while preserving signals of substantive magnitude.

We specify the prior distribution for $\varepsilon_{i,t}$ to be Gaussian with zero mean and unit-specific variance $\sigma^2_i$. Note that this relates to the distribution of the error term after reweighting in the framework introduced in the previous section or, in its absence, to its form as directly implied by equation~\eqref{eq:mod_simple}.
On $\sigma^2_i$, we employ an inverse gamma prior, 
\begin{equation}
 \sigma^2_i \sim \mathcal{G}^{-1}(m_0, n_0),
\end{equation}
where the hyperparameters $m_0$ and $n_0$ govern the prior scale and dispersion. 
The flexibility of this setup allows for virtually uninformative prior settings (e.g., $m_0 = n_0 = 0.01$), as well as for more informative ones, e.g., by calibrating $m_0$ and $n_0$ in an empirical Bayes fashion based on residuals from a break-free benchmark \citep[see, e.g.][for details]{chipman2010}.  %We opt for the latter as our default choice.
% \footnote{While setting $m_0$ and $n_0$ to values close to zero has been common in the literature, we opt for setting them in a data-driven manner as a default. In particular, we compute OLS residual variance estimates from a benchmark specification that excludes structural breaks. We then derive values for $m_0$ and $n_0$ that imply a prior distribution covering the 90\textsuperscript{th} percentile of these residual variance estimates. This is in line with other approaches using empirical Bayes procedures to inform hyperparameter choice \citep[see, e.g.][]{chipman2010}.} 
% We relax the assumption of homoskedastic errors in Section~\ref{subsec:outlier} to flexibly control for possible outliers.

This completes the prior setup of BISAM. We summarize its various parts together with the default choices and parametrizations for hyperparameters in Section~\ref{app:prior_settings} of the Online Appendix. Posterior inference relies on a Markov Chain Monte Carlo (MCMC) algorithm with several blocks that we outline in Section~\ref{app:posterior_sampler} of the Online Appendix.

\section{Detecting Step-Shifts: A Simulation Study}\label{sec:simulation}

We evaluate the performance of BISAM using a simulation study. We compare its break detection accuracy to two established approaches: the general-to-specific (GETS) testing procedure \citep{castle2015, pretis_discovering_2026} and the adaptive Least Absolute Shrinkage and Selection Operator (ALASSO) regularisation method \citep{tibshirani1996}.

The simulation design reflects a basic setting commonly encountered in applied work on structural break detection. Specifically, we consider a panel with ten cross-sectional units ($N = 10$) and thirty time periods ($T = 30$). Data are generated under the assumption of homoskedastic disturbances with variance $\sigma^2 = 1$. We simulate structural breaks of varying magnitude, with break sizes given by $\gamma \in \{0.5, 1, 1.5, 2, 3, 5, 10\}$. 
We consider two simulation environments that differ in the frequency of structural breaks and correspond to a sparse and a dense break design. In the sparse setting, four out of ten units exhibit exactly one break, occurring at a randomly selected time period. In the dense setting, eight out of ten units experience structural breaks, of which four units exhibit two breaks each.\footnote{Representative simulated datasets for the sparse and dense designs, together with their corresponding estimated PIPs, are presented in Section~\ref{app:add_res_sim} of the Online Appendix.}
For each combination, we generate 100 independent datasets and use each of the competing models to detect step-shifts. This design allows us to assess the relative performance of BISAM and alternative methods across a range of signal strengths and break prevalence.

For the BISAM specification, we set the slab scale parameter to $\tau \approx 3.32$, which implies $P(\lvert\gamma\rvert \leq \sigma \!\mid\! \tau) = 0.01$. This calibration assigns 99\% of the prior mass of the slab component to values outside one standard deviation of the disturbance term, thereby favouring a priori the detection of ``sufficiently large'' breaks. For the GETS procedure, we adopt a nominal significance level of $\alpha = 0.01$,\footnote{Under suitable assumptions, this nominal level can be related to a target false positive rate under the null hypothesis of no structural breaks \citep{pretis_discovering_2026}.} reflecting standard practice in the structural break detection literature for the evaluation of climate policies. For the ALASSO estimator, we follow \citet{zou2006} and use a Ridge regression in the first stage to construct adaptive weights, which are then employed in the penalised estimation step.

We evaluate performance using the following classification metrics:
\begin{itemize}
 \item The true positive rate (\textit{TPR}) and false positive rate (\textit{FPR}); the proportion of correctly and incorrectly detected breaks compared to true breaks, respectively.
 \item \textit{Precision}: the proportion of correctly detected breaks among all detected breaks.
 \item The \textit{F1 score}; the harmonic mean of precision and TPR, which provides a summary measure of detection accuracy.
 \item \textit{Near misses}; the proportion of false positives that are adjacent ($\pm 1$ period) to a true break relative to the total number of false positives.
\end{itemize}

Figure~\ref{fig:sim_results_sparse} and Figure~\ref{fig:sim_results_dense} present the results of the simulation study across break magnitudes for the sparse and dense break environments, respectively. In the sparse design, BISAM and GETS exhibit broadly comparable performance, and both substantially outperform ALASSO across all evaluation metrics. BISAM tends to be more conservative, exhibiting a slightly lower TPR but higher precision than GETS. As a result, both methods achieve similar F1 scores across most break magnitudes.

The results for the near miss metric partly exhibit an inverse-U-shaped pattern, with the proportion of near misses increasing for break magnitudes up to approximately three times the standard deviation of the disturbance term, and declining thereafter. 
At the lower end of break sizes, a large share of the relatively higher number of false positives is detected far off from true breaks due to the low signal that those breaks provide.
For moderately sized breaks, both BISAM and GETS more frequently detect structural breaks in periods adjacent to the true break location. Even in cases where the TPR of BISAM and GETS remains relatively low to moderate (approximately 0.3 and 0.8, respectively), both methods detect signals in the vicinity of the true breakpoints.
At large break sizes, both methods detect true breaks more precisely and reduce the number of false positives in their vicinity, leading to the decline in the metric.

\begin{figure}[ht!]
 \centering
 \includegraphics[width=\linewidth]{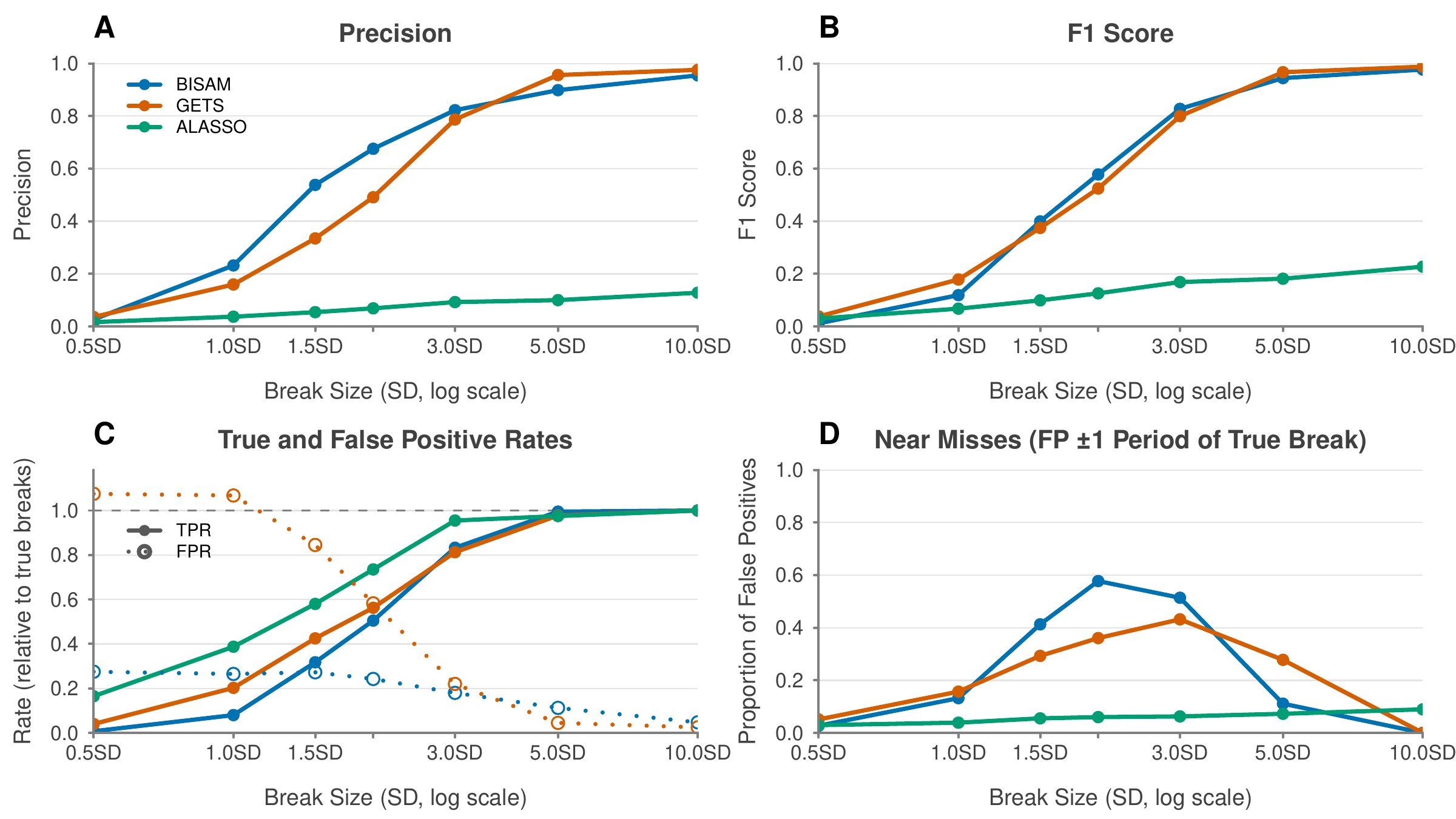}
 \caption{Performance metrics (\textcolor{ssvs}{BISAM}, \textcolor{gets}{GETS}, \textcolor{alasso}{ALASSO}) for sparse simulation setting with varying relative break sizes (measured in standard deviations of the error term). FPR of ALASSO in panel (C) not shown due to being exceedingly high.}
 \label{fig:sim_results_sparse}
\end{figure}

\begin{figure}[ht!]
 \centering
 \includegraphics[width=\linewidth]{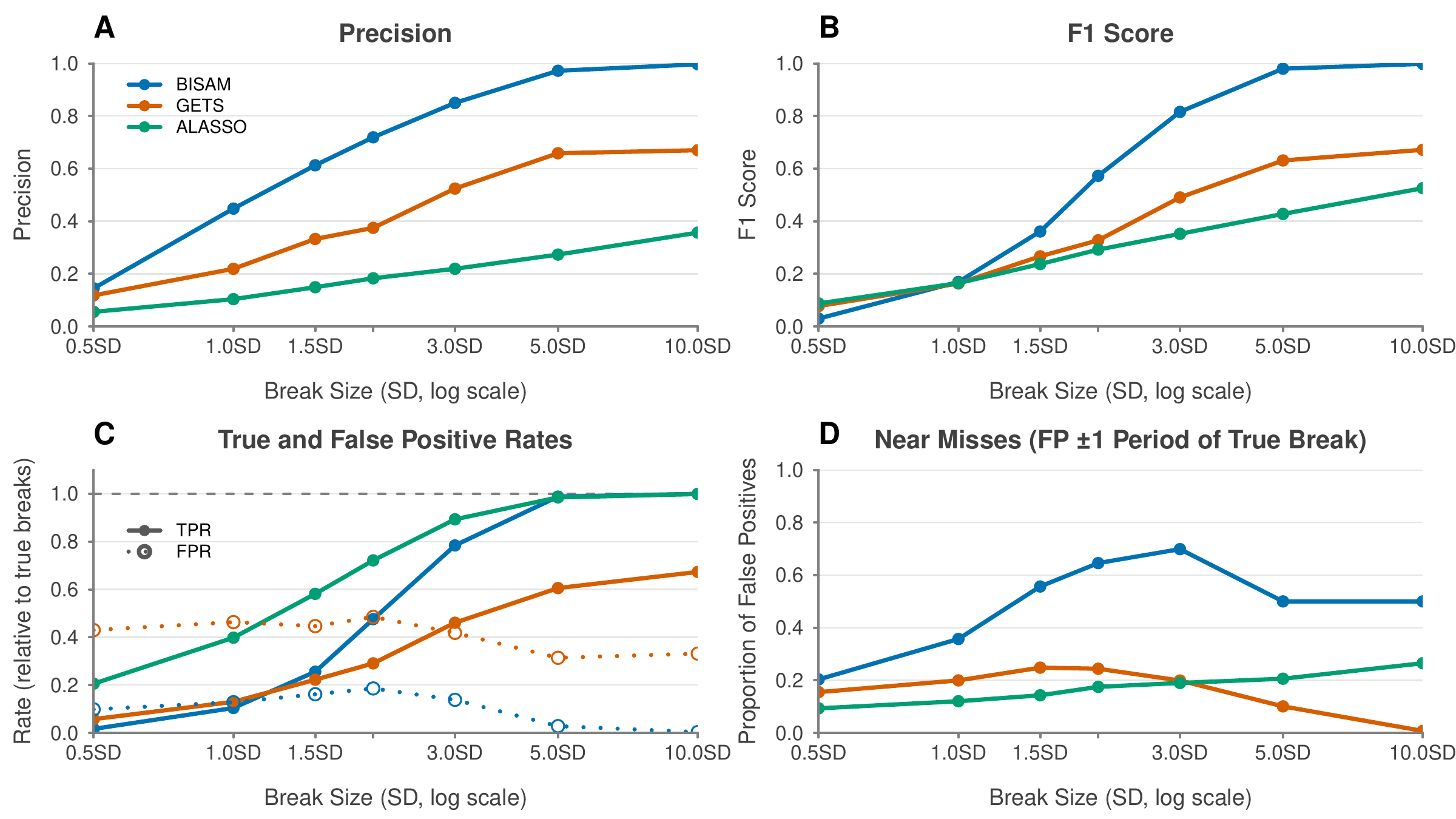}
 \caption{Performance metrics (\textcolor{ssvs}{BISAM}, \textcolor{gets}{GETS}, \textcolor{alasso}{ALASSO}) for dense simulation setting with varying relative break sizes (measured in standard deviations of the error term). FPR of ALASSO in panel (C) not shown due to being exceedingly high.}
 \label{fig:sim_results_dense}
\end{figure}

Differences across methods become more pronounced in environments with a higher frequency of structural breaks. In the dense break setting, GETS exhibits substantially weaker break detection performance. Even for large break magnitudes exceeding three standard deviations of the disturbance term, both precision and F1 scores decline markedly compared to the sparse setting. This deterioration is primarily driven by a reduction in true positive rates but also a less marked reduction in false positive rates. 
In contrast, BISAM maintains stable performance across break magnitudes and consistently outperforms both GETS and ALASSO in detection accuracy. It is also notable that the near miss metric drops markedly for GETS, whereas for BISAM it suggests that more than 50\% of the number of false positives fall in the vicinity of true breaks for most break sizes.
Figure~\ref{fig:sim_results_bn} in the Online Appendix further illustrates these patterns across varying numbers of breaks. While the performance of GETS deteriorates as the number of breaks increases, BISAM preserves its high accuracy for the detection of breaks throughout.\footnote{
    The performance of ALASSO improves as the number of breaks increases. However, this improvement largely reflects the mechanical effect of a shrinking set of non-break candidates, which reduces the scope for false positive classifications, rather than a systematic improvement in its ability to correctly detect true structural breaks.
}

We further extend the simulation study in various dimensions and report performance results of BISAM relative to GETS in Figures~\ref{fig:sim_variations_f1_ratio}~and~\ref{fig:sim_variations_f1_diff} of the Online Appendix.
These variations include different specifications of the panel dimensions, the inclusion of fixed effects and external regressors, different model parameterizations, or the inclusion of the outlier detection framework introduced in Section~\ref{subsec:outlier}. The results mirror those presented in this section, with BISAM being on par with GETS in sparse settings and outperforming strongly in dense settings.

%To summarize, the simulation study shows that the performance of BISAM is on par with that of competing alternative specifications in sparse break environments. In more break-dense environments, BISAM tends to outperform the other methods. The stable performance of BISAM across settings with a variable number of breaks makes the model particularly suited for applications where the number (or proportion) of breaks is unknown a priori.

These results demonstrate that BISAM provides a reliable and robust framework for structural break detection in high-dimensional panel settings. While BISAM and GETS perform similarly in sparse environments, BISAM exhibits substantially greater stability and accuracy as the number of breaks increases. 
% In particular, BISAM maintains consistently high precision and F1 scores across break magnitudes and break frequencies, whereas the performance of GETS deteriorates markedly in settings with higher break density due to declining true positive rates. 
Compared to ALASSO, BISAM delivers superior detection accuracy across all designs, especially for moderate break magnitudes.

\section{Reassessing the Effectiveness of Climate Policies in the European Transport Sector}\label{sec:applications}

Having established the favourable finite-sample properties of BISAM in our controlled simulation setting, we turn to an empirical application to evaluate its performance as a tool for climate policy evaluation. For that purpose, we apply BISAM to study the effectiveness of climate policies in the European road transport sector. We assess the empirical application in \citet{koch2022} and contrast our findings with those obtained using the GETS procedure.\footnote{
    The main identifying assumption imposed by the method is that negative breaks can be attributed to climate policy interventions in the absence of other major confounders. As this rather strong assumption is hard to assess empirically, we abstain from strong causal claims and view this exercise rather as a replication and expansion of the assessment in \citet{koch2022}.
}

\citet{koch2022} seek to detect structural breaks in greenhouse gas emissions from the transport sector across 15 European countries in the period 1995-–2018 and, in a subsequent step, attribute negative breaks to the implementation of transport-related climate policies. 
To this end, the authors employ a step-shift indicator-saturated (SIS) specification analogous to equation~\eqref{eq:mod_simple} with $y_{i,t}$ denoting log CO$_2$ emissions from the transport sector. Besides unit and time fixed effects, log GDP, its square and log population are included as control variables.\footnote{
    The emissions data were obtained from the Emissions Database for Global Atmospheric Research (EDGAR) v5.0 \citep{crippa2019fossil}. Data on GDP and population were sourced from the World Bank Open Data. The dataset includes information for Austria, Belgium, Germany, Denmark, Spain, Finland, France, United Kingdom, Ireland, Italy, Luxembourg, Netherlands, Greece, Portugal, and Sweden.
}

% In addition to the treated units, the empirical framework includes a broader set of countries used as a control group, comprising the remaining EU member states and the European Free Trade Association (EFTA) countries. Specifically, the control group consists of Croatia, Bulgaria, Cyprus, Czech Republic, Estonia, Hungary, Lithuania, Latvia, Malta, Poland, Romania, Slovakia, Slovenia, Switzerland, Iceland, and Norway.

In the absence of strong prior information about the magnitude of potential structural breaks in emissions, we set the slab scale parameter to $\tau \approx 1.92$. This calibration implies that 95\% of the prior mass of the slab component is assigned to values outside one standard deviation of the (country-specific) disturbance term. For the GETS procedure, we adopt a nominal significance level of $\alpha = 0.05$, following \citet{koch2022}. We allow for robust estimation using the procedure outlined in Section~\ref{subsec:outlier} for BISAM and impulse saturation for GETS \citep[see][for details]{pretis_discovering_2026}. We set the prior variance scale for included covariates to $g=10$ and fix the prior inclusion probability parameter $\omega_{i,s}=0.5$.\footnote{
    We provide results from a sensitivity check with a hierarchical treatment of $\omega_{i,s}$ in Section~\ref{app:add_res_repl} of the Online Appendix. Results are largely comparable, though somewhat more conservative in terms of detected breaks.
}

Figure~\ref{fig:eu_transport_emissions} summarises the empirical results. In the upper panels, dashed vertical lines indicate breaks detected by BISAM using a posterior inclusion probability threshold of $P(\gamma_{i,s} \neq 0 \!\mid\! y) > 0.5$ at individual dates, while crosses denote breaks detected by GETS. Shaded regions represent intervals of varying length ($\bar{t} \in {\{1,2,3\}}$) over which BISAM detects structural changes, with orange and green shading corresponding to negative and positive breaks, respectively. Windows with a longer horizon exhibit more transparent shadings.
The lower panels display the observed transport sector emissions over the period 1995–2018, together with the fitted values implied by the BISAM specification. Here, outliers are marked similarly to detected breaks for both approaches.

\begin{figure}[htbp!]
 \centering
 \includegraphics[width=\linewidth, page = 1]{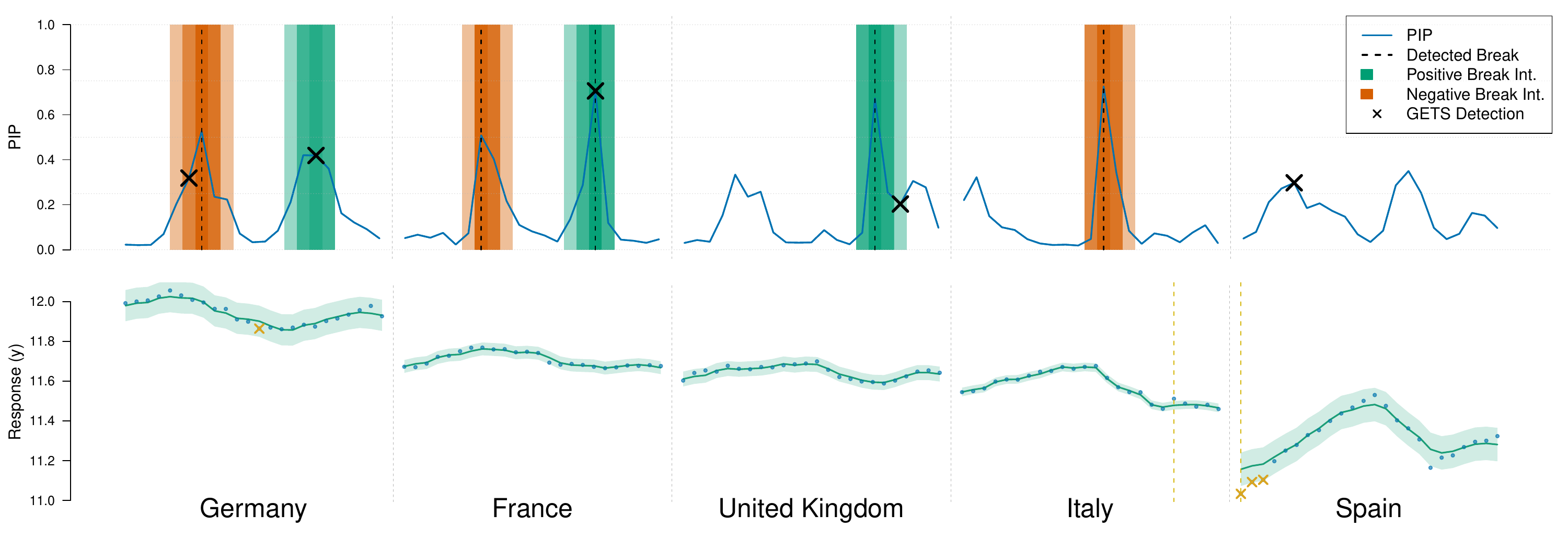}
 \includegraphics[width=\linewidth, page = 2]{img/revision/figure_5_outlier-TRUE-10_beta-f-10_sisprior-imom-1.92072941034706_modprior-bern-0.5.pdf}
 \includegraphics[width=\linewidth, page = 3]{img/revision/figure_5_outlier-TRUE-10_beta-f-10_sisprior-imom-1.92072941034706_modprior-bern-0.5.pdf}
 \caption{Break detection in EU transport emissions. Breaks for individual dates detected are marked by vertical dashed lines (for BISAM) and black crosses (for GETS) in the upper panels, detected outliers are marked similarly in the lower panels. Green (orange) shaded areas indicate periods in which BISAM detects positive (negative) breaks for window lengths $\bar{t} \in {\{1,2,3\}}$. Transparency of the shading indicates the length of windows under which a break is detected, with longer windows being more transparent.}
 \label{fig:eu_transport_emissions}
\end{figure}

Our empirical findings broadly corroborate the main conclusions in \citet{koch2022}, while also revealing systematic differences in break detection and policy attribution. \citet{koch2022} detect ten statistically significant emission reductions across EU countries over the period 1995–2018, with estimated reductions ranging between 8\% and 26\%, and attribute these breaks primarily to policy mixes involving carbon or fuel taxation combined with vehicle-related incentives. Consistent with these findings, BISAM detects a substantial overlap in the timing and location of major structural breaks, especially those associated with large and persistent reductions in transport emissions. This result provides external validation of the structural break detection framework and confirms that the largest emission reductions coincide with major policy interventions affecting the marginal cost of transport-related emissions. 

At the same time, BISAM uncovers additional negative structural breaks that are not detected by the GETS procedure. These breaks point to sustained emission declines in several countries, including France (2002--05), Italy (2007--10), Austria (1997), the Netherlands (2012--16), Sweden (2014--16), and Denmark (1999--2000, 2010--14), where the years in parentheses denote the estimated break intervals. We follow \citet{koch2022} and attribute these breaks to climate policies targeting the road transport sector. To select potentially relevant measures, we use the Policies Database of the International Energy Agency (IEA) and the IFCMA Climate Policy Database of the OECD.\footnote{These databases are available at \url{https://www.iea.org/policies} and \url{https://www.oecd.org/en/data/datasets/ifcma-climate-policy-database.html}. We consider policies at the national and subnational level, conditional on them being recorded in these databases. While we do not explicitly differentiate between national and subnational policies, we note that the contribution of the latter to the overall effects is expected to be of smaller magnitude.}

France (2002--05) implemented the \emph{Plan Climat}, signed in 2004, which introduced measures to improve urban transport efficiency, promote lower-carbon transport modes, and establish biofuel targets that were subsequently enacted into law. Decree 2002-1508 also implemented mandatory labeling schemes for fuel consumption and carbon emissions for passenger cars at the point of sale. Italy (2007--10) increased biofuel quotas in 2008 and introduced purchase incentives for cleaner vehicles between 2007 and 2009, alongside the \emph{Milan Ecopass} congestion charge launched in January 2008.

In Austria (1997), road pricing for motorways and expressways was introduced. Additionally, in the year 1995, which is not included as a potential break but detected as outlier, a tax on mineral oil (MOEST) was updated and pump prices for gasoline and diesel increased substantially.
The Netherlands (2012--16) introduced incentives for zero-emission vehicles and plug-in hybrids, together with a 2015 administrative agreement on zero-emission transport. Over the same period, the legally mandated share of renewable energy in transport fuels increased from about 4.5\% in 2012 to 6.25\% in 2015. 

Sweden (2014--16) did not implement an individual major reform in this time period, but continued increases in biofuel blending and the congestion tax introduced in Gothenburg in 2013 likely contributed with some delay.
In the period 2010--14, Denmark increased the mandatory biofuel share in transport fuels from near zero to approximately 5.75\%, supported by the \emph{Sustainable Transport Plan} adopted in 2008, which included measures to improve transport efficiency, expand public transport, and promote cleaner transport technologies. 
BISAM also finds some support for a negative break in emissions in the period 1999--2000 for Denmark. While no major policy was implemented during this period, in these years the country experienced a gradual tightening and expansion of the carbon tax introduced in the early-1990s, together with promotion of cleaner vehicles and fuels.

\section{Conclusions}\label{sec:conclusion}

In this study, we introduce Bayesian Indicator Saturated Modelling (BISAM), a probabilistic framework for detecting structural breaks in panel data. Our approach builds upon a linear regression specification saturated with step-shift indicators that exhaustively represent the full set of potential breakpoints. To select among these candidate breaks, we employ a spike-and-slab mixture prior combining a Dirac spike and a non-local (inverse-moment density) slab component. This probabilistic formulation enables coherent uncertainty quantification for individual breaks as well as combinations thereof, facilitating the detection of weaker signals that may persist over extended periods. The model also features an integrated treatment of outlier observations.

We demonstrate the accuracy and robustness of BISAM in simulation studies and benchmark its performance against established frequentist alternatives (GETS and adaptive LASSO). Standard performance metrics indicate that BISAM achieves superior detection accuracy while effectively controlling false positives, particularly in environments characterised by multiple structural breaks. Its detection performance remains stable as the number of breaks increases, making it especially well suited to applications in which the number and timing of structural changes are unknown and potentially large.

We illustrate the empirical relevance of BISAM for policy evaluation by revisiting the analysis of emission reductions in the European road transport sector by \citet{koch2022}. While BISAM largely confirms the timing of major emission reductions detected in the existing literature, it also uncovers additional negative structural breaks in countries such as France, Italy, the Netherlands, Sweden, Denmark, and Greece, indicating more persistent and gradual emission declines than previously detected. These additional breaks align closely with the timing of transport-related climate policies, including biofuel mandates, vehicle purchase incentives, and congestion pricing schemes. Taken together, these findings highlight the ability of BISAM to detect sustained structural changes associated with policy interventions and underscore its usefulness as a general and reliable tool for structural break detection in longitudinal settings.

\section*{Acknowledgements}

The authors would like to thank Felix Pretis and Moritz Schwarz for helpful methodological discussions, Jakob Goldmann for assistance in improving the \texttt{bisam} package and David Rossell for support on tailoring the \texttt{mombf} package to our needs. Support by the Vienna Science and Technology Fund under grant ESS22-040 and the WU Wien Stiftung is gratefully acknowledged. 

\section*{Code \& Data Availability}

All code and data to estimate Bayesian Indicator Saturated Models and replicate the findings presented in this paper are available at
\url{https://github.com/konradld/EctJ_climate_policy_breaks}.

\bibliography{refs}

%----------------------
\newpage

\renewcommand{\theequation}{A.\arabic{equation}}
\renewcommand{\thesection}{A}
\setcounter{equation}{0}
\renewcommand{\thefigure}{A.\arabic{figure}}
\setcounter{figure}{0}
\renewcommand{\thetable}{A.\arabic{table}}
\setcounter{table}{0}
\makeatletter
\edef\@currentlabel{\thesection}
\makeatother

\section*{Appendix A: Derivation of window selection rule}\label{app:window_rule}

% \subsection*{}
 
Consider a time window $W$ of length $\bar{t}$ for unit $i$ and let $A^{W}_{i}$ denote the event that at least one break occurs within it, as defined in equation~\eqref{eq:window_pip}. Write $\bar{\pi} = \mathrm{PIP}^{W}_{i} = P(A^{W}_{i} \!\mid\! y)$. The decision is binary. Either the window is reported as containing a break, $a = 1$, or it is not, $a = 0$. We specify the loss
\begin{equation}
 L(a, A^{W}_{i}) =
 \begin{cases}
 c_1 \,\mathbbm{1}_{\{A^{W}_{i} \text{ false}\}} + \kappa (\bar{t}-1) & \text{if } a = 1, \\[4pt]
 c_0 \,\mathbbm{1}_{\{A^{W}_{i} \text{ true}\}} & \text{if } a = 0,
 \end{cases}
 \label{eq:window_loss}
\end{equation}
with $c_0, c_1 > 0$ and $\kappa \geq 0$. The term $c_1$ prices a spurious window, $c_0$ prices a missed break, and $\kappa(\bar{t}-1)$ prices the imprecision of the reported timing, which is incurred whenever a window is reported and is linear in its width in excess of a point detection. Setting $\kappa = 0$ recovers the standard classification loss.

Taking expectations with respect to the posterior distribution of $A^{W}_{i}$ yields the posterior expected losses
\begin{align}
 \mathbb{E}\!\left[L(1, A^{W}_{i}) \!\mid\! y\right] &= c_1 (1 - \bar{\pi}) + \kappa(\bar{t}-1), \\
 \mathbb{E}\!\left[L(0, A^{W}_{i}) \!\mid\! y\right] &= c_0 \, \bar{\pi}.
\end{align}
The Bayes action is $a = 1$ whenever the former is smaller than the latter, that is, whenever
$c_1 (1 - \bar{\pi}) + \kappa(\bar{t}-1) < c_0 \bar{\pi}$. Solving for $\bar{\pi}$ gives the threshold rule
\begin{equation}
 \mathrm{PIP}^{W}_{i} \;>\; P^{\bar{t}} \;=\; \frac{c_1 + \kappa (\bar{t} - 1)}{c_0 + c_1}.
 \label{eq:window_rule_app}
\end{equation}
Three remarks are in order. First, at $\bar{t} = 1$ the window collapses to a single candidate date, $A^{W}_{i} = \{\delta^{\gamma}_{i,s} = 1\}$ and $\mathrm{PIP}^{W}_{i}$ is the marginal PIP of that indicator. The imprecision term vanishes and \eqref{eq:window_rule_app} reduces to $P^{1} = c_1 / (c_0 + c_1)$, which equals $0.5$ under symmetric classification losses. The window rule therefore nests the median probability model of \citet{barbieri2004optimal} as a special case. The optimality of the median probability model is established for the selection of individual regressors under squared-error prediction loss, and extended to correlated designs by \citet{BarbieriBerger2021}. It does not extend to disjunctive events of the form $A^{W}_{i}$, which is why the window threshold is derived from an explicit classification loss rather than obtained as a corollary. Second, $P^{\bar{t}}$ is strictly increasing in $\bar{t}$ for $\kappa > 0$, offsetting the mechanical increase of $\mathrm{PIP}^{W}_{i}$ in the window length. Third, $P^{\bar{t}} \geq 1$ whenever $\bar{t} \geq 1 + c_0 / \kappa$, since $\mathrm{PIP}^{W}_{i} \leq 1$ by construction, windows of that width or wider are never reported. This way the loss in \eqref{eq:window_loss} induces a maximal reportable window length set as
\begin{equation}
 \bar{t}_{\max} \;=\; \left\lceil 1 + c_0/\kappa \right\rceil - 1 .
 \label{eq:tmax}
\end{equation}

The cap on window width is thus implied by the loss rather than imposed separately, and is governed by the single interpretable ratio $\kappa / c_0$.

It is worth noting that the rule in \eqref{eq:window_rule_app} is invariant to the specification of the prior on the model space, as $\mathrm{PIP}^{W}_{i}$ is computed directly from the posterior draws of $\delta^{\gamma}_{i,\cdot}$. %A rule based instead on the comparison of $\mathrm{PIP}^{W}_{i}$ with its prior counterpart $P(A^{W}_{i})$ would amount to selecting windows whenever the Bayes factor in favour of $A^{W}_{i}$ exceeds one, and would require $P(A^{W}_{i}) = 1 - (1-\omega)^{\bar{t}}$. This holds only under independent Bernoulli inclusion priors with fixed $\omega$ and not, for instance, under the Beta--Bernoulli specification discussed in Section~\ref{subsec:bisam1}.

%------------------
\newpage

\renewcommand{\theequation}{B.\arabic{equation}}
\renewcommand{\thesection}{B}
\setcounter{equation}{0}
\renewcommand{\thetheorem}{\arabic{theorem}}
\makeatletter
\edef\@currentlabel{\thesection}
\makeatother

\section*{Appendix B: Consistency of the window rule}\label{app:theory}

This section provides an overview of the assumptions and the main theorem under which window-level model selection consistency is achieved in the ISR framework.
Proofs and supporting lemmas are in Section~\ref{app:gauge} of the Online Appendix, where Assumption~\ref{as:regularity} below appears as Assumptions~\ref{as:nlp}, \ref{as:design}, \ref{as:prior2} and~\ref{as:space}. Conditioning on $(\beta, \sigma^2)$ and on the outlier indicators reduces model selection to a single unit, $z := y_i - X_i\beta = \mu^0 + \varepsilon$ with $\mu^0 = \sum_{j \leq k^0} \gamma^0_j u_{s^0_j}$ and $\varepsilon \sim \mathcal{N}(0, \sigma^2 I_T)$. Write $\gamma_T = \min_j |\gamma^0_j|$, $\bar{\gamma} = \max_j |\gamma^0_j|$, $K = N(T-3)$, and $\Delta_T$ for the minimum spacing between consecutive true dates inclusive of the sample boundaries.

\begin{assumption}[Regularity]\label{as:regularity}
\leavevmode\par
\begin{enumerate}[label=(\alph*)]
    \item\emph{Signal and design.} $0 < \underline{\gamma} \leq \gamma_T \leq \bar{\gamma} < \infty$ and $\Delta_T \big/ (\log NT)^{(k+1)/k} \to \infty$, and the reporting cap of \eqref{eq:tmax} satisfies $\bar{t}_{\max} \to \infty$ with $\bar{t}_{\max} = o(\Delta_T)$.
    \item\emph{Model prior.} The prior over configurations $m$ depends on $m$ only through $|m|$, and its inclusion odds satisfy $\varrho'_T \leq \varrho \leq \bar{\varrho} < \infty$ uniformly, with $\log(1/\varrho'_T) = O(\log K)$.
    \item\emph{Model space.} Candidate dates lying within a common window of width $\ell_T$ are collapsed to a single representative, with $\ell_T \to \infty$, $\ell_T = o(\Delta_T)$ and $\ell_T^{k/(k+1)} / \log K \to \infty$.
    \item\emph{Non-local factorisation.} In the decomposition $M(z \mid m) = M^{L}(z \mid m) G_m(z)$ of \citet[Proposition~1]{rossell2017nonlocal}, the ratio $G_{m'}/G_m$ is bounded above and below by positive constants for one-coordinate additions and for comparisons in which no coordinate changes relevance status.
\end{enumerate}
\end{assumption}

\begin{theorem}[Consistency of the window rule]\label{thm:window_consistency}
Let Assumption~\ref{as:regularity} hold and let $\widehat{\mathcal{W}}$ be the set of windows reported by equation~\eqref{eq:window_rule}. 

Then, as $T \to \infty$,

\begin{enumerate}[label = (\roman*)]
\item for each true date $s^0_j$ there is a window $W_j \ni s^0_j$ of width at most $\bar{t}_{\max}$ with $\mathrm{PIP}^{W_j}_{i} \overset{p}{\to} 1$ and threshold $P^{|W_j|}$ bounded away from one, so that $W_j$ is reported with probability tending to one, simultaneously over $j = 1, \dots, k^0$, and 

\item every window of width at most $\bar{t}_{\max}$ lying further than $\bar{t}_{\max}$ from every true date has $\mathrm{PIP}^{W}_{i} \overset{p}{\to} 0$ and is not reported. Consequently $\widehat{\mathcal{W}}$ has exactly $k^0$ connected components, one containing each true break and each localised to within $2\bar{t}_{\max}$ of the corresponding true date, with probability tending to one.
\end{enumerate}
\end{theorem}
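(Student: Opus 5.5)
We reduce everything to posterior model probabilities for a single unit. After conditioning on $(\beta,\sigma^2)$ and the outlier indicators, $z=\mu^0+\varepsilon$ with $\varepsilon\sim\mathcal N(0,\sigma^2 I_T)$. The collapsed model space of Assumption~\ref{as:regularity}(c) has candidate representatives on a grid of mesh $\ell_T$. We write $\mathcal M^\star$ for the set of configurations $m$ that contain exactly one representative within distance $\ell_T$ of each true date $s^0_j$ and nothing else. The window posterior probabilities are sums of posterior model probabilities, $\mathrm{PIP}^W_i=\sum_{m:\,m\cap W\neq\emptyset}P(m\mid z)$. Both parts of the theorem therefore follow once we show $P(\mathcal M^\star\mid z)\overset{p}{\to}1$. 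Each $m\in\mathcal M^\star$ places a coordinate in a window $W_j$ around $s^0_j$ of width $\ell_T\le\bar t_{\max}$ (we take $\ell_T\le\bar t_{\max}$ without loss), which gives (i). It places no coordinate farther than $\ell_T\le\bar t_{\max}$ from all true dates, which gives (ii). The threshold claim in (i) is immediate from \eqref{eq:window_rule}: if $|W_j|$ is fixed or $\kappa(|W_j|-1)/c_0$ stays bounded below $1$, then $P^{|W_j|}$ is bounded away from one. For this we choose $W_j$ of width of order $\ell_T$ with $\ell_T\kappa/c_0\to0$, which is compatible with $\bar t_{\max}\to\infty$. The component count and the $2\bar t_{\max}$ localisation then follow because the true dates are $\Delta_T\gg\bar t_{\max}$ apart.

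To show $P(\mathcal M^\star\mid z)\to1$, we bound $\sum_{m\notin\mathcal M^\star}P(m\mid z)/P(m^\star\mid z)$ for a fixed reference $m^\star\in\mathcal M^\star$ (the representatives nearest the true dates). We use the factorisation in Assumption~\ref{as:regularity}(d), $M(z\mid m)=M^L(z\mid m)G_m(z)$, and split the wrong models into two classes.

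\emph{Overfitted models.} These have $m\supset$ a member of $\mathcal M^\star$ plus $r\ge1$ spurious representatives at distance $\ge\ell_T$ from every true date. For each added coordinate, the separation $\ell_T$ makes the spurious step indicator, once projected off the true-break columns, orthogonal up to $O(1)$ leverage. The iMOM rate \eqref{eq:pimom_rate}, applied conditionally, therefore gives a Bayes factor $\exp\{-c\,\ell_T^{k/(k+1)}\}$ per spurious coordinate. The effective sample size for distinguishing a step at distance $\ge\ell_T$ is at least $\ell_T$. Summing over at most $K^r$ choices and using the prior odds bound (b), with $\log(1/\varrho'_T)=O(\log K)$ and $\varrho\le\bar\varrho$, the total is bounded by $\sum_r(\bar\varrho K e^{-c\ell_T^{k/(k+1)}})^r\to0$ because $\ell_T^{k/(k+1)}/\log K\to\infty$.

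\emph{Underfitted or misplaced models.} These miss every representative near some $s^0_j$. Here the local marginal likelihood ratio is governed by the non-centrality $\|(I-H_m)\mu^0\|^2\gtrsim\underline\gamma^2\ell_T$. This holds because dropping a true step of size $\ge\underline\gamma$ leaves a residual mean that is nonzero on a stretch of length at least of order $\ell_T$, by the spacing condition $\Delta_T/(\log NT)^{(k+1)/k}\to\infty$. The ratio therefore decays like $\exp(-c\,\underline\gamma^2\ell_T)$, which again beats $\log K$ penalties and the $G$-ratio constants.

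We expect the main obstacle to be the uniformity in the overfitted class. A spurious coordinate near the boundary of the $\ell_T$-neighbourhood of a true break is highly collinear with it. The iMOM penalty then has to be obtained through the bounded $G$-ratio of Assumption~\ref{as:regularity}(d) rather than through a clean one-dimensional Bayes factor. We would handle this by peeling off one coordinate at a time: we use the "no coordinate changes relevance status" clause of (d), and control the residual chi-square increments by a union bound with Gaussian maximal inequalities, which cost $O(\log K)$ in total. Finally, we lift the result from conditional to unconditional on $(\beta,\sigma^2)$ and the outliers by dominated convergence.
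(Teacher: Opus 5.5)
Your reduction to $\Pi(\mathcal M^\star\mid z)\to1$ does not go through as proposed. The obstacle you flag is also not the real one: under the $\ell_T$-separation of Assumption~\ref{as:regularity}(c), every included date has $n_s\ge\ell_T/2$ by Proposition~\ref{prop:geom}(i), so collinearity is not the problem. The problem in the overfitted class is \emph{mean mismatch}. Your $\mathcal M^\star$ lets the near-break date sit anywhere within $\ell_T$ of $s^0_j$, whereas ``orthogonal up to $O(1)$ leverage'' holds only when the exact true dates are in the model. Suppose that date is $s^0_j+d$. By Proposition~\ref{prop:geom}(iii), a candidate $s=s^0_j-x$ on the other side has $\mathbb E[\hat\gamma_s]=\gamma^0_jd/(d+x)$ and $\mathbb E[t_s]\approx\gamma^0_jd/(\sigma\sqrt{d+x})$. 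For $d=\ell_T/2$, $x=\ell_T+1$ this gives $t_s^2/2\asymp\underline\gamma^2\ell_T/\sigma^2$ with $w_*$ bounded away from zero, so Lemma~\ref{lem:scalar}(ii) yields $\mathrm{BF}_s\to\infty$. The ``spurious'' date brackets the break and is favoured, so your factor $e^{-c\ell_T^{k/(k+1)}}$ per extra coordinate is false for such supersets. You first need every true break to have an included date within some $h=o(\ell_T^{k/(k+1)})$ with posterior probability tending to one. In the paper this is the coverage bound of Proposition~\ref{prop:coverage} at radius $R_T\asymp\log(KT)$, which gives $\mathcal K_h$ for $h\ge R_T$. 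Only then can you count far inclusions on $\mathcal K_h$ (Theorem~\ref{thm:gauge}), where Lemma~\ref{lem:mismatch} caps $|\mathbb E[t_s]|$ by $2\bar\gamma\sqrt h/\sigma$. That term, together with the union bound over the $NT^3$ triples $(a_s,s,b_s)$, enters $t_0$ and is paid per included coordinate, not once. The same missing localisation is why you need ``$\ell_T\le\bar t_{\max}$ without loss'' and $\ell_T\kappa/c_0\to0$. Assumption~\ref{as:regularity} does not imply these: it only makes both $\ell_T$ and $\bar t_{\max}$ $o(\Delta_T)$, and since $\ell_T\gg(\log K)^{(k+1)/k}$ they are a genuine extra hypothesis. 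The paper instead localises at the logarithmic scale $d_T$ (Proposition~\ref{prop:loc}) and takes the potency window with $2d_T\le L=o(\bar t_{\max})$ (Theorem~\ref{thm:potency}), without reference to $\ell_T$. Finally, if the collapsed space were a grid of mesh $\ell_T$, as you write, your target would be false outright. A break midway between grid points $g_1<s^0_j<g_2$ is fitted better by $\{g_1,g_2\}$ than by either alone, the second entering with $|t_s|\asymp\underline\gamma\sqrt{\ell_T}/\sigma$. The Online Appendix formalises (c) as $\ell_T$-separated subsets of all of $\mathcal S$ (Assumption~\ref{as:space}), so $\mathcal T^0$ itself is admissible.

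The underfitted class has a second gap. You compare an arbitrary underfitted $m$ directly with $m^\star$ and absorb the non-local part into ``the $G$-ratio constants''. However, Assumption~\ref{as:regularity}(d), in its formal version Assumption~\ref{as:nlp}, controls $G_{m'}/G_m$ only for one-coordinate additions and for status-preserving comparisons with $|m\triangle m'|\le5k^0$. An underfitted $m$ can carry unboundedly many proxy and spurious dates, so the pair $(m,m^\star)$ is outside its scope. The summation also fails. For non-nested $m,m^\star$ the local log-ratio contains $\lVert H_m\varepsilon\rVert^2/\sigma^2$, which grows with $|m|$ rather than being $O(\log K)$. There are of order $K^r$ underfitted models with $r$ extra dates, so a per-model factor $e^{-c\underline\gamma^2\ell_T}$ is not summable. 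Charging each extra date its own iMOM penalty would need control of $\mathbb E[t_s]$, which Lemma~\ref{lem:mismatch} does not give in a cell containing an uncovered break. The missing device is the proxy-deletion map of Lemma~\ref{lem:proxy}, $\psi(m)=(m\setminus\mathcal P)\cup J$. It adds only the cancelling group $J$ of uncovered breaks and deletes only the at most $2k^0$ dates whose $\vartheta$-relevance would flip. Every other coordinate of $m$ is kept, so spurious dates cancel between numerator and denominator and Assumption~\ref{as:nlp}(b) applies. The move retains an energy gain of at least $\tfrac14\gamma_T^2R_TU_m(R_T)$. The local factor is then handled through the nested halves $m\subseteq m\cup J\supseteq\psi(m)$, with a union bound over at most $T^{4k^0}$ signal vectors. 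Since $\psi$ is at most $2^{k^0}(T+1)^{2k^0}$-to-one, $\sum_{m'}\Pi(m'\mid z)\le1$ closes the sum (Proposition~\ref{prop:coverage}). The natural repair, comparing with $m\cup\mathcal T^0$ so that spurious dates cancel, does not work. Proxies left in place turn spurious, and the $G$-ratio can then fall by a factor $\exp\{-\beta^*_k(\tau n)^{k/(k+1)}\}$ (Remark~\ref{rem:nlp}); this is why $\mathcal P$ must be deleted.
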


Part~(i) is the potency statement of Theorem~\ref{thm:potency} of the Online Appendix, resting on Propositions~\ref{prop:coverage} and~\ref{prop:loc}, part~(ii) is the gauge bound of Theorem~\ref{thm:gauge}, and the localisation claim follows from Corollary~\ref{cor:hausdorff}. %Assumption~\ref{as:regularity}(a) requires a growing cap, so windows reported under the default calibration $(1,1,1/3)$ do not inherit the theorem, and 
%Remark~\ref{rem:nearmiss} quantifies what survives at a fixed cap. 
Proposition~\ref{prop:resolution} shows why the statement concerns windows rather than dates.

%======================================================================
%                        O N L I N E   A P P E N D I X
%======================================================================

\clearpage

% ---- numbering conventions of the online appendix ----
\renewcommand{\thesection}{S\arabic{section}}
\renewcommand{\theequation}{S.\arabic{equation}}
\renewcommand{\thefigure}{S.\arabic{figure}}
\renewcommand{\thetable}{S.\arabic{table}}
\renewcommand{\thetheorem}{\arabic{section}.\arabic{theorem}}
\renewcommand{\theassumption}{\arabic{section}.\arabic{assumption}}
\renewcommand{\theproposition}{\arabic{section}.\arabic{proposition}}
\renewcommand{\thecorollary}{\arabic{section}.\arabic{corollary}}
\renewcommand{\thelemma}{\arabic{section}.\arabic{lemma}}
\renewcommand{\theexample}{\arabic{section}.\arabic{example}}
\renewcommand{\theremark}{\arabic{section}.\arabic{remark}}

% ---- restart counters (main text and printed appendices A/B ran them up) ----
\setcounter{section}{0}
\setcounter{equation}{0}
\setcounter{figure}{0}
\setcounter{table}{0}
\setcounter{theorem}{0}
\setcounter{assumption}{0}
\setcounter{proposition}{0}
\setcounter{corollary}{0}
\setcounter{lemma}{0}
\setcounter{example}{0}
\setcounter{remark}{0}
\setcounter{definition}{0}
\setcounter{footnote}{0}

% ---- uncomment the next two lines to restart page numbers as S1, S2, ... ----
% \renewcommand{\thepage}{S\arabic{page}}
% \setcounter{page}{1}

% ---- title block of the online appendix ----
\begin{center}
 {\large\bfseries Online Appendix\par}
 \vspace{0.4em}
 {\large\bfseries Bayesian Indicator-Saturated Regression\par}
 \vspace{0.8em}
 {\normalsize Lucas D. Konrad, Lukas Vashold and Jesus Crespo Cuaresma\par}
\end{center}

\vspace{1em}

\noindent This Online Appendix contains a discussion of model selection consistency in the indicator-saturated regression framework, associated assumptions and proofs, additional information about and results from simulations, and additional results for the replication of the study by \citet{koch2022}.

\vspace{1.5em}

\newcommand{\CN}[1]{\marginpar{\tiny\textsf{#1}}}   % \renewcommand{\CN}[1]{} to hide
\newcommand{\PP}{\mathbb P}
\newcommand{\Post}[1]{\Pi(#1\mid z)}

\section{Non-local priors and multiplicity in saturated designs}\label{app:nlp_rates}

The indicator saturation framework implies an effective model dimension that grows as $O(NT).$ \citet{rossell2012nlp_modsel} show that when the number of candidate coefficients increases linearly in the number of observations, non-local priors assign posterior probability that tends to unity to the true model, whereas any continuous local prior with $\pi_L(0) > \varepsilon > 0$ assigns the true model posterior probability that tends to zero once the number of candidates exceeds $O(T^{1/2})$ \citep[see Theorem~2 in][]{rossell2012nlp_modsel}. This result  requires the candidate design to be well conditioned, with the smallest eigenvalue of $X_n'X_n$ bounded below by $cn$. This condition fails for a step-shift--saturated design: adjacent indicators $\mathbbm{1}_{\{t\geq s\}}$ and $\mathbbm{1}_{\{t\geq s+1\}}$ differ in a single period, so their sample correlation tends to one, the smallest eigenvalue of $X_n'X_n$ stays $O(1)$ while $n$ grows, and the true break \emph{dates} are only identified up to a local neighbourhood. Exact-model consistency is therefore not available in the ISR setting. What remains identified is the \emph{window-level} configuration of breaks, which is the target of the window rule of Section~\ref{subsec:bisam1}. For the iMOM slab, the posterior detects each true break up to a time window that is negligibly small relative to the spacing between breaks (of width $O_p(\log(KT))$ rather than $O_p(1)$ by Proposition~\ref{prop:resolution}), and excludes every candidate date well separated from all true breaks, so that the windowed inclusion probabilities are consistent for the set of true break windows. We state the assumptions and prove this window-level consistency in Theorem~\ref{thm:window_consistency_oa}.

Under a pMOM slab of order $k$ the spurious-break Bayes factor decays polynomially, $\mathrm{BF}_T(H_A\!\mid\!H_0) = O_p(T^{-k-1/2})$, whereas under the iMOM (piMOM) slab the decay is near-exponential,

\begin{equation}
 \frac{\log \mathrm{BF}_T(H_A\!\mid\!H_0)}{T^{k/(k+1)}}
 \overset{p}{\to} c < 0,
 \label{eq:pimom_rate_app}
\end{equation}
so that evidence against a negligible break accumulates faster than under any local slab, whose rate stays $O_p(T^{-1/2})$ irrespective of its calibration \citep[eq.~11]{rossell2010nlp_test}. When a break is genuinely present, all of these slabs accumulate evidence at the same exponential rate in $T$, so the additional stringency of the non-local slab is costless on true breaks. Figure~\ref{fig:prior_rates} confirms both statements directly on the step-shift design.

To make this mechanism concrete on the design we use for detection, we evaluate $\mathrm{BF}_T(H_A\!\mid\!H_0)$ for a single step-shift unit as $T$ grows, isolating the slab from the sampler, the outlier component and the multiplicity correction. Figure~\ref{fig:prior_rates} reports a local (Zellner/normal) slab, a pMOM slab and the iMOM slab, all calibrated by the common rule $P(|\gamma|\leq\sigma\mid\tau)=0.01$ of Table~\ref{tab:hyperpars}. Panel~(a) recovers the three predicted null rates. Panel~(b) shows that, when a break is truly present, the results for the slabs are indistinguishable, so the iMOM buys its extra stringency at no cost in true-break evidence.

\begin{figure}[t]
 \centering
 \includegraphics[width=\linewidth]{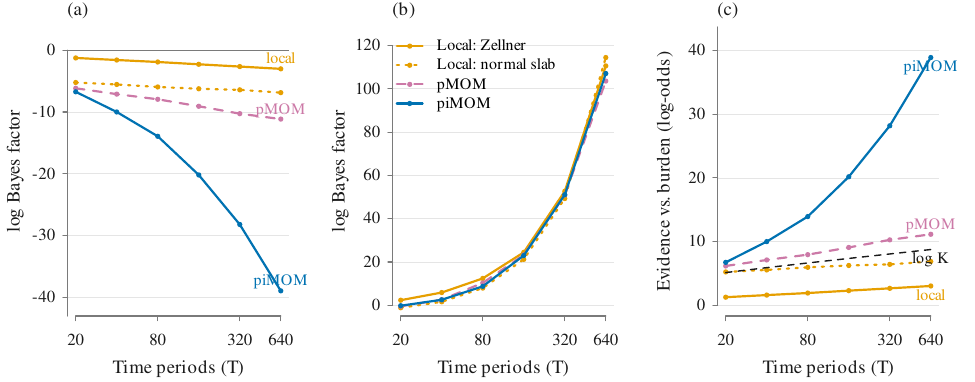}
 \caption{Evidence-accumulation rates by slab prior on the step-shift design. Each panel plots a log Bayes factor against the number of periods $T\in\{20,40,80,160,320,640\}$ for a single SIS unit, averaged over $\sim\!200$ replications, with $\mathrm{BF}_T$ evaluated by Laplace approximation via \texttt{mombf::nlpMarginal}. All slabs share the calibration $P(|\gamma|\leq\sigma\mid\tau)=0.01$. \textbf{(a)} Evidence \emph{against} a spurious break ($H_0$ true): the local slab decays at $-\tfrac12\log T$, the pMOM slab at $-\tfrac32\log T$, and the iMOM slab near-exponentially ($\propto-\sqrt T$), consistent with \eqref{eq:pimom_rate_app}. Fitted rates: local $-0.51\log T$ ($R^2=0.999$), pMOM $-1.46\log T$ ($R^2=0.997$), iMOM $-1.55\sqrt T$ ($R^2=0.999$). \textbf{(b)} Evidence \emph{for} a true $3\sigma$ break: all slabs are linear in $T$ with essentially identical slopes. The two local arms (Zellner $g=n$ and a variance-matched normal slab, $\tau\approx6370$) differ only in level, but both yield the same $-\tfrac12\log T$ null rate in (a), so that rate is a property of locality, not of calibration. \textbf{(c)} The saturation trade-off. The multiplicity burden $\log K$ with $K=N(T-3)$ grows like $\log T$. The local rate $\tfrac12\log T$ never overtakes it, whereas both non-local rates dominate it for large $T$.}
 \label{fig:prior_rates}
\end{figure}

Panel~(c) illustrates why this rate gap is decisive for indicator saturation rather than a generic refinement. In the SIS design, the candidate set grows with the sample: there are $T-3$ candidate dates per unit and $K=N(T-3)=O(NT)$ across the panel. Keeping the expected number of {spuriously} selected dates under control requires the per-candidate evidence against a null date to overcome a multiplicity burden of order $\log K = \log T + O(1)$. A local slab supplies only $\tfrac12\log T$ and is dominated by this burden. With the non-local factor $G_m$ bounded away from zero, the posterior expected number of false inclusions is $O_p(T\cdot T^{-1/2}) = O_p(T^{1/2})\to\infty$, so a local slab cannot control false discoveries in a saturated design, although it is tuned.

A pMOM slab of order $k$ supplies $(k+\tfrac12)\log T$, giving $O_p(T^{1/2-k})\to0$ for $k\geq1$, and the iMOM slab drives the per-candidate evidence down near-exponentially, dominating $\log K$ with a widening margin. This per-candidate accounting is optimistic for the pMOM. The bound must hold uniformly over the $O(NT^3)$ conditional designs, and Corollary~\ref{cor:panel} shows that the price of that uniformity pushes the pMOM requirement to $k\geq3$ even at $N=1$, while leaving the iMOM conclusion intact. This is exactly the arithmetic behind Theorem~\ref{thm:gauge}.

Non-locality is thus not a refinement of the slab in this setting but a requirement for false-discovery control once the number of candidate breaks grows with $T$, and it is what underlies the specificity advantage of BISAM over GETS and ALASSO documented in Section~\ref{sec:simulation}. This argument concerns candidate dates which are well separated from every true break, where collinearity with the truth is asymptotically negligible. The complementary problem of {localising} a detected break among its near-collinear neighbours is handled separately by the window rule and Theorem~\ref{thm:window_consistency_oa}.

\section{Gauge, potency, resolution and posterior convergence}
\label{app:gauge}

\subsection{Setup}

The break design is block diagonal across units and the model-selection block of the sampler conditions on $(\beta,\sigma^2)$. We work conditionally on $(\beta,\sigma^2)$ and analyse a single unit, writing $z:=y_i-X_i\beta$. Statements are uniform over $\sigma^2$ in a fixed compact subset of $(0,\infty)$, and counting statements aggregate over units at the end. Thus
\begin{equation}\label{eq:dgp}
 z \;=\; \mu^0+\varepsilon,\qquad
 \mu^0:=\sum_{j=1}^{k^0}\gamma^0_j u_{s^0_j},\qquad
 \varepsilon\sim\mathcal N(0,\sigma^2 I_T),
\end{equation}
with $u_s=(\mathbbm 1\{t\ge s\})_{t=1}^T$, candidate set $\mathcal S=\{3,\dots,T-1\}$, true dates $\mathcal T^0=\{s^0_1<\dots<s^0_{k^0}\}$, magnitudes $\gamma_T:=\min_j|\gamma^0_j|\ge\underline\gamma>0$ and $\bar\gamma:=\max_j|\gamma^0_j|<\infty$, and $K:=N(T-3)$ candidate coefficients in the panel. Throughout, $\PP$ denotes probability under \eqref{eq:dgp} and $\Post{\cdot}$ the posterior, while $c,C,C'$ are positive constants that may change between lines.

Three points of contact with the main text should be recorded. First, $z$ is the partialled-out outcome of a single unit, the object written $\tilde y_{i,\cdot}$ in the description of the sampler in Section~\ref{app:posterior_sampler}, and not the step-shift design matrix $ Z$. Here $u_s$ is the column of $ Z_i$ associated with candidate date $s$, and $K=N(T-3)$ is the number of candidate break coefficients, written $q$ in Section~\ref{app:posterior_sampler}. Second, when the outlier component is active the working error variance is $\sigma^2_i\lambda_{i,t}$ rather than $\sigma^2_i$. All statements below are made conditionally on $\delta^{\varepsilon}$, so that $\sigma^2$ is read as the working variance of the GLS-standardised block and \eqref{eq:dgp} is the model actually seen by the model-selection step. The symbol $\lambda_s$ below is the local information $\tau n_s$ of \eqref{eq:condquant} and is unrelated to the variance-inflation factor $\lambda_{i,t}$ or to the latent truncation variable $\lambda^\gamma_s$ of the sampler. Third, $\eta\in(0,1)$ denotes throughout the tolerance of the high-probability events, not the outlier probability of the main text.

The minimum spacing between true dates is taken \emph{boundary inclusive}:
\begin{equation}\label{eq:spacing}
 s^0_0:=1,\qquad s^0_{k^0+1}:=T+1,\qquad
 \Delta_T\;:=\;\min_{0\le j\le k^0}\bigl(s^0_{j+1}-s^0_j\bigr).
\end{equation}
This is not cosmetic: Lemma~\ref{lem:energy} is false if true breaks are permitted within $R$ of the sample ends.\footnote{With $T=100$, a single break at $s^0=4$, $m=\{60\}$ and $R=12$ one has $\lVert(I-H_m)\mu^0\rVert^2=2.85$ against $\tfrac12\gamma_T^2R=6$.} The growth conditions that $\Delta_T$ must satisfy are collected in Assumption~\ref{as:design} and are stronger than a separation condition of the form $\Delta_T/\log T\to\infty$.

A configuration is a set $m\subseteq\mathcal S$, with $m^{-s}:=m\setminus\{s\}$. We write $V_m$ for the span of $\{u_s:s\in m\}$ and the constant vector, and $H_m$ for the projection onto it. For $s\notin m^{-s}$ the \emph{flanking dates} are
\begin{equation}\label{eq:flank}
 a_s:=\max\bigl(\{s'\in m^{-s}:s'<s\}\cup\{1\}\bigr),\qquad
 b_s:=\min\bigl(\{s'\in m^{-s}:s'>s\}\cup\{T+1\}\bigr),
\end{equation}
and we set
\begin{equation}\label{eq:condquant}
 \tilde u_s:=(I-H_{m^{-s}})u_s,\quad n_s:=\lVert\tilde u_s\rVert^2,\quad
 \hat\gamma_s:=\frac{\langle\tilde u_s,z\rangle}{n_s},\quad
 t_s:=\frac{\hat\gamma_s\sqrt{n_s}}{\sigma},\quad \lambda_s:=\tau n_s .
\end{equation}
Write $D_j:=\min_{s\in m}|s-s^0_j|$ and $U_m(R):=\#\{j\le k^0:D_j>R\}$. For $s\in m'$,
\begin{equation}\label{eq:theta}
 \vartheta_s(m'):=\frac{\bigl|\langle(I-H_{m'\setminus\{s\}})u_s,\ \mu^0\rangle\bigr|}
 {\bigl\lVert(I-H_{m'\setminus\{s\}})u_s\bigr\rVert}
\end{equation}
denotes the \emph{partial signal} of $s$ in $m'$: deleting $s$ from $m'$ increases $\lVert(I-H)\mu^0\rVert^2$ by exactly $\vartheta_s(m')^2$. Two radii and one threshold are used throughout,
\begin{equation}\label{eq:twoscale}
 R_T\ \text{as in \eqref{eq:RT}},\qquad
 r_T:=\Bigl\lceil\frac{\gamma_T^2R_T}{16\,\bar\gamma^2}\Bigr\rceil,\qquad
 \theta^2:=\tfrac14\gamma_T^2\min\Bigl\{r_T,\ \frac{R_T}{8k^0}\Bigr\},
\end{equation}
and $s$ is \emph{$\vartheta$-relevant} in $m'$ if $\vartheta_s(m')>\theta$. The \emph{proxy radius} $r_T$ is the distance within which an included date explains a true break well enough that the break need not be added. It is proportional to $R_T$ precisely because $\bar\gamma/\gamma_T$ is bounded (Assumption~\ref{as:design}), and this is the one place where a \emph{ceiling} on the break magnitudes, and not merely the floor $\underline\gamma$, does work. This is a finite-sample version of the dichotomy that drives \citet[Proposition~1]{rossell2017nonlocal}: the non-local factor $G_{m'}$ of the factorisation $M(z\mid m')=M^{L}(z\mid m')G_{m'}(z)$ converges to a strictly positive limit when every included column is relevant and identified, and to zero when $m'$ contains a column asymptotically uncorrelated with $\mu^0$ given the others. Every use of that factorisation below is signposted. The reported object is $\mathrm{PIP}^W=\Post{A^W}$ with $A^W=\{\exists s\in W:\delta^\gamma_s=1\}$, compared with $P^{\bar t}=\{c_1+\kappa(\bar t-1)\}/(c_0+c_1)$ as in \eqref{eq:window_rule_app}, and the cap is $\bar t_{\max}=\lceil1+c_0/\kappa\rceil-1$ as in \eqref{eq:tmax}. A candidate is \emph{$h$-far} if $\operatorname{dist}(s,\mathcal T^0)>h$, and spurious-date statements are restricted to $\mathcal S_T:=\{s\in\mathcal S:\min(s-1,T+1-s)>h\}$. The consequence for the window rule itself, which is the statement cited in the main text, is Theorem~\ref{thm:window_consistency_oa}, which collects Theorem~\ref{thm:gauge} and Theorem~\ref{thm:potency} and is proved last.

\subsection{Geometry of the step design}

\begin{proposition}[Conditional geometry]\label{prop:geom}
Fix $m^{-s}$ and let $a=a_s$, $b=b_s$. Then
\begin{enumerate}
\item[\textup{(i)}] $\tilde u_s=\mathbbm 1_{[s,b)}-\frac{b-s}{b-a}\mathbbm 1_{[a,b)}$ and $n_s=\frac{(s-a)(b-s)}{b-a}$. In particular $\tilde u_s$ depends on $m^{-s}$ only through $(a,b)$, and $\tfrac12\min(s-a,b-s)\le n_s\le\min(s-a,b-s)$.
\item[\textup{(ii)}] $\langle\tilde u_s,u_{s'}\rangle=0$ for $s'\notin(a,b)$, and
$\langle\tilde u_s,u_{s'}\rangle=\frac{(s\wedge s'-a)(b-s\vee s')}{b-a}$ for $s'\in(a,b)$.
\item[\textup{(iii)}] $\mathbb E[\hat\gamma_s]
=n_s^{-1}\sum_{j:s^0_j\in(a,b)}\gamma^0_j\frac{(s\wedge s^0_j-a)(b-s\vee s^0_j)}{b-a}$ and $t_s-\mathbb E[t_s]\sim\mathcal N(0,1)$.
\end{enumerate}
\end{proposition}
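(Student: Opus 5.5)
The plan is to reduce everything to projection onto the span of the constant vector and the flanking steps, and then compute on a single segment $[a,b)$.

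\textbf{Step 1: the projection.} Write $V:=V_{m^{-s}}$. Since $V$ contains the constant vector and the steps $u_{s'}$ for $s'\in m^{-s}$, it equals the space of vectors that are piecewise constant on the blocks cut out by consecutive dates of $m^{-s}\cup\{1,T+1\}$. Hence $H_{m^{-s}}$ is blockwise averaging.

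The vector $u_s$ is constant on every block except $[a,b)$. This is where the definition of $a,b$ as the flanking dates enters: no element of $m^{-s}$ lies strictly between $a$ and $b$. On $[a,b)$ the vector $u_s$ equals $\mathbbm 1_{[s,b)}$, whose block average is $(b-s)/(b-a)$. Therefore
\[
\tilde u_s=\mathbbm 1_{[s,b)}-\tfrac{b-s}{b-a}\mathbbm 1_{[a,b)} .
\]
Expanding the squared norm gives
\[
n_s=(b-s)-\frac{(b-s)^2}{b-a}=\frac{(s-a)(b-s)}{b-a}.
\]
For the bounds, put $x=s-a$ and $y=b-s$. Then $xy/(x+y)\le\min(x,y)$, and since $x+y\le2\max(x,y)$ we also get $xy/(x+y)\ge\tfrac12\min(x,y)$.

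\textbf{Step 2: inner products with other steps.} Because $\tilde u_s\perp V$ and $\tilde u_s$ is supported on $[a,b)$, the following holds for $s'\notin(a,b)$. The vector $u_{s'}$ is then constant on $[a,b)$, and $\tilde u_s$ has zero sum there, so the inner product vanishes.

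For $s'\in(a,b)$ we compute directly:
\[
\langle\tilde u_s,u_{s'}\rangle=\#[s\vee s',b)-\tfrac{b-s}{b-a}(b-s')=(b-s\vee s')-\frac{(b-s)(b-s')}{b-a}.
\]
It remains to check the identity
\[
(b-a)(b-s\vee s')-(b-s)(b-s')=(s\wedge s'-a)(b-s\vee s').
\]
This follows by factoring out $b-s\vee s'$ and using $(b-s)(b-s')=(b-s\vee s')(b-s\wedge s')$.

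\textbf{Step 3: the mean and the noise of $t_s$.} From $z=\mu^0+\varepsilon$ and linearity,
\[
\mathbb E\hat\gamma_s=n_s^{-1}\sum_j\gamma^0_j\langle\tilde u_s,u_{s^0_j}\rangle .
\]
Applying (ii) with $s'=s^0_j$ gives the displayed formula in (iii). A true date equal to $a$ or $b$, or lying outside $(a,b)$, contributes zero, which is consistent with (ii).

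For the noise, $t_s-\mathbb Et_s=\langle\tilde u_s,\varepsilon\rangle/(\sigma\sqrt{n_s})$. This is Gaussian with variance $\sigma^2 n_s/(\sigma^2n_s)=1$.

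\textbf{Main obstacle.} The only delicate point is the boundary bookkeeping in Step 1. One must confirm that the convention $a=1$, $b=T+1$ when there is no flanking date, together with the constant vector in $V_m$, makes blockwise averaging the correct projection. One must also confirm that $n_s>0$, which holds because $s\in\mathcal S$ lies strictly between $a$ and $b$. Everything else is elementary algebra.
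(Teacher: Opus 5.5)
Your proposal is correct and follows essentially the same route as the paper: blockwise averaging as the projection onto $V_{m^{-s}}$ gives $\tilde u_s$ and $n_s$, constancy of $u_{s'}$ on $[a,b)$ together with the zero sum of $\tilde u_s$ plus a direct expansion gives (ii), and linearity with Gaussianity of $\langle\tilde u_s,\varepsilon\rangle/(\sigma\sqrt{n_s})$ gives (iii). Two small differences: you prove the bounds $\tfrac12\min(s-a,b-s)\le n_s\le\min(s-a,b-s)$ explicitly, which the paper only mentions as a harmonic-mean remark, and you compute $\langle\tilde u_s,u_{s'}\rangle$ directly rather than through $\tilde u_{s'}$, which is the same algebra.
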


\begin{proof}
(i) $V_{m^{-s}}$ is the space of vectors constant on each cell of the partition induced by $m^{-s}$, and the cell containing $s$ is $[a,b)$. Write $u_s=\mathbbm 1_{[b,T]}+\mathbbm 1_{[s,b)}$. The first summand is constant on every cell, hence lies in $V_{m^{-s}}$, and the second is supported inside the single cell $[a,b)$, where the projection subtracts its cell mean $(b-s)/(b-a)$. This gives the stated form, which involves $m^{-s}$ only through $(a,b)$, and
\[
 n_s=(b-s)\Bigl(1-\tfrac{b-s}{b-a}\Bigr)^2+(s-a)\Bigl(\tfrac{b-s}{b-a}\Bigr)^2
 =\frac{(b-s)(s-a)^2+(s-a)(b-s)^2}{(b-a)^2}=\frac{(s-a)(b-s)}{b-a},
\]
the harmonic mean of $s-a$ and $b-s$ up to a factor two.

(ii) $\tilde u_s$ is supported on $[a,b)$ and sums to zero there, while $u_{s'}$ is constant on $[a,b)$ when $s'\notin(a,b)$, giving zero. For $s'\in(a,b)$ put $\tilde u_{s'}:=(I-H_{m^{-s}})u_{s'}$, so that $\langle\tilde u_s,u_{s'}\rangle=\langle\tilde u_s,\tilde u_{s'}\rangle$. Taking $s\le s'$ and expanding the two expressions from (i), $\langle\tilde u_s,\tilde u_{s'}\rangle=(b-s')-\frac{(b-s)(b-s')}{b-a} =\frac{(b-s')(s-a)}{b-a}$.

(iii) Substitute \eqref{eq:dgp} and apply (ii). The noise term is $\mathcal N(0,\sigma^2/n_s)$ since $\tilde u_s$ is non-random given $(a,b)$.
\end{proof}

Three consequences are used throughout. The direction $\tilde u_s/\lVert\tilde u_s\rVert$ is indexed by the triple $(a,s,b)$ alone, so suprema of the noise over configurations are suprema over at most $T^3$ fixed directions rather than over $2^{|\mathcal S|}$ models. A candidate whose flanking dates are close carries $O(1)$ information: adjacent candidates are aliased. And $\mathbb E[\hat\gamma_s]=0$ whenever the cell $[a,b)$ contains no true break.

\begin{lemma}[Mismatch bound]\label{lem:mismatch}
Suppose $\Delta_T>2h$ and every true break lies within $h$ of a flanking date of its own cell. Then every $h$-far candidate satisfies $|\mathbb E[t_s]|\le2\bar\gamma\sqrt h/\sigma$.
\end{lemma}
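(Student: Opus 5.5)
The plan is to compute $\mathbb E[t_s]$ explicitly from Proposition~\ref{prop:geom}(iii) and then bound each true break's contribution separately. Since $t_s=\hat\gamma_s\sqrt{n_s}/\sigma$, Proposition~\ref{prop:geom}(i),(iii) give
\[
 \mathbb E[t_s]=\frac{1}{\sigma\sqrt{n_s}}\sum_{j:\,s^0_j\in(a,b)}\gamma^0_j\,\frac{(s\wedge s^0_j-a)(b-s\vee s^0_j)}{b-a},
 \qquad n_s=\frac{(s-a)(b-s)}{b-a}.
\]
Only true breaks strictly inside the cell of $s$ contribute. A break sitting exactly at a flanking date makes a zero contribution, which is consistent with the formula.

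\emph{Step 1: at most two contributing breaks.} By hypothesis, every true break in $(a,b)$ lies within $h$ of $a$ or within $h$ of $b$. Because $\Delta_T>2h$, the interval $(a,a+h]$ contains at most one true date, and likewise $[b-h,b)$. So at most two terms survive: a ``left'' break $s^0_L$ with $0<s^0_L-a\le h$, and a ``right'' break $s^0_R$ with $0<b-s^0_R\le h$.

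\emph{Step 2: ordering relative to $s$.} Take the left break. If $s<s^0_L$, then $a<s<s^0_L\le a+h$, which would give $|s-s^0_L|<h$. This contradicts $s$ being $h$-far, so $s>s^0_L$, and in fact $s-s^0_L>h$. Therefore $s\wedge s^0_L=s^0_L$ and $s\vee s^0_L=s$. The numerator of the left term is then $(s^0_L-a)(b-s)/(b-a)\le h(b-s)/(b-a)$. Dividing by $\sqrt{n_s}$ gives
\[
 \frac{(s^0_L-a)(b-s)}{(b-a)\sqrt{n_s}}=(s^0_L-a)\sqrt{\frac{b-s}{(b-a)(s-a)}}\le\frac{s^0_L-a}{\sqrt{s-a}}\le\frac{h}{\sqrt h}=\sqrt h,
\]
using $(b-s)/(b-a)\le1$ and $s-a>s-s^0_L>h$. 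The right break is handled symmetrically: we get $s<s^0_R$, and its term is bounded by $(b-s^0_R)/\sqrt{b-s}\le\sqrt h$.

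\emph{Step 3: conclude.} Multiplying each term by $|\gamma^0_j|\le\bar\gamma$, summing the at most two terms, and dividing by $\sigma$ yields $|\mathbb E[t_s]|\le2\bar\gamma\sqrt h/\sigma$.

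The only delicate point is Step 2: the $h$-far condition must place $s$ on the far side of each nearby break. That is what turns the potentially large factor $\sqrt{(b-a)}$ into the harmless ratio $(s^0-a)/\sqrt{s-a}$. The rest is the direct algebra from Proposition~\ref{prop:geom}.
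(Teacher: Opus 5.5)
Your proposal is correct and takes essentially the same route as the paper. Both arguments use the spacing condition to leave at most two contributing breaks, use $h$-farness to force $s$ onto the far side of each, and rely on the same closed form from Proposition~\ref{prop:geom}. The only difference is cosmetic: for the final inequality you use $s-a>h$, whereas the paper, working the mirrored right-hand case, uses $b-s\ge b-s^0_j$ to bound each term by $\sqrt{b-s^0_j}\le\sqrt h$.
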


\begin{proof}
Let $[a,b)$ be the cell of an $h$-far $s$. By hypothesis each true break in $(a,b)$ lies within $h$ of $a$ or of $b$, and by $\Delta_T>2h$ at most one lies within $h$ of each, so at most two contribute. Take $s^0_j$ with $b-s^0_j\le h$. If $s>s^0_j$ then $|s-s^0_j|<b-s^0_j\le h$, contradicting $h$-farness, so $s<s^0_j$, and by Proposition~\ref{prop:geom}(i)--(iii) its contribution to $|\mathbb E[t_s]|$ is
\[
 \frac{|\gamma^0_j|}{\sigma}\frac{(s-a)(b-s^0_j)}{b-a}
 \sqrt{\frac{b-a}{(s-a)(b-s)}}
 =\frac{|\gamma^0_j|}{\sigma}(b-s^0_j)\sqrt{\frac{s-a}{(b-a)(b-s)}}
 \le\frac{|\gamma^0_j|}{\sigma}\sqrt{b-s^0_j}\le\frac{\bar\gamma\sqrt h}{\sigma},
\]
using $s-a\le b-a$ and $b-s\ge b-s^0_j$. The case $s^0_j-a\le h$ is symmetric.
\end{proof}

\begin{lemma}[Energy left by uncovered breaks]\label{lem:energy}
Let $R\ge1$ satisfy $2R\le\Delta_T$ with $\Delta_T$ as in \eqref{eq:spacing}. Then, for every configuration $m$,
\begin{equation}\label{eq:energy}
 \bigl\lVert(I-H_m)\mu^0\bigr\rVert^2\;\ge\;\tfrac12\,\gamma_T^2\,R\,U_m(R).
\end{equation}
The bound is deterministic and immune to cancellation across breaks: it is obtained window by window against an arbitrary constant.
\end{lemma}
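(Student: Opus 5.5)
The plan is to reduce the claim to a local, window-by-window statement. For each uncovered true break $s^0_j$, meaning $D_j>R$, we place a window $B_j:=[s^0_j-R,\,s^0_j+R)$. Since $2R\le\Delta_T$ and $\Delta_T$ is boundary inclusive, each $B_j$ lies inside $[1,T]$. Moreover, the windows attached to distinct breaks are pairwise disjoint, and each contains exactly one true break. These facts follow from $s^0_{j+1}-s^0_j\ge\Delta_T\ge 2R$, together with $s^0_j-1\ge\Delta_T$ and $T+1-s^0_j\ge\Delta_T$. Because the $B_j$ are disjoint,
\[
\bigl\lVert(I-H_m)\mu^0\bigr\rVert^2\ \ge\ \sum_{j:D_j>R}\ \sum_{t\in B_j}\bigl((I-H_m)\mu^0\bigr)_t^2 .
\]
It therefore suffices to show that each inner sum is at least $\tfrac12\gamma_T^2R$.

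Fix $j$ with $D_j>R$. No included date $s\in m$ satisfies $|s-s^0_j|\le R$. Hence $B_j$, apart possibly from its endpoints, meets no boundary of the partition induced by $m$: a cell boundary at $s$ means that position $s$ starts a new cell, and the window's positions $t$ with $s^0_j-R<t<s^0_j+R$ never start one. So $B_j$ lies within a single cell of $m$, after trimming at most one endpoint position if needed.

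On that cell, $H_m\mu^0$ is constant, equal to the cell mean; call it $c$. We avoid computing $c$ exactly and simply minimise over all constants, which is the sense in which the bound is taken ``against an arbitrary constant''. On $B_j$ the signal $\mu^0$ takes one value $\alpha$ on $[s^0_j-R,s^0_j)$ and the value $\alpha+\gamma^0_j$ on $[s^0_j,s^0_j+R)$. This holds because no other true break lies within $B_j$. Then, for any $c$,
\[
\sum_{t\in B_j}(\mu^0_t-c)^2\ \ge\ R\bigl[(\alpha-c)^2+(\alpha+\gamma^0_j-c)^2\bigr]\ \ge\ \tfrac12 R(\gamma^0_j)^2\ \ge\ \tfrac12\gamma_T^2R,
\]
using $x^2+(x+g)^2\ge g^2/2$. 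The trimmed version keeps $R-1$ and $R$ points on the two sides, which only changes constants. To get exactly $\tfrac12$, I would take the window with both halves of length $R$, using the strict inequality $|s-s^0_j|>R$, so that the dates $s^0_j\pm R$ are not in $m$.

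The main obstacle is the boundary bookkeeping. First, the window must sit inside a single cell of $m$ exactly. Second, the two half-windows must each contain no other true break, and here the boundary-inclusive spacing, with $s^0_0=1$ and $s^0_{k^0+1}=T+1$, is essential. The footnote's counterexample is precisely the failure of this step when a break sits within $R$ of a sample end. Summing over the $U_m(R)$ disjoint windows gives \eqref{eq:energy}. Since every term is a nonnegative square computed on its own window, breaks of opposite sign cannot cancel.
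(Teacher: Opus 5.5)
Your proposal is correct and follows the paper's proof essentially step for step: the same blocks $\{s^0_j-R,\dots,s^0_j+R-1\}$, containment in the sample and pairwise disjointness from the boundary-inclusive spacing, the single-cell argument from $D_j>R$, and minimisation over an arbitrary constant, giving $(\gamma^0_j)^2R/2$ per block. Two small corrections: the trimming hedge is unnecessary, since $D_j>R$ already means the block contains no element of $m$ and so lies in one cell; and the footnote's sample-end counterexample fails because a half-window is truncated by the sample boundary, not because another true break enters it.
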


\begin{proof}
Let $\mathcal U:=\{j:D_j>R\}$, so $|\mathcal U|=U_m(R)$, and for $j\in\mathcal U$ put $I_j:=\{s^0_j-R,\dots,s^0_j+R-1\}$, a block of $2R$ periods.

\emph{(a) $I_j\subseteq\{1,\dots,T\}$.} By \eqref{eq:spacing}, $s^0_j\ge s^0_{j-1}+2R\ge1+2R$, so $s^0_j-R\ge1+R\ge 1$. Also $s^0_j\le s^0_{j+1}-2R\le T+1-2R$, so $s^0_j+R-1\le T-R\le T$.

\emph{(b) $I_j$ lies inside one cell of $m$.} Every $t\in I_j$ satisfies $|t-s^0_j|\le R$, while $D_j>R$ means no element of $m$ lies within $R$ of $s^0_j$, hence $I_j$ contains no element of $m$ and lies strictly between two consecutive elements of $m\cup\{1,T+1\}$. Consequently $H_m\mu^0$ is constant on $I_j$.

\emph{(c) $\mu^0$ has exactly one jump in $I_j$.} Any other true break is at distance at least $\Delta_T\ge2R>R$ from $s^0_j$, hence outside $I_j$. So $\mu^0$ equals some $c_j$ on the $R$ periods $\{s^0_j-R,\dots,s^0_j-1\}$ and $c_j+\gamma^0_j$ on the $R$ periods $\{s^0_j,\dots,s^0_j+R-1\}$.

\emph{(d) The $I_j$ are pairwise disjoint}, again because $\Delta_T\ge2R$.

Writing $\pi$ for the constant value of $H_m\mu^0$ on $I_j$ and using (b)--(d),
\[
 \bigl\lVert(I-H_m)\mu^0\bigr\rVert^2
 \;\ge\;\sum_{j\in\mathcal U}\sum_{t\in I_j}\bigl(\mu^0_t-\pi_j\bigr)^2
 \;\ge\;\sum_{j\in\mathcal U}\min_{c'\in\mathbb R}\sum_{t\in I_j}(\mu^0_t-c')^2
 \;=\;\sum_{j\in\mathcal U}(\gamma^0_j)^2\frac{R\cdot R}{2R},
\]
the last equality because the minimising $c'$ is the block mean and the residual sum of squares of a two-level contrast is the product of the group sizes over their sum, which is Proposition~\ref{prop:geom}(i) applied to the block $I_j$. This is at least $\tfrac12\gamma_T^2RU_m(R)$.
\end{proof}

\begin{lemma}[Relevance budget]\label{lem:budget}
In any configuration $m'$, at most $2k^0$ coordinates are $\vartheta$-relevant.
\end{lemma}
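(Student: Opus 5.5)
The plan is a pure counting argument. I will show that a coordinate can have \emph{nonzero} partial signal only if it is one of at most two ``neighbours'' of some true break. Since $\theta>0$, relevance forces a nonzero partial signal. The size of the threshold in \eqref{eq:twoscale} is therefore irrelevant here; only $\theta>0$ is needed. This holds because $\gamma_T\ge\underline\gamma>0$ and $r_T\ge1$.

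\emph{Step 1: rewrite $\vartheta_s(m')$ via Proposition~\ref{prop:geom}.} Fix $s\in m'$ and apply Proposition~\ref{prop:geom} with $m^{-s}=m'\setminus\{s\}$. Let $(a_s,b_s)$ be its flanking dates from \eqref{eq:flank}. The numerator of \eqref{eq:theta} is $|\langle\tilde u_s,\mu^0\rangle|$, and by linearity
\[
\langle\tilde u_s,\mu^0\rangle=\sum_{j=1}^{k^0}\gamma^0_j\langle\tilde u_s,u_{s^0_j}\rangle .
\]
By Proposition~\ref{prop:geom}(ii), every term with $s^0_j\notin(a_s,b_s)$ vanishes. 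Hence $\vartheta_s(m')>0$, and a fortiori $\vartheta_s(m')>\theta$, requires that the open cell $(a_s,b_s)$ contain at least one true date. The denominator is $\sqrt{n_s}>0$ because $a_s<s<b_s$, so $\vartheta_s$ is well defined.

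\emph{Step 2: each true date lies in the flanking cell of at most two members of $m'$.} Fix $j$ and suppose $s^0_j\in(a_s,b_s)$ for some $s\in m'$. There are two cases.
\begin{itemize}
\item If $s\le s^0_j$, then $b_s>s^0_j$. By the definition of $b_s$, no element of $m'$ lies in $(s,s^0_j]$, so $s=\max\{s'\in m':s'\le s^0_j\}$.
\item If $s>s^0_j$, then $a_s<s^0_j$, so no element of $m'$ lies in $[s^0_j,s)$, and $s=\min\{s'\in m':s'>s^0_j\}$.
\end{itemize}
Each of these sets has at most one element, so at most two $s\in m'$ have $s^0_j$ in their flanking cell. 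The sentinels $1$ and $T+1$ in \eqref{eq:flank} are not elements of $m'\subseteq\mathcal S$, so they create no extra cases.

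\emph{Step 3: conclude.} By Step 1, each $\vartheta$-relevant $s$ can be assigned a true date $s^0_j\in(a_s,b_s)$. By Step 2, this assignment is at most two-to-one. There are $k^0$ true dates, so at most $2k^0$ coordinates of $m'$ are $\vartheta$-relevant.

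The main point of care is Step 2, in particular the boundary cases with $s=s^0_j$ and the sentinel flanks. The case split $s\le s^0_j$ versus $s>s^0_j$ handles these cleanly. Nothing else is delicate, since Proposition~\ref{prop:geom}(ii) does the real work.
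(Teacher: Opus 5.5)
Your proof is correct and is essentially the paper's argument: $\vartheta_s(m')$ vanishes unless the merged cell $(a_s,b_s)$ of $m'\setminus\{s\}$ contains a true date, and each true date can lie in that cell only for the at most two elements of $m'$ that bracket it. You derive the vanishing termwise from Proposition~\ref{prop:geom}(ii), whereas the paper uses that $\tilde u_s$ sums to zero on a cell where $\mu^0$ is constant; your case split in Step~2 just makes explicit the paper's remark that $(a_s,b_s)$ is the union of the two cells of $m'$ adjacent to $s$.
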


\begin{proof}
$\tilde u_s=(I-H_{m'\setminus\{s\}})u_s$ is supported on the cell $(a,b)$ of $m'\setminus\{s\}$ containing $s$ and sums to zero there, so $\langle\tilde u_s,\mu^0\rangle=0$, hence $\vartheta_s(m')=0$, whenever $\mu^0$ is constant on $(a,b)$, that is whenever $(a,b)$ contains no true break. Now $(a,b)$ is the union of the two cells of $m'$ adjacent to $s$, so a true break $x$ lies in it if and only if $s$ is one of the two endpoints of the cell of $m'$ containing $x$. Each of the $k^0$ true breaks therefore renders at most two coordinates $\vartheta$-relevant.
\end{proof}

\begin{lemma}[Proxy deletion]\label{lem:proxy}
Let $2R_T\le\Delta_T$ and let $m\in\mathcal M_T$ satisfy $U_m:=U_m(R_T)\ge1$. Let $\mathcal C$ be the set of cells of $m$ containing a break uncovered at radius $R_T$, let $J$ collect the true breaks lying in cells of $\mathcal C$ that are uncovered at the \emph{proxy} radius $r_T$ of \eqref{eq:twoscale} and are not already in $m$, and define $\psi(m)$ by deleting from $m\cup J$, greedily until none remains, any $s\in m\setminus J$ that is $\vartheta$-relevant in $m$ but not in the current model, and put $\mathcal P:=m\setminus\psi(m)$. Then
\begin{enumerate}
\item[\textup{(i)}] $|\mathcal C|\le U_m$, $|J|\le3U_m\le3k^0$ and, by
Lemma~\ref{lem:budget}, $|\mathcal P|\le2k^0$, so $|m\triangle\psi(m)|\le5k^0$ and $\psi$ is at most $2^{k^0}(T+1)^{2k^0}$-to-one;
\item[\textup{(ii)}] every $x\in J$ satisfies $\vartheta_x(\psi(m))>\theta$;
\item[\textup{(iii)}] $\bigl\lVert(I-H_m)\mu^0\bigr\rVert^2
-\bigl\lVert(I-H_{\psi(m)})\mu^0\bigr\rVert^2\ \ge\ \tfrac14\gamma_T^2R_TU_m$.
\end{enumerate}
Consequently Assumption~\ref{as:nlp}(b) applies to the pair $(m,\psi(m))$.
\end{lemma}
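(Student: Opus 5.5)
The plan is to prove the three parts in order. Parts (i) and (ii) are combinatorial and geometric statements about $m\cup J$ and $\psi(m)$. Part (iii) is an energy bookkeeping argument that combines Lemma~\ref{lem:energy} with the definitions of $r_T$ and $\theta$ in \eqref{eq:twoscale}.

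For (i), I map each cell in $\mathcal C$ to one of the breaks it contains that is uncovered at radius $R_T$. Distinct cells give distinct breaks, so $|\mathcal C|\le U_m$. For $J$, consider a cell $[a,b)$ of $m$ in $\mathcal C$.
\begin{itemize}
\item Any true break in it that is covered at radius $R_T$ has an element of $m$ within $R_T$. Since the cell contains no element of $m$ in its interior, that element must be $a$ or $b$.
\item Because $2R_T\le\Delta_T$, at most one true break lies within $R_T$ of each endpoint.
\item Hence each cell holds its uncovered breaks plus at most two others, giving $|J|\le U_m+2|\mathcal C|\le 3U_m$.
\end{itemize}
Every $s\in\mathcal P$ is $\vartheta$-relevant in $m$, so Lemma~\ref{lem:budget} gives $|\mathcal P|\le 2k^0$, and therefore $|m\triangle\psi(m)|\le|J|+|\mathcal P|\le 5k^0$. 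For the multiplicity bound, $m$ is recovered from $\psi(m)$ by specifying two things: the subset $J\subseteq\mathcal T^0$, which has at most $2^{k^0}$ choices, and the at most $2k^0$ dates of $\mathcal P$, each from $\{1,\dots,T+1\}$. This gives $2^{k^0}(T+1)^{2k^0}$.

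For (ii), fix $x\in J$ and compute $\vartheta_x(\psi(m))$ through Proposition~\ref{prop:geom}(iii), using the cell $(a,b)$ of $\psi(m)\setminus\{x\}$ that contains $x$.
\begin{itemize}
\item No element of $m$ lies within $r_T$ of $x$, by the definition of $J$. Other elements of $J$ are true breaks, so they are at distance at least $\Delta_T$ from $x$. Hence $\min(x-a,b-x)\ge r_T$, and by Proposition~\ref{prop:geom}(i), $n_x\ge r_T/2$.
\item The own-break contribution is therefore at least $\gamma_T\sqrt{n_x}\ge\gamma_T\sqrt{r_T/2}$.
\item The remaining breaks in $(a,b)$ must be controlled so they cannot cancel this. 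Breaks near a flank I bound as in Lemma~\ref{lem:mismatch}, each by at most $\bar\gamma\sqrt{r}$ for the relevant distance.
\item The choice $r_T=\lceil\gamma_T^2R_T/(16\bar\gamma^2)\rceil$ should ensure these contributions eat at most half of the main term, leaving $\vartheta_x^2\ge\tfrac14\gamma_T^2 r_T\ge\theta^2$.
\end{itemize}
This is the step I expect to be the main obstacle. The difficulty is handling breaks covered only at radius between $r_T$ and $R_T$ that share the cell with $x$. I would first try to show that the greedy deletion never removes the endpoint that isolates such a break, so that each $x$ sees at most one interfering break at distance $\ge\Delta_T$. If that fails, the fallback is to lower-bound $\vartheta_x^2$ as the energy increase produced by deleting $x$. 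That increase can be evaluated on the block of width $2r_T$ around $x$, in the manner of Lemma~\ref{lem:energy}.

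For (iii), I write the change in residual energy as two pieces: the gain from adding $J$, minus the loss from deleting $\mathcal P$.
\begin{itemize}
\item Before any change, Lemma~\ref{lem:energy} gives a residual of at least $\tfrac12\gamma_T^2R_TU_m$ from the cells in $\mathcal C$.
\item After adding $J$, the residual still left in the cells of $\mathcal C$ comes only from breaks proxied within $r_T$ by an endpoint. Each such break leaves at most $\bar\gamma^2 r_T\le\gamma_T^2R_T/16+\bar\gamma^2$, and there are at most $2$ per cell.
\item Deleting $\mathcal P$ one coordinate at a time, each removed coordinate is non-relevant in the current model at the moment of removal. By the defining property of $\vartheta$, each removal therefore raises the energy by at most $\theta^2\le\gamma_T^2R_T/(32k^0)$. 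Over at most $2k^0$ deletions this totals at most $\gamma_T^2R_T/16$.
\end{itemize}
Summing the pieces and keeping the $R_TU_m$ scaling gives a net decrease of at least $\tfrac14\gamma_T^2R_TU_m$. Finally, the last sentence of the lemma follows from (ii) together with the bound $|m\triangle\psi(m)|\le 5k^0$ from (i): these show that the pair $(m,\psi(m))$ is of the type covered by Assumption~\ref{as:nlp}(b).
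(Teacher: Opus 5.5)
\paragraph{Parts (i) and (iii); where (ii) breaks down.} Your parts (i) and (iii) follow the paper's argument. That means the endpoint count $|J|\le U_m+2|\mathcal C|\le 3U_m$, the relevance budget for $\mathcal P$, and the three-term energy bookkeeping $\tfrac12-\tfrac18-\tfrac1{16}\ge\tfrac14$. Your explicit reconstruction $m=(\psi(m)\setminus J)\cup\mathcal P$ for the multiplicity, and your tracking of the ceiling in $r_T$, are if anything more careful than the paper. The gap is (ii). You leave it open, and the route you sketch would not close it. Your argument only gives $n_x\ge r_T/2$, so the main term $|\gamma^0_x|\sqrt{n_x}$ can be as small as about $\gamma_T\sqrt{r_T/2}$. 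Lemma~\ref{lem:mismatch}, by contrast, charges a break proxied within $r_T$ of a flank up to $\bar\gamma\sqrt{r_T}$. Since $\bar\gamma\ge\gamma_T$, this charge exceeds the main term. No recalibration of $r_T$ helps, because both sides scale like $\sqrt{r_T}$. Your fallback has the same defect. The energy increase from deleting $x$ is not localised to the block around $x$. Bounding the two energies separately (at least $\tfrac12\gamma_T^2r_T$ without $x$, up to $\bar\gamma^2r_T$ per proxied break with $x$) leaves a difference of indeterminate sign.

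\paragraph{The missing idea.} Use the exact kernel of Proposition~\ref{prop:geom}(ii) together with the separation of true breaks. Work in the cell $(a,b)$ of $x$ in $\psi(m)\setminus\{x\}$. A break $y\in(x,b)$ contributes $\gamma^0_y(x-a)(b-y)/(b-a)$ to $\langle\tilde u_x,\mu^0\rangle$. That is a fraction $\frac{|\gamma^0_y|}{|\gamma^0_x|}\cdot\frac{b-y}{b-x}$ of the main term $\gamma^0_x n_x$; for $y\in(a,x)$, replace $\frac{b-y}{b-x}$ by $\frac{y-a}{x-a}$. For a break proxied within $r_T$ of $b$ one has $b-x>y-x\ge\Delta_T\ge 2R_T$. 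By \eqref{eq:twoscale} the fraction is then at most $\frac{\bar\gamma}{\gamma_T}\cdot\frac{r_T}{2R_T}\le\frac1{32}+o(1)$. There is at most one such break per flank, so $\vartheta_x\ge(\tfrac{15}{16}-o(1))\gamma_T\sqrt{r_T/2}>\tfrac12\gamma_T\sqrt{r_T}\ge\theta$. The smallness comes from the distance between $x$ and the proxied break, not from $r_T$. This is what the paper's phrase ``$\mathbb E[\hat\gamma_x]=\gamma^0_x$ up to the contribution of proxied breaks'' encodes.

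\paragraph{Two further points.} First, the breaks you single out, covered at a radius between $r_T$ and $R_T$, are themselves in $J$ when they lie in a cell of $\mathcal C$. They are therefore flanks, not interior points, and can enter the cell of $x$ only if the greedy step deletes a separating date of $m$. The paper takes for granted that every other break in the cell of $x$ sits near a flank (``contains no other true break with unexplained energy''). Your first plan, showing the greedy step never removes such a date, is the right place to justify it. Second, the concluding sentence needs more than (ii) and $|m\triangle\psi(m)|\le 5k^0$. Assumption~\ref{as:nlp}(b) also requires every retained coordinate that was $\vartheta$-relevant in $m$ to be $\vartheta$-relevant in $\psi(m)$. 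That is exactly the stopping rule of the greedy deletion.
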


\begin{proof}
(i) Each cell of $\mathcal C$ holds an uncovered break, so $|\mathcal C|\le U_m$. Since $\Delta_T>2R_T$ a cell holds at most one break within $R_T$ of each endpoint, so at most two beyond the uncovered ones, giving $|J|\le3U_m$. Only coordinates that are $\vartheta$-relevant in $m$ are ever deleted, and there are at most $2k^0$ of them by Lemma~\ref{lem:budget}, which bounds $|\mathcal P|$ and, since $\mathcal P$ is a subset of a set of size at most $2k^0$, the multiplicity of $\psi$.

(ii) Every $x\in J$ is uncovered at radius $r_T$, so no element of $m$ lies within $r_T$ of $x$, and the other elements of $J$ are true breaks, at distance at least $\Delta_T\ge2R_T\ge2r_T$. Hence both flanking dates of $x$ in $\psi(m)\setminus\{x\}$ are at distance at least $r_T$ and $n_x\ge r_T/2$ by Proposition~\ref{prop:geom}(i). The cell of $x$ in $\psi(m)\setminus\{x\}$ contains no other true break with unexplained energy, so $\mathbb E[\hat\gamma_x]=\gamma^0_x$ up to the contribution of proxied breaks, and $\vartheta_x\ge\gamma_T\sqrt{r_T/2}>\theta$ by \eqref{eq:twoscale}. This is what the proxy radius buys: were $J$ to contain breaks already covered at radius $r_T$, such an $x$ could have $n_x=O(1)$ and fail to be $\vartheta$-relevant.

(iii) Three terms. The cells of $\mathcal C$ carry energy at least $\tfrac12\gamma_T^2R_TU_m$, because the windows of Lemma~\ref{lem:energy} for the uncovered breaks lie inside them. Adding $J$ removes all of that except the part attributable to breaks proxied within $r_T$. Each such break leaves at most $\bar\gamma^2r_T\le\tfrac1{16}\gamma_T^2R_T$ by Proposition~\ref{prop:geom}(i) and \eqref{eq:twoscale}, and there are at most two per cell of $\mathcal C$, so the leftover is at most $\tfrac18\gamma_T^2R_TU_m$. Each deletion costs exactly $\vartheta_s^2\le\theta^2$, because $H_{m'}=H_{m'\setminus\{s\}}+n_s^{-1}\tilde u_s\tilde u_s'$ and the rule fires only at coordinates with $\vartheta\le\theta$. Only coordinates $\vartheta$-relevant in $m$ are ever deleted, each at most once, so Lemma~\ref{lem:budget} caps their number at $2k^0$ \emph{however many rounds the rule takes}, and the loss is at most $2k^0\theta^2\le\tfrac1{16}\gamma_T^2R_T\le\tfrac1{16}\gamma_T^2R_TU_m$ by the second branch of \eqref{eq:twoscale}. Collecting, $(\tfrac12-\tfrac18-\tfrac1{16})\gamma_T^2R_TU_m\ge\tfrac14\gamma_T^2R_TU_m$.
\end{proof}

\subsection{The conditional Bayes factor}

Conditionally on $\delta^\gamma_{-s}$ and $(\beta,\sigma^2)$, define the \emph{scalar} Bayes factor
\begin{equation}\label{eq:fullcond}
 \mathrm{BF}_s\;:=\;\frac{\int\phi(\hat\gamma_s;\gamma,\sigma^2/n_s)\pi_I(\gamma)\,d\gamma}
 {\phi(\hat\gamma_s;0,\sigma^2/n_s)},\qquad
 \varrho:=\frac{p(m^{-s}\cup\{s\})}{p(m^{-s})} .
\end{equation}

\begin{remark}[What \eqref{eq:fullcond} is, and is not]\label{rem:whatBF}
It is tempting to read the Gibbs step for $\delta^\gamma_s$ as a two-hypothesis problem with $\hat\gamma_s$ sufficient, that is, to set $\Pi(\delta^\gamma_s=1\mid z,\delta^\gamma_{-s}) =\varrho\,\mathrm{BF}_s/(1+\varrho\,\mathrm{BF}_s)$ exactly. This holds only when every conditioned-on coefficient carries a flat prior, because $\tilde u_s=(I-H_{m^{-s}})u_s$ \emph{projects out} the other included dates. In BISAM those coefficients carry iMOM slabs and $\tilde u_s\not\perp u_{s'}$, so the marginal likelihoods do not factor and $\hat\gamma_s$ is not sufficient. The gap is bridged in two steps: the \emph{local} half of the Rossell--Telesca factorisation is exactly scalar (Lemma~\ref{lem:local}), and the remaining discrepancy is an assumption on the non-local factor (Assumption~\ref{as:nlp}).
\end{remark}

\begin{lemma}[The local factor is exactly scalar]\label{lem:local}
Here and below $M^{L}$ is the \emph{local} half of the Rossell--Telesca factorisation $M=M^{L}G_m$ \citep[Proposition~1]{rossell2017nonlocal}, i.e.\ the marginal likelihood under the corresponding local $g$-/$f$-prior slab. Let $\tilde X_m$ collect the columns $(I-H_{\{\,\}})u_s$, $s\in m$, residualised on the constant, and let the local slab be Zellner's $g$-prior $\gamma_m\mid m\sim\mathcal N\bigl(0,\,g\sigma^2(\tilde X_m'\tilde X_m)^{-1}\bigr)$ with a flat intercept, so that $M^L(z\mid m)$ is the corresponding marginal likelihood. Then for every $m$ and every $s\notin m$,
\begin{equation}\label{eq:localBF}
 \log\frac{M^L(z\mid m\cup\{s\})}{M^L(z\mid m)}
 \;=\;-\tfrac12\log(1+g)\;+\;\frac{g}{1+g}\,\frac{t_s^2}{2},
\end{equation}
with $t_s$ as in \eqref{eq:condquant}: the local factor depends on $(m,s)$ only through the scalar $t_s$.
\end{lemma}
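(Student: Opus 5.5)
The plan is to compute $M^L(z\mid m)$ in closed form for the $g$-prior with known $\sigma^2$, and then take the ratio for $m\cup\{s\}$ against $m$. First I integrate out the flat intercept. This replaces $z$ by $(I-H_{\{\,\}})z$, i.e.\ $z$ centred, and the columns by $\tilde X_m$. It contributes a factor that does not depend on $m$: the same $T^{-1/2}$ Jacobian and the same $\sigma$-power appear for every configuration, so these cancel in the ratio.

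Given $m$, the centred data are $\mathcal N\bigl(\tilde X_m\gamma_m,\sigma^2 I\bigr)$ restricted to the orthocomplement of the constant. Under $\gamma_m\sim\mathcal N\bigl(0,g\sigma^2(\tilde X_m'\tilde X_m)^{-1}\bigr)$, the marginal covariance is $\sigma^2(I+gP_m)$, where $P_m=\tilde X_m(\tilde X_m'\tilde X_m)^{-1}\tilde X_m'$ is the orthogonal projection onto the span of the residualised columns. Since $P_m$ is idempotent, I use
\[
\det(I+gP_m)=(1+g)^{|m|},\qquad (I+gP_m)^{-1}=I-\tfrac{g}{1+g}P_m .
\]
Both identities follow from diagonalising $P_m$, assuming $\tilde X_m$ has full column rank. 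That holds for distinct candidate dates in $\{3,\dots,T-1\}$. This gives
\[
\log M^L(z\mid m)=C-\tfrac{|m|}{2}\log(1+g)-\frac{\lVert(I-H_{\{\,\}})z\rVert^2}{2\sigma^2}+\frac{g}{1+g}\,\frac{\lVert P_m z\rVert^2}{2\sigma^2},
\]
with $C$ free of $m$.

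Next I take the difference for $m\cup\{s\}$ and $m$. The $|m|$ term yields $-\tfrac12\log(1+g)$. It remains to show $\lVert P_{m\cup\{s\}}z\rVert^2-\lVert P_m z\rVert^2=\sigma^2t_s^2$. Here $V_m=\operatorname{span}(P_m)\oplus\operatorname{span}(\mathbf 1)$, so $H_m=P_m+H_{\{\,\}}$. By Gram--Schmidt, $H_{m\cup\{s\}}=H_m+n_s^{-1}\tilde u_s\tilde u_s'$ with $\tilde u_s=(I-H_m)u_s$; this is the same rank-one update already used in Lemma~\ref{lem:proxy}(iii). Because $s\notin m$, the set $m^{-s}$ in \eqref{eq:condquant} is exactly $m$, so
\[
\lVert P_{m\cup\{s\}}z\rVert^2-\lVert P_mz\rVert^2=\frac{\langle\tilde u_s,z\rangle^2}{n_s}=n_s\hat\gamma_s^2=\sigma^2t_s^2 .
\]
Substituting gives \eqref{eq:localBF}.

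The only delicate step is the bookkeeping of the flat intercept. The flat prior is improper, so I need to confirm that the improper constant is identical across $m$; it is, because the intercept's column and prior do not depend on $m$. I also need the residualisation in the $g$-prior covariance to be on the constant only, so that $H_m=P_m+H_{\{\,\}}$ holds exactly. Everything else is linear algebra for projections.
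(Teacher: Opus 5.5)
Your proof is correct and is essentially the paper's argument. The paper likewise writes the marginal law on the orthocomplement of the constant as $\mathcal N(0,\sigma^2(I+gH_m))$, which gives the closed form with the terms $|m|\log(1+g)$ and $\tfrac{g}{1+g}\lVert H_m z\rVert^2$. It then uses the rank-one projection update, which it attributes to Proposition~\ref{prop:geom}(i), to identify the fit difference as $\langle\tilde u_s,z\rangle^2/n_s=\sigma^2t_s^2$. Your explicit handling of the flat intercept and your full-rank check only spell out what the paper leaves inside its $m$-free constant.
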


\begin{proof}
On the orthogonal complement of the constant the marginal law of $z$ under model $m$ is $\mathcal N(0,\sigma^2(I+gH_m))$, whence $-2\log M^L(z\mid m)=\text{const}+|m|\log(1+g) +\sigma^{-2}\{\lVert z\rVert^2-\tfrac{g}{1+g}\lVert H_mz\rVert^2\}$, the constant not depending on $m$. This display holds for \emph{every} configuration and is used again, for a non-nested pair, in the proof of Proposition~\ref{prop:coverage}. By Proposition~\ref{prop:geom}(i), $\lVert H_{m\cup\{s\}}z\rVert^2-\lVert H_mz\rVert^2 =\langle\tilde u_s,z\rangle^2/n_s=\sigma^2t_s^2$. Subtracting gives \eqref{eq:localBF}.
\end{proof}

\begin{assumption}[Non-local factorisation]\label{as:nlp}
Write $M(z\mid m)=M^{L}(z\mid m)\,G_m(z)$ as in \citet[Proposition~1]{rossell2017nonlocal}. Let $\vartheta$-relevance be as in \eqref{eq:theta}, so that a $\vartheta$-relevant coordinate is one whose deletion costs more than $\theta^2=\tfrac14\gamma_T^2\min\{r_T,R_T/(8k^0)\}$ of fit to $\mu^0$, with $\theta$ as in \eqref{eq:twoscale}. There are constants $0<c_-\le c_+<\infty$, not depending on $(N,T)$, such that for all $m,m'\in\mathcal M_T$:
\begin{enumerate}
\item[\textup{(a)}] \emph{(one coordinate)} for $s\notin m$ with $m\cup\{s\}\in\mathcal M_T$,
 \[
  \frac{M(z\mid m\cup\{s\})}{M(z\mid m)}\;\le\;c_+\,\mathrm{BF}_s,
  \qquad\text{and}\qquad
  \frac{M(z\mid m\cup\{s\})}{M(z\mid m)}\;\ge\;c_-\,\mathrm{BF}_s
 \]
if $s$ is $\vartheta$-relevant in $m\cup\{s\}$.
\item[\textup{(b)}] \emph{(no new spurious coordinate)} $G_{m'}(z)/G_m(z)\ge c_-$
whenever $|m\triangle m'|\le5k^0$ and every $s\in m'$ that is either new ($s\notin m$) or $\vartheta$-relevant in $m$ is $\vartheta$-relevant in $m'$. Coordinates of $m\setminus m'$ are unrestricted, and a coordinate that is spurious in $m$ and $\vartheta$-relevant in $m'$ is harmless: by \citet[Proposition~1]{rossell2017nonlocal} only a \emph{vanishing} factor in the numerator can drive the ratio down, so what has to be excluded is a coordinate that is relevant in $m$ and spurious in $m'$.
\end{enumerate}
\end{assumption}

\begin{remark}[Status of Assumption~\ref{as:nlp}]\label{rem:nlp}
Part~(a) is what \citet[Proposition~1]{rossell2017nonlocal} delivers: $G_m$ converges to a strictly positive limit when the conditioned-on columns are relevant and identified, so the ratio $G_{m\cup\{s\}}/G_m$ tracks its one-coordinate counterpart. The upper bound needs no relevance condition, since a spurious added column only drives the true ratio further down. It is also well supported numerically. Comparing $\mathrm{BF}_s$ with the exact collapsed ratio (Gauss--Hermite quadrature, $300$ nodes, calibrated to machine accuracy on the case $m^{-s}=\varnothing$ in which \eqref{eq:fullcond} is exact), the log-discrepancy for an isolated $h$-far candidate conditioned on a correctly placed break of $3\sigma$ has mean $0.031,0.049,0.008,0.017$ and standard deviation $0.13,0.11,0.10,0.08$ at $T=100,200,400,800$: small and \emph{shrinking} in $T$. For a candidate adjacent to an included date, the case Assumption~\ref{as:space} removes, it is $\approx0.4$ and stable. When the conditioned-on coefficient is weakly identified ($0.8\sigma$) it is $O(1)$ with standard deviation $\approx2$, which is why the signal floor $\gamma_T\ge\underline\gamma>0$ of Assumption~\ref{as:design} is needed.

Part~(b) is the \emph{status-preserving} form, and it is the weakest thing that will do. What must be excluded is a coordinate that is relevant in one model and spurious in the other: the product-iMOM prior then charges the full non-local rate for it, so that the $G$-ratio decays like $\exp\{-\beta_k^*(\tau n)^{k/(k+1)}\}$. Measured on the design of \eqref{eq:dgp} with one $3\sigma$ break and one far included date, the log-gap between the projection-based Bayes factor and the exact ratio for the move $m\mapsto m\cup\mathcal T^0$ is $4.4,7.2,11.6,16.9$ at $T=100,200,400,800$, of order $\sqrt{\tau n}$, whereas the energy gain available is of order $\gamma_T^2R_T/\sigma^2\asymp\log(KT)$. That gap is removed rather than assumed away: Lemma~\ref{lem:proxy} deletes exactly the coordinates whose status would change, at a cost of at most one quarter of the energy gained, so part~(b) is only ever invoked on comparisons in which no status changes. Both models may still contain spurious far dates, and these cancel between numerator and denominator. This is why $\mathcal P$ is selected by relevance and not by proximity.

That a multi-coordinate comparison is needed at all is not an artefact of the proof, and the within-cell geometry shows why. Inside a cell $[a,b)$ of length $L$ containing true breaks at $d_i:=x_i-a$ with jumps $\gamma_i$ and gaps $h_i:=d_i-d_{i-1}$ ($d_0:=0$, $d_{p+1}:=L$), Proposition~\ref{prop:geom}(ii) says the Gram matrix of the residualised regressors is the \emph{Brownian-bridge kernel},
\begin{equation}\label{eq:bridge}
 \langle\tilde u_{x_i},\tilde u_{x_{i'}}\rangle
 =\frac{\min(d_i,d_{i'})\,\bigl(L-\max(d_i,d_{i'})\bigr)}{L}=:K_{ii'} ,
\end{equation}
whose inverse is tridiagonal with $(K^{-1})_{ii}=h_i^{-1}+h_{i+1}^{-1}$ and $(K^{-1})_{i,i+1}=-h_{i+1}^{-1}$. Writing $v:=K\gamma$, so that $v_i=\langle\tilde u_{x_i},\mu^0\rangle$, this yields three exact identities. First, $v$ is the piecewise-linear function pinned at $0$ at both cell ends whose slope decrements are the $\gamma_i$. Second, the energy of Lemma~\ref{lem:energy} is its Dirichlet energy,
\begin{equation}\label{eq:dirichlet}
 \bigl\lVert(I-H_m)\mu^0\bigr\rVert^2\Big|_{\text{cell}}
 \;=\;\sum_{i=1}^{p+1}\frac{(v_i-v_{i-1})^2}{h_i}.
\end{equation}
Third, $\mathbb E[t_{x_i}]=v_i/(\sigma\sqrt{n_i})$ is the CUSUM statistic of $\mu^0$ at the split $x_i$.

This settles whether part~(b) can be avoided by a single-coordinate argument: it cannot. Take a cell of length $L$ containing two breaks at $L/2$ and $L/2+\Delta_T$ whose magnitudes are chosen so that $v(L/2)=0$. Both are uncovered at radius $R<\Delta_T/2$, both satisfy $|\gamma_i|\ge\gamma_T$, and \eqref{eq:dirichlet} gives energy $\gamma_T^2\Delta_T\{1+o(1)\}$, comfortably above the floor $\tfrac12\gamma_T^2R\,U_m$ of Lemma~\ref{lem:energy}. Yet $\max_{s}|\mathbb E[t_s]|$, over \emph{all} candidate dates and not merely the true ones, equals $2\gamma_T\Delta_T/\sqrt L\{1+o(1)\}$, which tends to zero: with $\Delta_T=100$, $R=10$ and $\gamma_T=\sigma=1$ it is $6.46,2.00,0.63,0.20,0.063$ at $L=10^3,\dots,10^7$ against $\gamma_T\sqrt R=3.16$. This is the classical cancellation of the single-split CUSUM at two nearby breaks of opposite sign. No single-coordinate comparison detects such a configuration, whereas the \emph{two}-coordinate move has $\log\mathrm{BF}\approx\gamma_T^2\Delta_T/2\sigma^2$. A multi-coordinate comparison is therefore intrinsic to coverage rather than an artefact of the proof, and no lower bound of the form $\max_i|\mathbb E[t_{x_i}]|\gtrsim\gamma_T\sqrt{\Delta_T}/\sigma$, or even $\gtrsim\gamma_T\sqrt R/\sigma$, is available.

The same construction shows why Lemma~\ref{lem:proxy} takes the form it does. The detecting move needs only the \emph{cancelling group}: the set $J$ of breaks that are uncovered at the proxy radius and share a cell, of size two in the example and at most $3U_m\le3k^0$ in general, rather than all of $\mathcal T^0$. Redundancy, the mechanism that makes part~(b) fail, is then confined to the dates of $m$ that were proxying for the breaks in $J$, and these are deleted in the same move. Comparing $m$ with $\psi(m)=(m\setminus\mathcal P)\cup J$ creates no coordinate whose relevance status changes, and by Lemma~\ref{lem:proxy}(i) the deletion set satisfies $|\mathcal P|\le2k^0$, so the comparison preserves the prior odds up to $\max(\bar\varrho,1/\varrho'_T)^{5k^0}$ and is at most $2^{k^0}(T+1)^{2k^0}$-to-one, a multiplicity absorbed by $C_\ast$ in \eqref{eq:RT}. The deterministic statement this requires, namely that deleting the proxy dates after inserting $J$ costs at most a fixed fraction of the energy gained, is Lemma~\ref{lem:proxy}(iii), whose proof turns on the relevance budget of Lemma~\ref{lem:budget} and the two-scale calibration \eqref{eq:twoscale}. The simpler device of a size-preserving swap, moving the date nearest an uncovered break onto it, will not do: such a swap can abandon a second break in the same cell and lose fit outright, which is why $\mathcal P$ is selected by redundancy rather than by proximity.
\end{remark}

Under Assumption~\ref{as:nlp}(a), that is, using the one-coordinate form of the Rossell--Telesca factorisation and nothing else about the joint posterior, the inclusion probability obeys the two-sided bound
\begin{equation}\label{eq:incl}
 \frac{c_-\varrho\,\mathrm{BF}_s}{1+c_-\varrho\,\mathrm{BF}_s}
 \;\le\;\Pi(\delta^\gamma_s=1\mid z,\delta^\gamma_{-s})
 \;\le\;c_+\,\varrho\,\mathrm{BF}_s ,
\end{equation}
and every posterior odds identity below holds up to the factors $c_\pm$.

\begin{lemma}[Scalar iMOM Bayes factor]\label{lem:scalar}
Let $\pi_I$ be the iMOM density of \eqref{eq_pimom} with parameters $(k,\nu,\tau)$ and put $p(w):=\frac{k}{\Gamma(\nu/2k)}|w|^{-(\nu+1)}e^{-|w|^{-2k}}$, a probability density on $\mathbb R$. Then, with $t=t_s$ and $\lambda=\lambda_s$,
\begin{equation}\label{eq:bfrep}
 \mathrm{BF}_s=\int p(w)\exp\bigl\{tw\sqrt\lambda-\tfrac12\lambda w^2\bigr\}dw
 =e^{t^2/2}\int p(w)e^{-\frac12(w\sqrt\lambda-t)^2}dw .
\end{equation}
Let $\beta_k:=(8k)^{-k/(k+1)}+\tfrac18(8k)^{1/(k+1)}$ and suppose
\begin{equation}\label{eq:sidecond}
 (2\beta_k)^{-k}\;\ge\;\beta_k/2 .
\end{equation}
Then there are $A\le1$ and $\lambda_0<\infty$, depending only on $(k,\nu)$, such that
\begin{enumerate}
\item[\textup{(i)}] if $\lambda\ge\lambda_0$, $|t|\le t_0$ and
$t_0^2\le\tfrac12\beta_k\lambda^{k/(k+1)}$ then $\mathrm{BF}_s\le A\lambda^{\nu/2}e^{t_0^2/2}\exp\{-\tfrac12\beta_k\lambda^{k/(k+1)}\}$;
\item[\textup{(ii)}] for every $t$ and every $\lambda>0$, with $w_*:=t/\sqrt\lambda$,
$\log\mathrm{BF}_s\ge\tfrac12t^2-\tfrac12\log\lambda-\tfrac12 +\log\bigl(2\min_{|w-w_*|\le\lambda^{-1/2}}p(w)\bigr)$.
\end{enumerate}
\end{lemma}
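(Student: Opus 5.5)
The plan is to prove the representation \eqref{eq:bfrep} by a change of variables. Part (ii) then follows from a one-line restriction of the integral. Part (i) is a Laplace-type split of the $w$-axis, and that is where all the work is.

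\textbf{Representation.} Starting from \eqref{eq:fullcond}, the Gaussian likelihood ratio is
\[
\phi(\hat\gamma_s;\gamma,\sigma^2/n_s)/\phi(\hat\gamma_s;0,\sigma^2/n_s)=\exp\{n_s\hat\gamma_s\gamma/\sigma^2-n_s\gamma^2/(2\sigma^2)\}.
\]
Substitute $\gamma=w\sqrt\tau\,\sigma$. Using $t_s=\hat\gamma_s\sqrt{n_s}/\sigma$ and $\lambda_s=\tau n_s$ from \eqref{eq:condquant}, the exponent becomes $tw\sqrt\lambda-\tfrac12\lambda w^2$. For the prior, \eqref{eq_pimom} with $\gamma_0=0$ gives
\[
\pi_I(\gamma)\,d\gamma=\frac{k(\tau\sigma^2)^{\nu/2}}{\Gamma(\nu/2k)}(\sqrt\tau\sigma)^{-(\nu+1)}|w|^{-(\nu+1)}e^{-|w|^{-2k}}\sqrt\tau\sigma\,dw=p(w)\,dw,
\]
so all scale factors cancel. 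This gives the first form. Completing the square, $tw\sqrt\lambda-\tfrac12\lambda w^2=\tfrac12t^2-\tfrac12(w\sqrt\lambda-t)^2$, gives the second. That $p$ integrates to one is the normalisation of the iMOM density at $\tau\sigma^2=1$.

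\textbf{Part (ii).} Restrict the second integral in \eqref{eq:bfrep} to the interval $|w-w_*|\le\lambda^{-1/2}$. On this interval $(w\sqrt\lambda-t)^2\le1$, so the Gaussian factor is at least $e^{-1/2}$ and $p(w)$ is at least its minimum there. The interval has length $2\lambda^{-1/2}$. Taking logarithms gives exactly the stated bound. This holds for every $t$ and $\lambda$ because nothing asymptotic is used.

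\textbf{Part (i).} Bound the integrand of the first form by $p(w)\exp\{t_0|w|\sqrt\lambda-\tfrac12\lambda w^2\}$. Fix the cutoff $w_\lambda:=(8k)^{1/(2(k+1))}\lambda^{-1/(2(k+1))}$. It is chosen so that
\[
w_\lambda^{-2k}=(8k)^{-k/(k+1)}\lambda^{k/(k+1)},\qquad \tfrac18\lambda w_\lambda^2=\tfrac18(8k)^{1/(k+1)}\lambda^{k/(k+1)},
\]
and the sum of these two is exactly $\beta_k\lambda^{k/(k+1)}$; this is how $\beta_k$ arises.

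On $|w|>w_\lambda$, the hypothesis $t_0^2\le\tfrac12\beta_k\lambda^{k/(k+1)}$ makes $t_0$ small compared with $w_\lambda\sqrt\lambda\asymp\lambda^{k/(2(k+1))}$. Then $t_0|w|\sqrt\lambda-\tfrac12\lambda w^2\le-\tfrac14\lambda w^2$ except on a part of the range that I absorb through $ab\le\tfrac12a^2+\tfrac12b^2$, which produces the factor $e^{t_0^2/2}$. What remains is at most $e^{t_0^2/2}\exp\{-c\lambda w_\lambda^2\}$, using $\int p=1$.

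On $|w|\le w_\lambda$, bound $e^{-|w|^{-2k}}\le e^{-w_\lambda^{-2k}}$ where this is monotone. Bound the tilt by $e^{t_0w_\lambda\sqrt\lambda}$ and then by Young's inequality. The polynomial factor $|w|^{-(\nu+1)}$ is handled by splitting off the innermost region $|w|\lesssim\lambda^{-1/2}$ and integrating $|w|^{-(\nu+1)}e^{-|w|^{-2k}}$ there. This is the source of the $\lambda^{\nu/2}$ prefactor and of $\lambda_0$. The constant $A$ collects the Gamma normalisation and the two halves, with $\lambda_0$ chosen large enough to make $A\le1$.

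\textbf{Main obstacle.} The delicate step is matching the exponents so that both regions deliver at least $\tfrac12\beta_k\lambda^{k/(k+1)}$ after paying for the tilt $t_0^2/2$. On the inner region this requires $w_\lambda^{-2k}$, reduced by the tilt cost, to dominate $\tfrac12\beta_k\lambda^{k/(k+1)}$. Rewriting this in units of $\lambda^{k/(k+1)}$ is where I expect the side condition \eqref{eq:sidecond}, $(2\beta_k)^{-k}\ge\beta_k/2$, to enter. My first attempt would place the cutoff at the level $|w|^{-2k}=2\beta_k\lambda^{k/(k+1)}$, with the factor $2$ absorbing the tilt. I would check that \eqref{eq:sidecond} is exactly what makes the inner-region exponent exceed $\tfrac12\beta_k$, and adjust the constants of the split until the two regions balance.
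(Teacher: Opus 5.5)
Your change of variables for \eqref{eq:bfrep} and your restriction argument for part~(ii) are the same as the paper's. Part~(i), however, has a genuine gap. It is structural and cannot be fixed by tuning constants: you bound the tilt $e^{t_0|w|\sqrt\lambda}$ separately from the Gaussian factor, and this discards exactly the decay you need.

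First, $t_0$ is \emph{not} small relative to $w_\lambda\sqrt\lambda$. Both are of order $\lambda^{k/(2(k+1))}$, and at $k=1$ the hypothesis allows $t_0w_\lambda\sqrt\lambda=\sqrt\lambda$ exactly. On your inner region you drop $e^{-\frac12\lambda w^2}$ but keep a tilt of size up to $\sqrt\lambda$, while the prior penalty is only $w_\lambda^{-2}=8^{-1/2}\sqrt\lambda\approx0.35\sqrt\lambda$. Young's inequality makes this worse. Since $\tfrac12\lambda w_\lambda^2=4k\,w_\lambda^{-2k}$, the exponent $\tfrac12t_0^2+\tfrac12\lambda w_\lambda^2-w_\lambda^{-2k}$ exceeds $\tfrac12t_0^2$ by $(4k-1)w_\lambda^{-2k}>0$ for every $k>1/4$. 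The outer region fails in the same way. Absorbing the cross term with $ab\le\tfrac12a^2+\tfrac12b^2$ cancels the entire Gaussian term. On $w_\lambda<|w|<4t_0/\sqrt\lambda$, which is nonempty for $t_0$ near its allowed maximum, only $e^{-|w|^{-2k}}$ then remains. At $k=1$ this gives $e^{-\sqrt\lambda/(8\beta_1)}\approx e^{-0.18\sqrt\lambda}$, against the required $e^{-\frac12\beta_1\sqrt\lambda}\approx e^{-0.35\sqrt\lambda}$.

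The missing idea is to complete the square \emph{before} splitting, and to split at the $t$-dependent point $w_1:=2|t|/\sqrt\lambda$ rather than at the minimiser $w_\lambda$. The tilt then costs exactly $e^{t^2/2}$, once.

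For $|w|>w_1$ one has $|w\sqrt\lambda-t|\ge\tfrac12|w|\sqrt\lambda$, so the integrand is at most $e^{t^2/2}p(w)e^{-\lambda w^2/8}$. This is where the $\tfrac18$ in $\beta_k$ comes from. With $f(w)=|w|^{-2k}+\lambda w^2/8$, the bound $e^{-f}\le e^{-\frac12\min f}e^{-\frac12|w|^{-2k}}$ integrates to $2^{\nu/2k}e^{-\frac12\beta_k\lambda^{k/(k+1)}}$.

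For $|w|\le w_1$, bound the Gaussian by one. Since $w_1\to0$, monotonicity of $w^{-(\nu+1)}e^{-w^{-2k}}$ near the origin (this is what $\lambda_0$ buys) gives $\int_{|w|\le w_1}p\le\frac{2k}{\Gamma(\nu/2k)}w_1^{-\nu}e^{-w_1^{-2k}}$. Your innermost split is not needed. This bound is increasing in $|t|$ over the admissible range, so it may be evaluated at $t_0$. Then $t_0^2\le\tfrac12\beta_k\lambda^{k/(k+1)}$ gives $(\lambda/4t_0^2)^k\ge(2\beta_k)^{-k}\lambda^{k/(k+1)}$.

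The side condition \eqref{eq:sidecond} says exactly that this last quantity is at least $\tfrac12\beta_k\lambda^{k/(k+1)}$. It does not arise at your guessed level $|w|^{-2k}=2\beta_k\lambda^{k/(k+1)}$, which in general differs from $(2\beta_k)^{-k}\lambda^{k/(k+1)}$.
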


\begin{proof}
\emph{Representation.} Substitute $\gamma=\sigma\sqrt\tau w$ in \eqref{eq:fullcond}. The Gaussian ratio is $\exp\{n_s\hat\gamma_s\gamma/\sigma^2-n_s\gamma^2/2\sigma^2\} =\exp\{tw\sqrt\lambda-\tfrac12\lambda w^2\}$ since $t\sqrt\lambda=n_s\hat\gamma_s\sqrt\tau/\sigma$. In the prior, $|\gamma|^{-(\nu+1)}=(\sigma\sqrt\tau)^{-(\nu+1)}|w|^{-(\nu+1)}$, $(\tau\sigma^2/\gamma^2)^k=|w|^{-2k}$ and $d\gamma=\sigma\sqrt\tau\,dw$, so every power of $\sigma\sqrt\tau$ cancels and $\pi_I(\gamma)d\gamma=p(w)dw$. That $p$ integrates to one follows from $u=w^{-2k}$, which gives $2\int_0^\infty w^{-(\nu+1)}e^{-w^{-2k}}dw=\Gamma(\nu/2k)/k$. Completing the square gives the second form.

\emph{(i).} Split \eqref{eq:bfrep} at $w_1:=2|t|/\sqrt\lambda$. On $\{|w|>w_1\}$, $|w\sqrt\lambda-t|\ge\tfrac12|w|\sqrt\lambda$, so the integrand is at most $e^{t^2/2}p(w)e^{-\lambda w^2/8}$. Put $f(w)=|w|^{-2k}+\lambda w^2/8$. Solving $f'=0$ gives $w^{2k+2}=8k/\lambda$ and $\min f=\beta_k\lambda^{k/(k+1)}$. Writing $e^{-f}\le e^{-\frac12\min f}e^{-\frac12f}$ and $e^{-f/2}\le e^{-\frac12|w|^{-2k}}$,
\[
 \int p(w)e^{-\lambda w^2/8}dw\le e^{-\frac12\beta_k\lambda^{k/(k+1)}}
 \frac{k}{\Gamma(\nu/2k)}\int|w|^{-(\nu+1)}e^{-\frac12|w|^{-2k}}dw
 =2^{\nu/2k}e^{-\frac12\beta_k\lambda^{k/(k+1)}} .
\]
On $\{|w|\le w_1\}$ bound $e^{-\frac12(\cdot)^2}\le1$. Since $w\mapsto w^{-(\nu+1)}e^{-w^{-2k}}$ increases on $(0,(2k/(\nu+1))^{1/2k})$ and $w_1^2\le2\beta_k\lambda^{-1/(k+1)}\to0$, for $\lambda\ge\lambda_0(k,\nu)$ the interval $(0,w_1]$ lies in that range and
\[
 \int_{|w|\le w_1}p(w)dw\le\frac{2k}{\Gamma(\nu/2k)}w_1^{-\nu}e^{-w_1^{-2k}}
 =\frac{2k}{\Gamma(\nu/2k)}\Bigl(\frac{\sqrt\lambda}{2|t|}\Bigr)^\nu
 \exp\Bigl\{-\Bigl(\frac{\lambda}{4t^2}\Bigr)^k\Bigr\}.
\]
The right-hand side is increasing in $|t|$ over the admissible range, so it may be evaluated at $|t|=t_0$, and $t_0^2\le\tfrac12\beta_k\lambda^{k/(k+1)}$ gives $\lambda/(4t_0^2)\ge(2\beta_k)^{-1}\lambda^{1/(k+1)}$, whence by \eqref{eq:sidecond} $(\lambda/4t_0^2)^k\ge(2\beta_k)^{-k}\lambda^{k/(k+1)}\ge\tfrac12\beta_k\lambda^{k/(k+1)}$. Both pieces are therefore bounded by a constant times $\lambda^{\nu/2}e^{-\frac12\beta_k\lambda^{k/(k+1)}}$, and $e^{t^2/2}\le e^{t_0^2/2}$.

\emph{(ii).} Restrict \eqref{eq:bfrep} to $|w-w_*|\le\lambda^{-1/2}$, an interval of length $2\lambda^{-1/2}$ on which $\tfrac12(w\sqrt\lambda-t)^2\le\tfrac12$.
\end{proof}

\begin{remark}[Constants]\label{rem:constants}
Condition \eqref{eq:sidecond} holds for every $k\in[1/4,5]$, and at $k=1,2,3$ its two sides are $(0.707,0.354)$, $(1.120,0.236)$ and $(2.490,0.184)$. Numerically the bound in (i) holds with $A\le1$ for all $\lambda\ge1$ and is slack: at $k=\nu=1$ the smallest admissible $A$ is $0.37$ at $\lambda=1$, $3.8\times10^{-4}$ at $\lambda=10^2$ and $4.9\times10^{-12}$ at $\lambda=2\times10^{3}$, and the largest value required over $(k,\nu)\in\{(1,1),(1,3),(2,1)\}$ and $\lambda\ge1$ is $0.84$. Part~(ii) holds with slack at least $0.8$. Lemma~\ref{lem:scalar}(i) reproduces the \emph{rate} $\log\mathrm{BF}/n^{k/(k+1)}\to c<0$ of \citet[eq.~11]{rossell2010nlp_test}, but not its constant: the sharp Laplace constant is $\beta^*_k=\tfrac12(2k)^{1/(k+1)}+(2k)^{-k/(k+1)}$, so that $\log\mathrm{BF}_s=-\beta^*_k\lambda^{k/(k+1)}\{1+o(1)\}$ at $t=O(1)$, with $\beta^*_1=\sqrt2$. The exponent constant $\tfrac12\beta_k$ used above is smaller by a factor $4.0$ at $k=1$ ($5.0$ at $k=2$). Two consequences: the bound is conservative and valid, and the finite-sample content of Theorem~\ref{thm:gauge} is weaker than the asymptotics suggest. With $\tau=3.32$, $\underline n_T=T/4$ and $h\asymp\log(NT)$ the right-hand side of \eqref{eq:gaugebound} only falls below one around $T\sim10^{5}$, or around $T\sim6\times10^{3}$ if the sharp constant were used. The measured slope in Figure~\ref{fig:prior_rates}(a), $-1.55\sqrt T$, is consistent with $-\beta_1^*\sqrt{\tau n_s}$ at $n_s\approx0.36\,T$.
\end{remark}

Under a pMOM slab the same representation with $p(w)\propto w^{2k}\phi(w)$ gives $\mathrm{BF}_s=O_p(\lambda^{-k-1/2})$, and under any local slab $\mathrm{BF}_s\asymp\lambda^{-1/2}$ with the rate tight. These two rates are taken without proof from \citet[p.~148 and eq.~(10)ff]{rossell2010nlp_test}, who also show that the local rate is tight.

\subsection{Assumptions}

\begin{assumption}[Design]\label{as:design}
The break magnitudes satisfy $0<\underline\gamma\le\gamma_T\le\bar\gamma<\infty$. The spacing $\Delta_T$ is as in \eqref{eq:spacing} and satisfies $\Delta_T\big/(\log NT)^{(k+1)/k}\to\infty$, $\Delta_T>2h$ with $h$ as in Theorem~\ref{thm:gauge}, and $2R_T\le\Delta_T$ with $R_T$ as in Proposition~\ref{prop:coverage}. The reporting cap satisfies $\bar t_{\max}\to\infty$ with $\bar t_{\max}=o(\Delta_T)$.
\end{assumption}

\begin{assumption}[Model prior]\label{as:prior2}
The prior over configurations depends on $m$ only through $|m|$, and its inclusion odds satisfy $\varrho'_T\le\varrho\le\bar\varrho<\infty$ uniformly over $m\in\mathcal M_T$, with $\log(1/\varrho'_T)=O(\log K)$.
\end{assumption}

Both bounds hold for a fixed $\omega\in(0,1)$, for which $\varrho=\omega/(1-\omega)$ is constant, in particular for the default $\omega=1/2$ of Table~\ref{tab:hyperpars}, which gives $\varrho=1$, and for the Beta--Bernoulli prior of \citet{scott2010bayes}, for which $\varrho=(|m^{-s}|+1)/(K-|m^{-s}|)\in[1/K,\,2(|m^{-s}|+1)/K]$ on $\{|m|\le K/2\}$. In the Beta--Bernoulli case $\varrho\le\bar\varrho$ requires $|m|=o(K)$, which Assumption~\ref{as:space} supplies. Nothing below needs $\varrho_T\to0$ except Remark~\ref{rem:cluster}.

\begin{assumption}[Model space]\label{as:space}
The posterior is supported on the $\ell_T$-separated configurations
\[
 \mathcal M_T:=\bigl\{m\subseteq\mathcal S:\ |s-s'|\ge\ell_T\ \ \forall\,s\neq s'\in m,
 \ \ \min(s-1,T+1-s)\ge\ell_T\ \forall s\in m\bigr\},
\]
with $\ell_T\to\infty$, $\ell_T=o(\Delta_T)$ and $\ell_T^{k/(k+1)}/\log K\to\infty$. Equivalently, candidate dates lying within a common window are collapsed to a single representative before the posterior is formed.
\end{assumption}

Assumption~\ref{as:space} does three things. By Proposition~\ref{prop:geom}(i) every $s\in m\in\mathcal M_T$ has $n_s\ge\ell_T/2=:\underline n_T$, so the isolated and unrestricted far-date counts coincide, $\mathcal F^{\mathrm{iso}}_h=\mathcal F_h$. It gives $|m|\le T/\ell_T=o(T)=o(K)$, which is what Assumption~\ref{as:prior2} needs. Finally, it removes the clustering gap in Corollary~\ref{cor:size} and Corollary~\ref{cor:hausdorff}. The alternative is to keep the unrestricted space and assume directly that the posterior mass of configurations containing two adjacent $h$-far dates vanishes. Remark~\ref{rem:cluster} explains why that cannot be proved with the tools used here.

\subsection{Coverage and localisation}

\begin{proposition}[Coverage]\label{prop:coverage}
Let Assumptions~\ref{as:nlp}--\ref{as:space} hold, let $\zeta>0$ be arbitrary and set
\begin{equation}\label{eq:RT}
 R_T:=\Bigl\lceil C_\ast\,\sigma^2\underline\gamma^{-2}\,(k^0\vee1)\log(KT/\eta)\Bigr\rceil ,
 \qquad \mathcal B_R:=\{m:\ U_m(R)\ge1\} .
\end{equation}
Then there is an event of probability at least $1-\eta$ on which
\begin{equation}\label{eq:covbound}
 \Post{\mathcal B_{R_T}}\;\le\;
 \exp\Bigl\{-\frac{\gamma_T^2R_T}{16\sigma^2}+C\,k^0\log(KT/\eta)\Bigr\}
 \;\le\;(KT/\eta)^{-\zeta},
\end{equation}
the second inequality for $C_\ast$ large enough given $\zeta$. In particular $K\cdot\Post{\mathcal K_h^c}\to0$ for $h\ge R_T$ and $\zeta>1$.
\end{proposition}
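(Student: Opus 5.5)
The plan is a map-and-count argument built on the proxy map $\psi$ of Lemma~\ref{lem:proxy}. For every $m\in\mathcal B_{R_T}\cap\mathcal M_T$ we compare the posterior mass of $m$ with that of $\psi(m)$ and bound the ratio:
\[
\frac{\Post{m}}{\Post{\psi(m)}}=\frac{p(m)}{p(\psi(m))}\cdot\frac{M^L(z\mid m)}{M^L(z\mid\psi(m))}\cdot\frac{G_m(z)}{G_{\psi(m)}(z)} .
\]
Summing over $m$ and dividing by the multiplicity of $\psi$ then gives $\Post{\mathcal B_{R_T}}\le 2^{k^0}(T+1)^{2k^0}\max_m\{\cdot\}$, since $\sum_{m'}\Post{m'}\le1$. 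The three factors are handled in turn.

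\emph{Prior factor.} By Lemma~\ref{lem:proxy}(i), $|m\triangle\psi(m)|\le5k^0$. Assumption~\ref{as:prior2} then bounds the prior ratio by $\max(\bar\varrho,1/\varrho'_T)^{5k^0}=\exp\{O(k^0\log K)\}$.

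\emph{Non-local factor.} By Lemma~\ref{lem:proxy}(ii) and the construction of $\psi$, no coordinate changes relevance status. Assumption~\ref{as:nlp}(b) therefore gives $G_m/G_{\psi(m)}\le1/c_-$.

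\emph{Local factor.} This is the core step. I use the non-nested identity from the proof of Lemma~\ref{lem:local}:
\[
-2\log\frac{M^L(z\mid m)}{M^L(z\mid\psi(m))}=(|m|-|\psi(m)|)\log(1+g)+\frac{g}{(1+g)\sigma^2}\bigl(\lVert H_{\psi(m)}z\rVert^2-\lVert H_mz\rVert^2\bigr).
\]
The first term is $O(k^0\log T)$ when $g$ grows at most polynomially; with the Zellner choice $g\asymp T$ it is absorbed into $Ck^0\log(KT/\eta)$. For the second term, write $z=\mu^0+\varepsilon$ and expand the projection difference into three parts:
\begin{itemize}
\item the deterministic part $\lVert(I-H_m)\mu^0\rVert^2-\lVert(I-H_{\psi(m)})\mu^0\rVert^2\ge\tfrac14\gamma_T^2R_TU_m$, by Lemma~\ref{lem:proxy}(iii);
\item a cross term linear in $\varepsilon$;
\item a quadratic noise term $\varepsilon'(H_{\psi(m)}-H_m)\varepsilon$.
\end{itemize}

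\emph{Controlling the noise uniformly.} The key device is the observation after Proposition~\ref{prop:geom}: projections onto step spaces decompose into residualised directions indexed by triples $(a,s,b)$. Because $m$ and $\psi(m)$ differ in at most $5k^0$ coordinates, $H_{\psi(m)}-H_m$ can be written as a telescoping sum of at most $5k^0$ rank-one updates $n_s^{-1}\tilde u_s\tilde u_s'$, each along one of at most $T^3$ fixed unit directions. A Gaussian maximal inequality over these $T^3$ directions and $N$ units yields an event of probability at least $1-\eta$ on which every $\langle\tilde u_s,\varepsilon\rangle^2/n_s\le C\sigma^2\log(KT/\eta)$. On this event:
\begin{itemize}
\item the quadratic noise term is at most $5k^0C\sigma^2\log(KT/\eta)$;
\item the cross term is at most $2\sqrt{\Delta E}\,\sqrt{5k^0C\sigma^2\log(KT/\eta)}$, where $\Delta E$ is the deterministic energy difference; this is at most $\tfrac12\Delta E$ plus another $O(k^0\sigma^2\log(KT/\eta))$.
\end{itemize}
Since $g/(1+g)\ge\tfrac12$ for $g\ge1$, the local factor is at most $\exp\{-\gamma_T^2R_TU_m/(16\sigma^2)+Ck^0\log(KT/\eta)\}$.

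\emph{Conclusion.} Combining the three factors with the multiplicity $2^{k^0}(T+1)^{2k^0}$ and using $U_m\ge1$ gives the first inequality in \eqref{eq:covbound}. Substituting $R_T$ from \eqref{eq:RT} with $\gamma_T\ge\underline\gamma$ gives an exponent $\le-(C_\ast/16-C)k^0\log(KT/\eta)$, which is $\le-\zeta\log(KT/\eta)$ once $C_\ast$ is large. Finally, $\mathcal K_h^c\subseteq\mathcal B_{h}\subseteq\mathcal B_{R_T}$ for $h\ge R_T$, since $U_m(R)$ is decreasing in $R$. Hence $K\,\Post{\mathcal K_h^c}\le K(KT/\eta)^{-\zeta}\to0$ for $\zeta>1$.

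The step I expect to be the main obstacle is the uniform noise control. It must hold simultaneously over all pairs $(m,\psi(m))$, which are exponentially many, and it works only because the telescoping decomposition reduces everything to $O(T^3)$ fixed directions with at most $5k^0$ terms each. The ordering of the rank-one updates, and the fact that intermediate models need not lie in $\mathcal M_T$, have to be checked carefully so that each update direction is genuinely one of the $(a,s,b)$ triples of Proposition~\ref{prop:geom}(i).
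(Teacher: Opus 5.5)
Your skeleton matches the paper's. It uses the proxy map $\psi$ of Lemma~\ref{lem:proxy}, splits the posterior ratio into prior, local and non-local factors, and invokes Assumption~\ref{as:nlp}(b) for $G_{\psi(m)}/G_m$. It also has the prior bound $\max(\bar\varrho,1/\varrho'_T)^{5k^0}$, the multiplicity $2^{k^0}(T+1)^{2k^0}$, and $\mathcal K_h^c\subseteq\mathcal B_{R_T}$ for $h\ge R_T$.

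Where you differ is the noise control in the local factor. The paper splits the move into the nested halves $m\subseteq m\cup J\supseteq\psi(m)$. It then union-bounds over the at most $T^{4k^0}$ values of $A=(I-H_m)\mu^0-(I-H_{m\cup J})\mu^0$ and over the deletion subspaces. You instead telescope $H_{\psi(m)}-H_m$ into at most $5k^0$ signed rank-one projections $\pm e_ke_k'$, with $e_k=\tilde u_{s_k}/\sqrt{n_{s_k}}$. A single union bound over the $NT^3$ fixed $(a,s,b)$-directions then serves every pair $(m,\psi(m))$ at once; this is the event already used in the proof of Theorem~\ref{thm:gauge}. In your version $k^0$ enters through the number of steps rather than the size of the union bound. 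That device is sound. The worry you flag about intermediate models is unfounded: Proposition~\ref{prop:geom}(i) holds for arbitrary $m^{-s}$, so whether intermediate configurations lie in $\mathcal M_T$ plays no role.

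The step that does not go through as written is the cross-term bound $2\sqrt{\Delta E}\,\sqrt{5k^0C\sigma^2\log(KT/\eta)}$. Add $J$ first and then delete $\mathcal P$, and use the paper's notation: $\mathcal E_1$ is the energy removed by adding $J$, and $\mathcal E_2$ the energy re-incurred by deleting $\mathcal P$. Cauchy--Schwarz on $2\sum_k\pm\langle e_k,\mu^0\rangle\langle e_k,\varepsilon\rangle$ then produces the factor $\bigl(\sum_k\langle e_k,\mu^0\rangle^2\bigr)^{1/2}=(\mathcal E_1+\mathcal E_2)^{1/2}$. But $\Delta E=\mathcal E_1-\mathcal E_2$, not $\mathcal E_1+\mathcal E_2$. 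For a non-nested move, $\lVert(H_{\psi(m)}-H_m)\mu^0\rVert^2$ is not controlled by $\mu^{0\prime}(H_{\psi(m)}-H_m)\mu^0$. If gain and loss nearly cancelled, noise of order $\sigma\sqrt{\mathcal E_1k^0\log(KT/\eta)}$ could exceed $\Delta E$, and the statement of Lemma~\ref{lem:proxy}(iii) alone does not exclude this.

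What closes the gap is the deletion-cost bound from inside that lemma's proof: $\mathcal E_2\le2k^0\theta^2\le\tfrac1{16}\gamma_T^2R_TU_m\le\tfrac14\Delta E$, hence $\mathcal E_1+\mathcal E_2\le\tfrac32\Delta E$. This is exactly why the paper carries the nested halves and $\mathcal E_2\le\mathcal E_1/8$. With it, AM--GM gives a cross term $\ge-\tfrac12\Delta E-O(k^0\sigma^2\log(KT/\eta))$. The net fit gain is then $\ge\tfrac12\Delta E-O(k^0\sigma^2\log(KT/\eta))$, which is positive for $C_\ast$ large. You need that positivity before applying $g/(1+g)\ge\tfrac12$ to the net difference rather than to the gain alone. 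The argument then closes with a smaller numerical constant in the exponent (e.g.\ $\tfrac1{32}$ instead of $\tfrac1{16}$). This is immaterial for the second inequality in \eqref{eq:covbound} once $C_\ast$ is large.
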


\begin{proof}
Fix $m\in\mathcal B_{R_T}$ and let $\psi(m)=(m\setminus\mathcal P)\cup J$ be the proxy-deleted configuration of Lemma~\ref{lem:proxy} at $R=R_T$, so that by Lemma~\ref{lem:proxy}(iii) the energy gain is $\mathcal E:=\lVert(I-H_m)\mu^0\rVert^2-\lVert(I-H_{\psi(m)})\mu^0\rVert^2 \ge\tfrac14\gamma_T^2R_TU_m(R_T)\ge\tfrac14\gamma_T^2R_T$. Decompose the move into its nested halves, $m\subseteq m\cup J\supseteq\psi(m)$, and write $\mathcal E_1:=\lVert(I-H_m)\mu^0\rVert^2-\lVert(I-H_{m\cup J})\mu^0\rVert^2\ge\tfrac12\gamma_T^2R_TU_m(R_T)$ and $\mathcal E_2:=\lVert(I-H_{\psi(m)})\mu^0\rVert^2-\lVert(I-H_{m\cup J})\mu^0\rVert^2 \le\tfrac1{16}\gamma_T^2R_TU_m(R_T)$, so $\mathcal E=\mathcal E_1-\mathcal E_2$.

\emph{Step 1: the local factor $M^{L}$ of the Rossell--Telesca factorisation.} By the display in the proof of Lemma~\ref{lem:local}, for any two configurations
\[
 \log\frac{M^L(z\mid m')}{M^L(z\mid m)}
 =-\tfrac12\bigl(|m'|-|m|\bigr)\log(1+g)
 +\frac{g}{1+g}\,\frac{\lVert(I-H_{m})z\rVert^2-\lVert(I-H_{m'})z\rVert^2}{2\sigma^2},
\]
and $|\,|\psi(m)|-|m|\,|\le5k^0$ by Lemma~\ref{lem:proxy}(i). For each nested half the fit difference is the squared norm of the projection of $z$ on the added directions, so with $A:=(I-H_m)\mu^0-(I-H_{m\cup J})\mu^0$ (whose squared norm is exactly $\mathcal E_1$, by orthogonality in the nested case, and similarly for $\mathcal E_2$),
\[
 \lVert(I-H_{m})z\rVert^2-\lVert(I-H_{m\cup J})z\rVert^2
 \;=\;\mathcal E_1+2\langle A,\varepsilon\rangle+\lVert H_{m\cup J}\varepsilon-H_m\varepsilon\rVert^2
 \;\ge\;\mathcal E_1+2\langle A,\varepsilon\rangle .
\]
Both $(I-H_m)\mu^0$ and $(I-H_{m\cup J})\mu^0$ are determined by the flanking pairs of the $k^0$ true breaks, by Proposition~\ref{prop:geom}(i), so $A$ takes at most $T^{4k^0}$ values, and a union bound gives $|\langle A,\varepsilon\rangle|\le\sqrt{\mathcal E_1}\,\sigma\sqrt{Ck^0\log(T/\eta)}$ on an event of probability at least $1-\eta/2$, whence by $xy\le x^2/4+y^2$ the cross term is at least $-\mathcal E_1/4-C\sigma^2k^0\log(T/\eta)$. The same argument applied to the deletion half, together with $\lVert H_{m\cup J}\varepsilon-H_{\psi(m)}\varepsilon\rVert^2=O(\sigma^2k^0\log(T/\eta))$ uniformly over the at most $T^{4k^0}$ relevant subspaces, bounds the loss by $\tfrac32\mathcal E_2+C\sigma^2k^0\log(T/\eta)$. Using $g\ge1$, so $g/(1+g)\ge\tfrac12$, and $\mathcal E_2\le\mathcal E_1/8$,
\[
 \log\frac{M^L(z\mid \psi(m))}{M^L(z\mid m)}\;\ge\;\frac{\mathcal E_1}{4\sigma^2}
 -\frac{3\mathcal E_2}{4\sigma^2}-C'k^0\log\bigl((T\vee g)/\eta\bigr)
 \;\ge\;\frac{\gamma_T^2R_T}{16\,\sigma^2}-C'k^0\log\bigl((T\vee g)/\eta\bigr).
\]

\emph{Step 2: the non-local factor.} By Lemma~\ref{lem:proxy}(ii) and the stopping rule, every coordinate of $\psi(m)$ that is either new or $\vartheta$-relevant in $m$ is $\vartheta$-relevant in $\psi(m)$, and $|m\triangle\psi(m)|\le5k^0$. Assumption~\ref{as:nlp}(b), the status form of the non-local factor in the factorisation $M=M^{L}G_m$ of \citet[Proposition~1]{rossell2017nonlocal}, therefore gives $G_{\psi(m)}(z)/G_m(z)\ge c_-$. A bound on $G_{m\cup\mathcal T^0}/G_m$ would not serve here, since it can fail by a factor $\exp\{-\beta^*_k(\tau n)^{k/(k+1)}\}$. Deleting $\mathcal P$ is what removes the redundant coordinates responsible for that decay.

\emph{Step 3: summation.} Combining Steps 1--2, $M(z\mid\psi(m))/M(z\mid m)\ge c_-\exp\{\gamma_T^2R_T/16\sigma^2-C'k^0\log((T\vee g)/\eta)\}$. The prior odds satisfy $p(m)/p(\psi(m))\le\max(\bar\varrho,1/\varrho'_T)^{5k^0} =e^{O(k^0\log K)}$ by Assumption~\ref{as:prior2}, and $\psi$ is at most $2^{k^0}(T+1)^{2k^0}=e^{O(k^0\log T)}$-to-one by Lemma~\ref{lem:proxy}(i). Hence, summing $\Post{m}=\Post{\psi(m)}\{M(z\mid m)/M(z\mid\psi(m))\}\{p(m)/p(\psi(m))\}$ over $m\in\mathcal B_{R_T}$ and using $\sum_{m'}\Post{m'}\le1$ gives the first inequality in \eqref{eq:covbound}. By \eqref{eq:RT}, $\gamma_T^2R_T/(16\sigma^2)\ge(C_\ast/16)k^0\log(KT/\eta)$, so $C_\ast\ge16(C+\zeta)$ gives the second.
\end{proof}

\begin{proposition}[Localisation]\label{prop:loc}
Let $d_T:=\lceil C\sigma^2\underline\gamma^{-2}\log(NT^3/\eta)\rceil\vee\lceil R_T/32\rceil$. Under the conditions of Proposition~\ref{prop:coverage}, for every $j$ and every $d_T\le d\le\Delta_T/4$, on the event of Proposition~\ref{prop:coverage},
\[
 \Post{D_j\ge d}\;\le\;\frac{1}{c_-\varrho'_T}
 \exp\Bigl\{-\tfrac18(\gamma^0_j)^2d/\sigma^2\Bigr\}+\Post{\mathcal B_{R_T}} .
\]
\end{proposition}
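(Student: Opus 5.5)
The plan is to split on $\mathcal B_{R_T}$. On $\mathcal B_{R_T}^c$ every true break has an included date within $R_T$. We then control configurations in which the nearest included date to $s^0_j$ sits at distance $d\in[d_T,R_T]$ by a single-coordinate move: \emph{shift} that date onto $s^0_j$, or add $s^0_j$ and compare. Concretely, I would write
\[
\Post{D_j\ge d}\le\Post{\{D_j\ge d\}\cap\mathcal B_{R_T}^c}+\Post{\mathcal B_{R_T}}.
\]
The second term is exactly the remainder in the statement, so it suffices to bound the first. For $m$ in the first set, let $s$ be the element of $m$ nearest $s^0_j$, with $d\le|s-s^0_j|\le R_T$. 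Since $\Delta_T\ge 4d$ and $\Delta_T\ge 2R_T$, the cell structure near $s^0_j$ contains no other true break within $\Delta_T/2$. Define the map $\phi(m):=m\cup\{s^0_j\}$, which is legitimate in $\mathcal M_T$ after collapsing to the representative under Assumption~\ref{as:space}. Then bound the odds $\Post{m}/\Post{\phi(m)}$ by
\[
\frac{1}{c_-\varrho'_T\,\mathrm{BF}_{s^0_j}(\phi(m))}.
\]
This follows from the inclusion bound \eqref{eq:incl}, i.e.\ Assumption~\ref{as:nlp}(a) applied to the $\vartheta$-relevant coordinate $s^0_j$ in $\phi(m)$, together with the lower prior odds $\varrho\ge\varrho'_T$.

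The key deterministic step is to lower bound $\mathbb E[t_{s^0_j}]$ in $\phi(m)$ via Proposition~\ref{prop:geom}. The flanking date of $s^0_j$ on the side of $s$ is at distance at least $d$, and the other flank is at distance at least $d$ because $D_j\ge d$. Hence $n_{s^0_j}\ge d/2$, and the Brownian-bridge representation \eqref{eq:bridge} gives a partial signal $\vartheta_{s^0_j}^2\ge\tfrac12(\gamma^0_j)^2 d\cdot c$. Here the only competing contributions come from breaks outside the cell, which is guaranteed by the separation $d\le\Delta_T/4$. In particular this also certifies $\vartheta$-relevance, since $d\ge d_T\ge R_T/32$ and so $\vartheta^2>\theta^2$ by \eqref{eq:twoscale}.

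Next, apply Lemma~\ref{lem:scalar}(ii). With $t=t_{s^0_j}$ we get
\[
\log\mathrm{BF}\ge \tfrac12 t^2-\tfrac12\log\lambda-O(1)+\log\min p .
\]
The term $\log\min p$ is $O(1)$ because $w_*=t/\sqrt\lambda$ concentrates near $\gamma^0_j/(\sigma\sqrt\tau)$, which is bounded away from $0$ and $\infty$ by Assumption~\ref{as:design}. The noise in $t$ is controlled uniformly by a union bound over the at most $NT^3$ triples $(a,s^0_j,b)$ indexing the direction $\tilde u_{s^0_j}$ (the first consequence of Proposition~\ref{prop:geom}). This gives $|t-\mathbb E t|\le\sqrt{2\log(NT^3/\eta)}$ on a high-probability event, which I would intersect with the event of Proposition~\ref{prop:coverage}. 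Then $t^2\ge\tfrac12\mathbb E[t]^2-2\log(NT^3/\eta)$. For $d\ge d_T$ with $C$ large, the $\log(NT^3/\eta)$ and $\tfrac12\log\lambda$ terms are absorbed into half the signal, leaving $\log\mathrm{BF}\ge\tfrac18(\gamma^0_j)^2d/\sigma^2$.

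Finally, sum over $m$. The map $\phi$ is at most $O(R_T)$-to-one, since $s$ is determined up to the window of the nearest date. We need either this multiplicity absorbed, or, cleaner, a summation in $d'=|s-s^0_j|$ producing a geometric series $\sum_{d'\ge d}e^{-c d'}$ that is bounded by a constant times its first term. Using $\sum\Post{\phi(m)}\le1$ then yields the stated bound with the constant folded into the exponent's slack.

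The main obstacle is the deterministic signal bound inside $\phi(m)$. An included date on the \emph{other} side, or a second true break in the same cell of $m$, could cancel the CUSUM, as in Remark~\ref{rem:nlp}. My first attempt would exploit that on $\mathcal B_{R_T}^c$, with $d\le\Delta_T/4$, the cell of $s^0_j$ in $\phi(m)\setminus\{s^0_j\}$ contains no other true break that is not already within $R_T$ of a flank. I would then invoke Lemma~\ref{lem:mismatch} to bound those proxied contributions by $2\bar\gamma\sqrt{h}/\sigma$, which is dominated once $d\ge d_T$.
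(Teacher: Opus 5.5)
Your overall route is the paper's. You discard $\Post{\mathcal B_{R_T}}$ and compare each $m\in\{D_j\ge d\}\cap\mathcal B_{R_T}^c$ with $m\cup\{s^0_j\}$ through the lower half of Assumption~\ref{as:nlp}(a) and $\varrho\ge\varrho'_T$. You lower-bound $\mathrm{BF}_{s^0_j}$ via $n_{s^0_j}\ge d/2$, the union bound over $NT^3$ triples and Lemma~\ref{lem:scalar}(ii), and then sum using $\sum_{m'}\Post{m'}\le1$. Several steps, however, do not go through as written.

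First, your multiplicity count is wrong, though in the harmless direction. The map $\phi(m)=m\cup\{s^0_j\}$ is injective on $\{D_j\ge d\}$: $d\ge1$ forces $s^0_j\notin m$, so $m=\phi(m)\setminus\{s^0_j\}$. This is what gives the paper its clean prefactor $1/(c_-\varrho'_T)$, so no geometric series or absorption into the exponent is needed. The $O(R_T)$ count belongs to the shift map, which you mention but do not use.

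Second, and more seriously, your treatment of proxied breaks fails. On $\mathcal B_{R_T}^c$ every $D_j\le R_T$, so the only non-vacuous range is $d_T\le d\le R_T$. There the bound $2\bar\gamma\sqrt{R_T}/\sigma$ from Lemma~\ref{lem:mismatch} always exceeds the signal $|\gamma^0_j|\sqrt{d/2}/\sigma$. Besides, $s^0_j$ is not an $h$-far candidate, so the lemma does not apply to it. Use the exact expression of Proposition~\ref{prop:geom}(iii) instead. A true break $s^0_{j'}\in(s^0_j,b)$ with $b-s^0_{j'}\le R_T$ shifts $\mathbb E[\hat\gamma_{s^0_j}]$ by $\gamma^0_{j'}(b-s^0_{j'})/(b-s^0_j)$. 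This is at most $\bar\gamma R_T/\Delta_T=o(1)$ in absolute value, because $R_T=O(\log NT)$ for fixed $k^0$ while $\Delta_T/(\log NT)^{(k+1)/k}\to\infty$. The same holds symmetrically on the left. Hence $\mathbb E[\hat\gamma_{s^0_j}]=\gamma^0_j\{1+o(1)\}$, which is all the rest of the argument needs. The paper avoids this by asserting that $s^0_j$ is alone in its cell. That assertion overlooks exactly the configuration you flagged, where the date covering $s^0_{j'}$ lies on its far side, so your concern is legitimate and the same repair is needed there.

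Third, your inequality $t^2\ge\tfrac12\mathbb E[t]^2-2L$, with $L=\log(2NT^3/\eta)$, gives only $\tfrac12t^2\ge\tfrac18(\gamma^0_j)^2d/\sigma^2-L$. That leaves nothing to absorb $L$ and $\tfrac12\log\lambda$. Use $(x+y)^2\ge(1-\epsilon)x^2-\epsilon^{-1}y^2$ to exploit the factor-two gap between $\tfrac12\mathbb E[t]^2\ge\tfrac14(\gamma^0_j)^2d/\sigma^2$ and the target $\tfrac18$. The paper's intermediate step $|t|\ge\tfrac12|\gamma^0_j|\sqrt{d/2}/\sigma$ needs the same sharpening.

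Finally, your claim that $\phi(m)\in\mathcal M_T$ "after collapsing" is not justified. Under Assumption~\ref{as:space}, $D_j\le R_T=o(\ell_T)$, so for large $T$ the configuration $m\cup\{s^0_j\}$ is not $\ell_T$-separated. The paper applies Assumption~\ref{as:nlp}(a) to the same pair without comment, so this is a tension in the setup rather than a defect peculiar to your proof.
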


\begin{proof}
Work on $\mathcal B_{R_T}^c$, so every true break other than $s^0_j$ has an included date within $R_T$. Since $2R_T\le\Delta_T$ that date separates it from $s^0_j$, so $s^0_j$ is the only true break in its own cell $[a,b)$. On $\{D_j\ge d\}$ we have $s^0_j-a>d$ and $b-s^0_j>d$, hence $n_{s^0_j}\ge d/2$ by Proposition~\ref{prop:geom}(i), and $\mathbb E[\hat\gamma_{s^0_j}]=\gamma^0_j$ by Proposition~\ref{prop:geom}(iii). By the union bound of the proof of Theorem~\ref{thm:gauge}, $\max|\xi|\le\sqrt{2\log(2NT^3/\eta)}$ on the stated event, so $\lvert t_{s^0_j}\rvert\ge|\gamma^0_j|\sqrt{d/2}/\sigma-\max|\xi| \ge\tfrac12|\gamma^0_j|\sqrt{d/2}/\sigma$ for $d\ge d_T$ with $C$ large. The same condition makes $\lvert\hat\gamma_{s^0_j}\rvert\in[\underline\gamma/2,2\bar\gamma]$, so $w_*=\hat\gamma_{s^0_j}/(\sigma\sqrt\tau)$ lies in a fixed compact subset of $\mathbb R\setminus\{0\}$ and $\min_{|w-w_*|\le\lambda^{-1/2}}p(w)\ge\underline p>0$. Lemma~\ref{lem:scalar}(ii) with $\lambda\le\tau T$ then gives $\log\mathrm{BF}_{s^0_j}\ge\tfrac18(\gamma^0_j)^2d/\sigma^2$, again for $C$ large. Since $m$ and $m\cup\{s^0_j\}$ differ in one coordinate, the one-coordinate form of the Rossell--Telesca factorisation in Assumption~\ref{as:nlp}(a), whose lower half applies because $s^0_j$ is $\vartheta$-relevant for $d\ge d_T\ge R_T/32$, together with Assumption~\ref{as:prior2}, gives $\Post{m}\big/\Post{m\cup\{s^0_j\}}\le(c_-\varrho'_T\mathrm{BF}_{s^0_j})^{-1}$, and $m\mapsto m\cup\{s^0_j\}$ is injective on $\{D_j\ge d\}$, so summing and using $\sum_{m'}\Post{m'}\le1$ completes the proof.
\end{proof}

\begin{remark}
No marginal-likelihood expansion, no bound on $|m|$ beyond Assumption~\ref{as:space} and no condition on the eigenvalues of any design matrix has been used: the inputs are Proposition~\ref{prop:geom}, Lemma~\ref{lem:scalar}, Lemma~\ref{lem:energy} and Assumption~\ref{as:nlp}.
\end{remark}

\subsection{Gauge}

We use \emph{gauge} and \emph{potency} in the sense of \citet{castle2015}: the expected share of retained irrelevant indicators and the expected share of retained relevant ones. Let $\mathcal F_h$ count the pairs $(i,s)$ with $s\in m_i\cap\mathcal S_T$ and $s$ $h$-far from $\mathcal T^0_i$. By Assumption~\ref{as:space} this coincides with the count restricted to $\{n_s\ge\underline n_T\}$.

\begin{theorem}[Gauge]\label{thm:gauge}
Let Assumptions~\ref{as:nlp}--\ref{as:space} hold, let $\mathcal K_h$ be the event that every true break lies within $h$ of a flanking date of its own cell, and let $h\ge R_T$. Then, on an event of probability at least $1-2\eta$,
\begin{equation}\label{eq:gaugebound}
 \mathbb E\bigl[\mathcal F_h\mid z\bigr]\;\le\;
 c_+K\bar\varrho\,A(\tau\underline n_T)^{\nu/2}
 \exp\Bigl\{\tfrac12t_0^2-\tfrac12\beta_k(\tau\underline n_T)^{k/(k+1)}\Bigr\}
 +K\,\Post{\mathcal K_h^c}
\end{equation}
with $t_0:=\sqrt{2\log\tfrac{2NT^3}{\eta}} + \frac{2\bar\gamma\sqrt h}{\sigma}.$ Hence $\mathbb E[\mathcal F_h\mid z]\overset{p}\to0$ whenever $(\tau\underline n_T)^{k/(k+1)}\big/\{\log(NT)+h\}\to\infty$ and $\zeta>1$ in Proposition~\ref{prop:coverage}. Under a pMOM slab the first term is $O(K\bar\varrho\,\underline n_T^{-(k+1/2)}e^{t_0^2/2})$, and under any local slab it is $\asymp K\bar\varrho\,\underline n_T^{-1/2}e^{t_0^2/2}$ and tight.
\end{theorem}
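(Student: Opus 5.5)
The bound is assembled by splitting the count $\mathcal F_h$ according to the event $\mathcal K_h$. Write
\[
\mathbb E[\mathcal F_h\mid z]=\mathbb E[\mathcal F_h\mathbbm 1_{\mathcal K_h}\mid z]+\mathbb E[\mathcal F_h\mathbbm 1_{\mathcal K_h^c}\mid z].
\]
Since $\mathcal F_h\le K$ trivially, the second piece is at most $K\,\Post{\mathcal K_h^c}$. This is the second term of \eqref{eq:gaugebound}. Because $h\ge R_T$, Proposition~\ref{prop:coverage} makes it $o_p(1)$ once $\zeta>1$. The first piece is handled one candidate at a time. For each pair $(i,s)$ with $s\in\mathcal S_T$ $h$-far, $\mathbb E[\delta^\gamma_s\mathbbm 1_{\mathcal K_h}\mid z]$ is the posterior average of the conditional inclusion probability given $\delta^\gamma_{-s}$ over configurations $m^{-s}$ in $\mathcal K_h$. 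The upper half of \eqref{eq:incl} (Assumption~\ref{as:nlp}(a)) bounds this conditional probability by $c_+\varrho\,\mathrm{BF}_s\le c_+\bar\varrho\,\mathrm{BF}_s$, using Assumption~\ref{as:prior2}. So it suffices to bound $\mathrm{BF}_s$ uniformly over all such $s$ and all admissible $m^{-s}$.

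For this I use Lemma~\ref{lem:scalar}(i), which requires controlling $\lambda_s$ from below and $|t_s|$ from above. For the lower bound, Assumption~\ref{as:space} and Proposition~\ref{prop:geom}(i) give $n_s\ge\underline n_T$, hence $\lambda_s\ge\tau\underline n_T$. The bound in (i) is decreasing in $\lambda$ for large $\lambda$ (the exponential dominates $\lambda^{\nu/2}$), so evaluating at $\tau\underline n_T$ is legitimate. This needs a short monotonicity check, and the side condition $t_0^2\le\tfrac12\beta_k\lambda^{k/(k+1)}$ must hold at the smallest $\lambda$; this is implied eventually by the growth hypothesis. For the upper bound, decompose $t_s=\mathbb E[t_s]+\xi_s$ with $\xi_s\sim\mathcal N(0,1)$ (Proposition~\ref{prop:geom}(iii)). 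On $\mathcal K_h$, together with $\Delta_T>2h$ (Assumption~\ref{as:design}), Lemma~\ref{lem:mismatch} gives $|\mathbb E[t_s]|\le 2\bar\gamma\sqrt h/\sigma$.

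The noise term is the crux. The configuration $m^{-s}$ is random under the posterior, so $\xi_s$ is not a single Gaussian. By Proposition~\ref{prop:geom}(i), however, the direction $\tilde u_s/\sqrt{n_s}$ depends only on the triple $(a,s,b)$, so over all units the noise ranges over at most $NT^3$ fixed standard Gaussians. A union bound with Gaussian tails gives $\max|\xi|\le\sqrt{2\log(2NT^3/\eta)}$ with probability at least $1-\eta$. Hence $|t_s|\le t_0$ simultaneously for all relevant $(s,m^{-s})$. Intersecting with the event of Proposition~\ref{prop:coverage} (probability at least $1-\eta$) yields the overall $1-2\eta$. Summing the resulting uniform bound over at most $K$ candidates gives the first term of \eqref{eq:gaugebound}.

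It remains to verify the limit. We have $t_0^2\lesssim\log(NT/\eta)+h$, and $\log K$ together with $\tfrac{\nu}{2}\log(\tau\underline n_T)$ is of the same or smaller order. The exponent $\tfrac12\beta_k(\tau\underline n_T)^{k/(k+1)}$ therefore dominates under the stated growth condition, and the first term tends to zero. For the pMOM and local statements I would swap in the rates $\mathrm{BF}_s=O(\lambda^{-k-1/2}e^{t^2/2})$ and $\asymp\lambda^{-1/2}e^{t^2/2}$ quoted from \citet{rossell2010nlp_test}. Tightness in the local case follows from their lower bound, since $t_s$ attains order $t_0$ for some candidate. The main obstacle I expect is the uniformity step: making the union bound and the application of Lemma~\ref{lem:scalar} hold simultaneously over random conditioning sets, and checking that Lemma~\ref{lem:mismatch} applies on $\mathcal K_h$ for every such set.
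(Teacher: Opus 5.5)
Your proposal is correct and follows the paper's proof essentially step for step. The shared steps are: the split on $\mathcal K_h$, with the complement contributing at most $K\,\Pi(\mathcal K_h^c\mid z)$; the upper half of \eqref{eq:incl}, giving $c_+\bar\varrho\,\mathrm{BF}_s$; the union bound over the at most $NT^3$ non-random triples $(a_s,s,b_s)$, combined with Lemma~\ref{lem:mismatch} to obtain $|t_s|\le t_0$ uniformly over configurations in $\mathcal K_h$; Lemma~\ref{lem:scalar}(i) evaluated at $\lambda=\tau\underline n_T$ via Assumption~\ref{as:space} and monotonicity; and intersection with the event of Proposition~\ref{prop:coverage} to reach probability $1-2\eta$.

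The only place you go beyond the paper is arguing local-slab tightness from $t_s$ attaining order $t_0$ for some candidate. The paper simply imports the tight per-candidate rate from \citet{rossell2010nlp_test}, and your version would additionally need the posterior to charge such configurations, so it is better left out.
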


\begin{proof}
Write $t_s=\mathbb E[t_s]+\xi_s$ with $\xi_s\sim\mathcal N(0,1)$ by Proposition~\ref{prop:geom}(iii). By Proposition~\ref{prop:geom}(i) the law of $\xi_s$ depends on the configuration only through $(a_s,s,b_s)$, of which there are at most $NT^3$ in the panel, all non-random, and a union bound with $\PP(|\mathcal N(0,1)|>x)\le2e^{-x^2/2}$ gives $\max|\xi_s|\le\sqrt{2\log(2NT^3/\eta)}$ with probability at least $1-\eta$. On $\mathcal K_h$, Lemma~\ref{lem:mismatch} bounds $|\mathbb E[t_s]|$ by $2\bar\gamma\sqrt h/\sigma$, so $|t_s|\le t_0$ simultaneously for all $h$-far candidates and all configurations in $\mathcal K_h$. For one candidate,
\[
 \Post{\delta_s=1}\le
 \mathbb E\bigl[\mathbbm 1\{\delta_{-s}\in\mathcal K_h\}\,
 \Pi(\delta_s=1\mid z,\delta_{-s})\bigr]+\Post{\mathcal K_h^c},
\]
and on the indicated event \eqref{eq:incl} gives $\Pi(\delta_s=1\mid z,\delta_{-s})\le c_+\bar\varrho\,\mathrm{BF}_s$. This uses the upper half of Assumption~\ref{as:nlp}(a), which needs no relevance condition because a spurious added column only drives the Rossell--Telesca factor $G_{m\cup\{s\}}$ further down. Lemma~\ref{lem:scalar}(i) then applies with $\lambda\ge\tau\underline n_T$ by Assumption~\ref{as:space}. The map $\lambda\mapsto\lambda^{\nu/2}e^{-\frac12\beta_k\lambda^{k/(k+1)}}$ is decreasing for $\lambda\ge\lambda_0$, so it may be evaluated at $\tau\underline n_T$, and Assumption~\ref{as:design} together with $h\asymp\log (NT)$ ensures $t_0^2\le\tfrac12\beta_k(\tau\underline n_T)^{k/(k+1)}$ for $T$ large. Summing over the at most $K$ pairs gives \eqref{eq:gaugebound} on the intersection of this event with that of Proposition~\ref{prop:coverage}, of probability at least $1-2\eta$. The second term is $o(1)$ because $h\ge R_T$ and, by \eqref{eq:covbound} with $\zeta>1$, $K\Post{\mathcal K_h^c}\le K(KT/\eta)^{-\zeta}\to0$. The pMOM and local statements follow by substituting the corresponding per-candidate rates of \citet{rossell2010nlp_test}.
\end{proof}

\begin{corollary}[Admissible panel width]\label{cor:panel}
Write $\Theta_T:=e^{t_0^2/2}$, so that $\Theta_T\ge2NT^3/\eta$ always, and with $h\asymp\log(NT)$, $\Theta_T=(NT)^{\varpi}$ for some $\varpi\ge3+2\bar\gamma^2c_h/\sigma^2$, where $c_h:=\lim h/\log(NT)$. Let $k^0=O(1)$, $\bar\varrho=O(1)$ and $\underline n_T=\ell_T/2$ with $\ell_T=T^{1-o(1)}$, which is admissible under Assumptions~\ref{as:design} and \ref{as:space} whenever $\Delta_T\asymp T$. Then $\mathbb E[\mathcal F_h\mid z]\to0$ if
\begin{align*}
    \text{local slab}:&\ \text{never}\\
 \text{pMOM order }k:&\ k>\varpi-\tfrac12 \text{ and }\ N=o\bigl(T^{(k+1/2-\varpi)/\varpi}\bigr)\\
 \text{iMOM order }k:&\ \log N=o\bigl(T^{k/(k+1)}\bigr).
\end{align*}
In particular the local slab fails for every $N\ge1$ and every $k$, the pMOM requires $k\ge3$ even at $N=1$, and the iMOM tolerates $N=o(e^{c\underline n_T^{k/(k+1)}})=o(e^{cT^{k/(k+1)-o(1)}})$, which at $k=1$ is $N=o(e^{c\sqrt T^{\,1-o(1)}})$.
\end{corollary}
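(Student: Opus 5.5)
The plan is to specialise the gauge bound \eqref{eq:gaugebound} of Theorem~\ref{thm:gauge}. Its second term, $K\,\Post{\mathcal K_h^c}$, is already $o(1)$ by Proposition~\ref{prop:coverage} with $\zeta>1$ and $h\ge R_T$. That leaves only the first term, $c_+K\bar\varrho\,\Theta_T\,\mathrm{BF}^{\ast}(\tau\underline n_T)$, where $\mathrm{BF}^\ast(\lambda)$ is the per-candidate null rate of the slab. This is $A\lambda^{\nu/2}e^{-\frac12\beta_k\lambda^{k/(k+1)}}$ for the iMOM slab by Lemma~\ref{lem:scalar}(i), and, as quoted after Remark~\ref{rem:constants}, $\lambda^{-(k+1/2)}$ for the pMOM slab and $\asymp\lambda^{-1/2}$ for a local slab. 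Each case then reduces to bookkeeping of powers of $N$ and $T$ once $K=N(T-3)$, $\bar\varrho=O(1)$ and $\lambda=\tau\underline n_T=\tau\ell_T/2=T^{1-o(1)}$ are inserted.

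\emph{Step 1 (the factor $\Theta_T$).} I first expand $t_0^2=\bigl(\sqrt{2\log(2NT^3/\eta)}+2\bar\gamma\sqrt h/\sigma\bigr)^2$. Since $(a+b)^2\ge a^2+b^2$, this gives $\Theta_T=e^{t_0^2/2}\ge(2NT^3/\eta)\,e^{2\bar\gamma^2h/\sigma^2}$. With $h=c_h\log(NT)\{1+o(1)\}$, every term of $t_0^2$ is of order $\log(NT)$, so $\Theta_T=(NT)^{\varpi}$ with $\varpi\ge3+2\bar\gamma^2c_h/\sigma^2$. I also record $t_0^2=O(\log(NT))$ for use in the side condition of Lemma~\ref{lem:scalar}(i).

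\emph{Step 2 (admissibility).} I check that $\ell_T=T^{1-o(1)}$ is compatible with Assumptions~\ref{as:design} and~\ref{as:space} when $\Delta_T\asymp T$. Choosing for instance $\ell_T=T/\log T$ gives $\ell_T=o(\Delta_T)$ and $\ell_T\to\infty$. The condition $\ell_T^{k/(k+1)}/\log K\to\infty$ then holds exactly when $\log N=o(T^{k/(k+1)-o(1)})$, which is implied by the iMOM growth condition and is trivial for polynomial $N$. The remaining requirements $2R_T\le\Delta_T$ and $\Delta_T>2h$ follow because $R_T$ and $h$ are $O(\log(NT))$.

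\emph{Step 3 (three cases).}

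\textbf{Local slab.} The first term is $\asymp NT\cdot T^{-1/2}(NT)^{\varpi}\to\infty$ for every $N\ge1$. Because the local rate is tight (Rossell--Telesca), the bound cannot be repaired, which gives ``never''.

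\textbf{pMOM slab.} The first term is $N^{1+\varpi}\,T^{1+\varpi-(k+1/2)(1-o(1))}$ up to constants. Requiring the $T$-exponent to be negative and then solving for the admissible growth of $N$ yields a condition of the stated form $k>\varpi-\tfrac12$, $N=o(T^{(k+1/2-\varpi)/\varpi})$. Since $\varpi\ge3$, this forces $k\ge3$ even when $N=1$.

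\textbf{iMOM slab.} Taking logarithms, the first term is
\[
\log K+\varpi\log(NT)+\tfrac{\nu}{2}\log\lambda-\tfrac12\beta_k\lambda^{k/(k+1)} .
\]
With $\lambda=T^{1-o(1)}$, the negative term dominates precisely when $\log N=o(T^{k/(k+1)})$. The same condition gives $t_0^2\le\tfrac12\beta_k\lambda^{k/(k+1)}$ for large $T$, so Lemma~\ref{lem:scalar}(i) is applicable. At $k=1$ this reads $N=o(e^{c\sqrt T^{\,1-o(1)}})$.

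\emph{Main obstacle.} I expect the exponent bookkeeping in the pMOM case to be the delicate step. One has to decide how much of the factor $K=N(T-3)$ is absorbed into $\Theta_T$, using $\Theta_T\ge2NT^3/\eta$, so that the threshold on $k$ and the admissible power of $N$ come out exactly as displayed. My first attempt would be to write the whole first term as $(NT)^{\varpi}\cdot NT\cdot T^{-(k+1/2)}$, solve the resulting pair of inequalities in $\log N/\log T$, and then redefine $\varpi$ to absorb the extra $NT$ if needed.

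A secondary point is the word ``never'' for the local slab. The gauge bound is only an upper bound, so this requires the tightness of the local rate $\asymp\lambda^{-1/2}$, quoted without proof from \citet{rossell2010nlp_test}, to ensure that the expected false count genuinely diverges and not merely that the bound does.
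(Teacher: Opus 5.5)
Your route is the intended one: discard $K\,\Post{\mathcal K_h^c}$ via Proposition~\ref{prop:coverage}, then insert the three per-candidate null rates into the first term of \eqref{eq:gaugebound}. The paper states the corollary without a separate proof, as a specialisation of Theorem~\ref{thm:gauge}. The pMOM step, however, does not go through as written.

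Your own expression for the first term is $N^{1+\varpi}T^{1+\varpi-(k+1/2)+o(1)}$. Up to the $T^{o(1)}$ slack, it vanishes exactly when $k>\varpi+\tfrac12$ and $N=o\bigl(T^{(k-1/2-\varpi)/(\varpi+1)}\bigr)$. It does not yield $k>\varpi-\tfrac12$ and $N=o\bigl(T^{(k+1/2-\varpi)/\varpi}\bigr)$. For example, at $N=1$ every $k\in(\varpi-\tfrac12,\varpi+\tfrac12)$ satisfies the displayed condition, while your bound grows like $T^{\varpi+1/2-k+o(1)}\to\infty$.

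The repair you sketch does not help. $K$ multiplies $\Theta_T$, so the lower bound $\Theta_T\ge2NT^3/\eta$ cannot absorb it. Redefining $\varpi$ to swallow the extra $NT$ replaces $\varpi$ by $\varpi+1$. That breaks $\Theta_T=(NT)^{\varpi}$ and moves the threshold at $N=1$ from $k\ge3$ to $k\ge4$, so you would be proving a different statement.

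The displayed pMOM conditions are exactly what one gets when the multiplicity prefactor is normalised to $K\bar\varrho=O(1)$. This means $\bar\varrho=O(1/K)$, e.g.\ $\omega\asymp1/K$, which Assumption~\ref{as:prior2} permits, and it is the convention the paper adopts in Remark~\ref{rem:uniformity}. Under it the first term is $(NT)^{\varpi}\underline n_T^{-(k+1/2)}=N^{\varpi}T^{\varpi-(k+1/2)+o(1)}$, and the two displayed conditions are precisely what make it vanish.

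So you must either add $K\bar\varrho=O(1)$ explicitly or state the corrected exponents. Under the literal hypothesis $\bar\varrho=O(1)$ (e.g.\ the default $\omega=1/2$), the displayed pMOM sufficiency claim does not follow from \eqref{eq:gaugebound}. The necessity claim ``$k\ge3$ at $N=1$'' survives, and in fact strengthens to $k>\varpi+\tfrac12\ge\tfrac72$. The local and iMOM cases are unaffected: $K$ only worsens the former and is absorbed into $\log K=O(\log(NT))$ in the latter.

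A smaller point concerns the iMOM case. The claim that the negative term ``dominates precisely when $\log N=o(T^{k/(k+1)})$'' is not right. The requirement $\ell_T=o(\Delta_T)$ with $\Delta_T\asymp T$ forces $\lambda^{k/(k+1)}=o(T^{k/(k+1)})$. What the exponent comparison and the side condition of Lemma~\ref{lem:scalar}(i) actually need is $\log N=o(\underline n_T^{k/(k+1)})$. This is the form of the ``in particular'' clause, and the same as the requirement $\ell_T^{k/(k+1)}/\log K\to\infty$ of Assumption~\ref{as:space}.

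With your fixed choice $\ell_T=T/\log T$, this is not implied by $\log N=o(T^{k/(k+1)})$; take $\log N=T^{k/(k+1)}/\log\log T$. So the corresponding claim in your Step~2 fails as well. To reach the displayed condition, choose $\ell_T$ depending on $N$. Write $\log N=\epsilon_T T^{k/(k+1)}$ with $\epsilon_T\to0$ and set $\ell_T=T\max\{\epsilon_T^{(k+1)/(2k)},1/\log T\}$. This is $T^{1-o(1)}$ and $o(\Delta_T)$, and it gives $\ell_T^{k/(k+1)}\ge\epsilon_T^{1/2}T^{k/(k+1)}\gg\log N$.
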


\begin{remark}[The cost of uniformity]\label{rem:uniformity}
The factor $\Theta_T$ is not an artefact of a loose bound and cannot be dropped. The rates of \citet{rossell2010nlp_test} are per-candidate statements in probability, whereas $\mathbb E[\mathcal F_h\mid z]$ requires a bound holding simultaneously over the $O(NT^3)$ conditional designs $(a_s,s,b_s)$, and the price of that uniformity is $e^{t_0^2/2}\ge2NT^3/\eta$. Its effect is large: with $\tau=3.32$, $\nu=k=1$, $\underline n_T=T/4$, $\eta=0.05$, $\bar\gamma=\sigma$ and $K\bar\varrho\asymp1$, the $\log_{10}$ of the right-hand side of \eqref{eq:gaugebound} at $T=10^4$, $N=1$ is $+37.4$ for pMOM $k=1$, $+30.6$ for pMOM $k=3$ and $+40.8$ for the local slab, against $-5.1$, $-11.9$ and $-1.7$ if $\Theta_T$ is omitted. Only the near-exponential rate of the iMOM survives a polynomial uniformity cost, so non-locality of the \emph{inverse-moment} kind, and not non-locality as such, is what controls the gauge in a saturated design. Note also that $\varpi$ depends on $\bar\gamma^2/\sigma^2$, so no polynomial-slab threshold can hold with universal constants.
\end{remark}

\subsection{Potency}

\begin{theorem}[Potency]\label{thm:potency}
Let $W_j$ have width $L$ with $2d_T\vee C\underline\gamma^{-2}\sigma^2\log(k^0/\varrho'_T) \le L\le\bar t_{\max}\wedge(\Delta_T/2)$ and be centred at $s^0_j$. Then, on the event of Proposition~\ref{prop:coverage},
\[
 \mathrm{PIP}^{W_j}\;\ge\;1-\frac{1}{c_-\varrho'_T}
 \exp\Bigl\{-\tfrac1{16}(\gamma^0_j)^2L/\sigma^2\Bigr\}-\Post{\mathcal B_{R_T}}
 \;\longrightarrow\;1,
\]
and all $k^0$ windows are covered simultaneously provided $\gamma_T^2L/\sigma^2\gg\log(k^0/\varrho'_T)$. Since $\bar t_{\max}=\lceil1+c_0/\kappa\rceil-1$ implies $\kappa<c_0/(\bar t_{\max}-1)$, the threshold satisfies $P^L\le\{c_1+c_0(L-1)/(\bar t_{\max}-1)\}/(c_0+c_1)$, which stays bounded below one whenever $L=o(\bar t_{\max})$. The rule therefore reports $W_j$ with probability tending to one.
\end{theorem}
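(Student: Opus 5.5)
The plan is to reduce the window event to localisation: if the posterior places an included date within distance $L/2$ of $s^0_j$, then $A^{W_j}$ occurs, since $W_j$ is centred at $s^0_j$ with width $L$. This gives
\[
 1-\mathrm{PIP}^{W_j}=\Post{(A^{W_j})^c}\le\Post{D_j\ge L/2},
\]
because $(A^{W_j})^c$ means that no included date lies in $W_j$, which forces $D_j\ge L/2$ (up to the rounding of the centre, absorbed by using $d=\lfloor L/2\rfloor$).

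The main step applies Proposition~\ref{prop:loc} with $d=L/2$. The admissibility conditions are $d_T\le L/2\le\Delta_T/4$, which follow from $L\ge 2d_T$ and $L\le\Delta_T/2$. On the event of Proposition~\ref{prop:coverage} we then get
\[
 \Post{D_j\ge L/2}\le (c_-\varrho'_T)^{-1}\exp\{-\tfrac18(\gamma^0_j)^2(L/2)/\sigma^2\}+\Post{\mathcal B_{R_T}}.
\]
This is exactly the displayed bound with constant $1/16$.

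Convergence to one follows from two facts. First, the lower bound $L\ge C\underline\gamma^{-2}\sigma^2\log(k^0/\varrho'_T)$ with $C$ large makes the first term at most $(k^0)^{-1}(k^0/\varrho'_T)^{-c}\cdot$ small, and in any case $o(1)$, since $\log(1/\varrho'_T)=O(\log K)$ is dominated once $C$ is large. Second, $\Post{\mathcal B_{R_T}}\le(KT/\eta)^{-\zeta}\to0$ by \eqref{eq:covbound}.

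For simultaneity over $j$, take a union bound over the $k^0$ windows. The coverage term enters only once, since it is a single event. The summed exponential terms are bounded by $k^0(\varrho'_T)^{-1}e^{-c\gamma_T^2L/\sigma^2}$, which vanishes under the condition $\gamma_T^2L/\sigma^2\gg\log(k^0/\varrho'_T)$. The probability-$(1-\eta)$ event is handled by letting $\eta\to0$ slowly, so that $R_T$ remains admissible under Assumption~\ref{as:design}.

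For the threshold, the definition $\bar t_{\max}=\lceil1+c_0/\kappa\rceil-1$ gives $\bar t_{\max}<1+c_0/\kappa$, hence $\kappa<c_0/(\bar t_{\max}-1)$. Substituting into $P^L=\{c_1+\kappa(L-1)\}/(c_0+c_1)$ yields the stated bound, which tends to $c_1/(c_0+c_1)<1$ when $L=o(\bar t_{\max})$. Such an $L$ exists inside the admissible range because $\bar t_{\max}\to\infty$ and $d_T$, $\log(1/\varrho'_T)$ grow only logarithmically. Strictly, this requires $\bar t_{\max}$ to grow faster than $\log(KT)$, and I would make this explicit as a reading of Assumption~\ref{as:design}. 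Since $\mathrm{PIP}^{W_j}\to1$ in probability while $P^L$ stays bounded away from one, the rule \eqref{eq:window_rule} reports $W_j$ with probability tending to one.

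The main obstacle is the compatibility of the width constraints. $L$ must simultaneously exceed $2d_T\asymp R_T/16\asymp\log(KT)$, be $o(\bar t_{\max})$ for the threshold, and stay below $\Delta_T/2$. Assumption~\ref{as:design} guarantees the last of these via $\bar t_{\max}=o(\Delta_T)$, but the first two need the growth-rate reading of $\bar t_{\max}$ noted above. The other delicate point is confirming that the rounding in "centred" still gives $D_j\ge\lfloor L/2\rfloor$ on the complement event.
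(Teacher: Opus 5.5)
Your proposal is correct and follows the paper's proof essentially step for step: you reduce $(A^{W_j})^c$ to $\{D_j\ge d\}$, invoke Proposition~\ref{prop:loc} at $d\ge L/2$ to obtain the constant $\tfrac1{16}$, take a union bound over $j$, and do the threshold arithmetic from $\kappa<c_0/(\bar t_{\max}-1)$. One small point: the paper takes $d=\lfloor(L-1)/2\rfloor+1=\lceil L/2\rceil$, and your $d=L/2$ is equally valid because $D_j$ is integer-valued, whereas your parenthetical $d=\lfloor L/2\rfloor$ would lose the constant $\tfrac1{16}$ for odd $L$; your remark that an admissible $L=o(\bar t_{\max})$ requires $\bar t_{\max}\gg\log(KT)$ agrees with Remark~\ref{rem:nearmiss}.
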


\begin{proof}
$A^{W_j}$ holds whenever $D_j\le(L-1)/2$. Apply Proposition~\ref{prop:loc} at $d=\lfloor(L-1)/2\rfloor+1\ge L/2\ge d_T$, and a union bound over $j$ for the simultaneous statement.
\end{proof}

\begin{remark}[The cap must grow, and what that costs]\label{rem:nearmiss}
Theorem~\ref{thm:potency} requires an intermediate width $L\to\infty$ with $L=o(\bar t_{\max})$, hence $\bar t_{\max}\to\infty$ and, by $\kappa<c_0/(\bar t_{\max}-1)$, a vanishing width penalty $\kappa=\kappa_T\to0$. Since $L\gtrsim\sigma^2\underline\gamma^{-2}\log(1/\varrho'_T)$, which is of order $\log K$ under a Beta--Bernoulli prior, consistency needs $\bar t_{\max}\gg\log K$. This is not satisfied by the default $(c_0,c_1,\kappa)=(1,1,1/3)$, which gives $\bar t_{\max}=3$: in the empirical application ($N=15$ countries, $T=24$ years) $\log K=5.8$, and $\log K=7.4$ at $N=30,T=60$. At a fixed cap the conclusion weakens to $\liminf_T\mathrm{PIP}^{W_j}\ge1-Ce^{-c\bar t_{\max}\gamma_T^2/\sigma^2}$. The asymptotic statement and the default calibration therefore describe different regimes: windows reported under $(1,1,1/3)$ do not inherit the consistency result of Theorem~\ref{thm:window_consistency_oa}.
\end{remark}

\subsection{Posterior convergence}

\begin{corollary}[Posterior model size]\label{cor:size}
Under the conditions of Theorem~\ref{thm:gauge}, $\mathbb E[\,|m|\mid z\,]\le2k^0h/\ell_T+\mathbb E[\mathcal F_h\mid z] =O(k^0h/\ell_T)+o_p(1)$.
\end{corollary}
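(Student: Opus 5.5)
The bound will come from splitting the included dates of a configuration $m$ into those lying within $h$ of some true date and those that are $h$-far. The $h$-far ones lying in $\mathcal S_T$ are exactly the ones counted by $\mathcal F_h$. Taking posterior expectations then gives
\[
\mathbb E[\,|m|\mid z\,]\le \mathbb E\bigl[\#\{s\in m:\operatorname{dist}(s,\mathcal T^0)\le h\}\mid z\bigr]+\mathbb E[\mathcal F_h\mid z]+\mathbb E\bigl[\#\{s\in m\setminus\mathcal S_T,\ s\ h\text{-far}\}\mid z\bigr].
\]
The middle term is the second summand of the claim, and it is $o_p(1)$ by Theorem~\ref{thm:gauge} applied with $h\ge R_T$.

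For the first term I will use the separation built into the model space. By Assumption~\ref{as:space}, every $m\in\mathcal M_T$ has pairwise gaps of at least $\ell_T$. The neighbourhood $[s^0_j-h,s^0_j+h]$ of each true date has length $2h+1$. It therefore contains at most $\lceil(2h+1)/\ell_T\rceil\le 2h/\ell_T+1$ dates of $m$. Summing over the $k^0$ true dates gives a deterministic bound $k^0(2h/\ell_T+1)$. Since $h\ge R_T\to\infty$ while $\ell_T=o(\Delta_T)$, the additive $k^0$ is either absorbed into the $O(k^0h/\ell_T)$ statement, or removed by the sharper count $\lfloor 2h/\ell_T\rfloor+1$. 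I will state the bound as $2k^0h/\ell_T$ up to this rounding, so that it matches the corollary, and note that the constant is immaterial for the $O(\cdot)$ form.

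The third term concerns $h$-far dates within $h$ of the sample boundary. Assumption~\ref{as:space} forces $\min(s-1,T+1-s)\ge\ell_T$ for every included date, so such dates can only lie in the boundary strips of width $h$. By the boundary-inclusive spacing \eqref{eq:spacing}, these strips lie at distance at least $\Delta_T-h$ from any true break. I will handle this term in one of two ways. The first is to note that Lemma~\ref{lem:mismatch} and Lemma~\ref{lem:scalar}(i) apply verbatim to these dates, since they still satisfy $n_s\ge\underline n_T$, so the gauge bound of Theorem~\ref{thm:gauge} extends to them. The alternative is to count them deterministically as at most $2h/\ell_T$ per unit. In the single-unit setting this too is $O(h/\ell_T)$.

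I expect the main obstacle to be bookkeeping rather than substance. The statement mixes the single-unit quantity $|m|$ with the panel count $\mathcal F_h$. I will therefore interpret $|m|$ per unit and aggregate at the end, as the setup prescribes. I will also check that the high-probability event of Theorem~\ref{thm:gauge}, of probability at least $1-2\eta$ with $\eta$ arbitrary, is what turns the bound into an $o_p(1)$ statement.
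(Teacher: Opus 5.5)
Your decomposition is the paper's: dates within $h$ of a true break are counted deterministically through the $\ell_T$-separation of Assumption~\ref{as:space}, and everything else is charged to $\mathcal F_h$ and Theorem~\ref{thm:gauge}. The step that fails is the last move of your second paragraph. There you try to reconcile your count $k^0(\lfloor 2h/\ell_T\rfloor+1)$ with the displayed $2k^0h/\ell_T$. The count itself is correct and equals $k^0\lceil(2h+1)/\ell_T\rceil$.

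Neither of your two justifications works. Theorem~\ref{thm:gauge} gives $\mathbb E[\mathcal F_h\mid z]=o_p(1)$ only when $(\tau\ell_T/2)^{k/(k+1)}/\{\log(NT)+h\}\to\infty$. Its proof also needs $t_0^2\le\tfrac12\beta_k(\tau\underline n_T)^{k/(k+1)}$, with $t_0^2\ge4\bar\gamma^2h/\sigma^2$. Either way this forces $h=o(\ell_T^{k/(k+1)})=o(\ell_T)$. The facts $h\ge R_T\to\infty$ and $\ell_T=o(\Delta_T)$ say nothing about $h/\ell_T$. So $k^0h/\ell_T\to0$ cannot absorb an additive $k^0$, and the ``sharper'' count $\lfloor 2h/\ell_T\rfloor+1$ equals $1$ once $2h<\ell_T$.

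The discrepancy is not a constant: the inequality as displayed is false whenever $k^0\ge1$. By Proposition~\ref{prop:coverage}, $\Pi(\mathcal B_{R_T}\mid z)=o_p(1)$. Since $\Delta_T>2h\ge 2R_T$, the dates covering different true breaks are distinct. Hence $\mathbb E[|m|\mid z]\ge k^0\{1-\Pi(\mathcal B_{R_T}\mid z)\}=k^0-o_p(1)$, whereas $2k^0h/\ell_T+\mathbb E[\mathcal F_h\mid z]=o_p(1)$. What your argument proves, and what is true, is $\mathbb E[|m|\mid z]\le k^0(\lfloor 2h/\ell_T\rfloor+1)+\mathbb E[\mathcal F_h\mid z]=O(k^0(1+h/\ell_T))+o_p(1)$. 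In the regime of the theorem this is $k^0+o_p(1)$. State that rather than forcing agreement with the display. The paper's own proof has the same slip: it counts $\lceil 2h/\ell_T\rceil$ dates per break, which is one too few when $\ell_T$ divides $2h$, and then writes $2h/\ell_T$.

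Your third term, $h$-far included dates outside $\mathcal S_T$, is something the paper passes over when it calls the remainder $\mathcal F_h$. Your first treatment of it is sound. Under Assumption~\ref{as:space} such a date still has both flanking distances at least $\ell_T$, hence $n_s\ge\ell_T/2$, so Lemma~\ref{lem:mismatch} and Lemma~\ref{lem:scalar}(i) apply unchanged. In the regime above the term is in fact empty. Eventually $\ell_T>h$, and Assumption~\ref{as:space} then gives $\min(s-1,T+1-s)\ge\ell_T>h$ for every included date, so $m\subseteq\mathcal S_T$. Your deterministic alternative would add a term not present in the bound and is unnecessary.
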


\begin{proof}
$\mathbb E[|m|\mid z]=\sum_s\Post{\delta_s=1}$. By Assumption~\ref{as:space} at most $\lceil 2h/\ell_T\rceil$ included dates lie within $h$ of any one true break, each contributing at most one. The remainder is $\mathbb E[\mathcal F_h\mid z]$, which is the \emph{unrestricted} far-date count because $\mathcal F^{\mathrm{iso}}_h=\mathcal F_h$ under Assumption~\ref{as:space}. A bound on the isolated count alone would omit clustered far dates.
\end{proof}

\begin{theorem}[Contraction of the mean function]\label{thm:contraction}
Let the conditions of Theorem~\ref{thm:gauge} and Proposition~\ref{prop:loc} hold, let $\omega_T\to\infty$ arbitrarily slowly, and put
\[
 \epsilon_T^2\;:=\;\frac{C}{T}\Bigl\{\underbrace{\frac{\bar\gamma^2\sigma^2k^0\log(NT)}{\gamma_T^2}}_{\text{localisation}}
 \;+\;\underbrace{\frac{\sigma^4(k^0)^2\log(KT)\,\omega_T}{\underline\gamma^{2}\ell_T}}_{\text{estimation}}\Bigr\}.
\]
Then, for the posterior draw $\mu=\alpha\mathbf 1+\sum_{s\in m}\gamma_su_s$ including the intercept, $\Post{T^{-1}\lVert\mu-\mu^0-\bar\mu^0\mathbf 1\rVert^2>\epsilon_T^2}\overset{p}\to0$. When $k^0=O(1)$ and $\gamma_T\asymp\bar\gamma\asymp\sigma$ this is the usual rate $\epsilon_T\asymp\{k^0\log(NT)/T\}^{1/2}$ for piecewise-constant regression with $k^0$ segments.
\end{theorem}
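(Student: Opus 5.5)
The plan is to bound the squared error of a posterior draw by splitting the mean function into a localisation part and an estimation part, working on the intersection of the high-probability events of Proposition~\ref{prop:coverage}, Proposition~\ref{prop:loc} and Theorem~\ref{thm:gauge}. On these events, with posterior probability tending to one, the draw $m$ has three properties. It lies in $\mathcal B_{R_T}^c$. Every true break has an included date within $d_T\asymp\sigma^2\underline\gamma^{-2}\log(NT^3/\eta)$, obtained by a union bound over $j\le k^0$ in Proposition~\ref{prop:loc} at $d=d_T$. And $m$ contains no $h$-far dates, since $\mathbb E[\mathcal F_h\mid z]\to0$ and Markov's inequality apply. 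By Corollary~\ref{cor:size} and Assumption~\ref{as:space}, $|m|=O(k^0)$ on this set, with cells of length at least $\ell_T$.

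For such $m$ I write $\mu-\mu^0-\bar\mu^0\mathbf 1=(\mu-H_mz)+H_m\varepsilon-(I-H_m)\mu^0$, up to the centring constant. \textbf{Localisation term.} I bound $\lVert(I-H_m)\mu^0\rVert^2$ deterministically, cell by cell, using the Dirichlet-energy identity \eqref{eq:dirichlet} and Proposition~\ref{prop:geom}(i). Each true break whose nearest included date is at distance $D_j\le d_T$ leaves energy at most $\bar\gamma^2D_j$, because a two-level contrast with a misplaced split costs at most $\gamma^2\min(\cdot)$. This gives a total of $O(k^0\bar\gamma^2 d_T)$, which matches the first term of $\epsilon_T^2$ times $T$. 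The factor $\gamma_T^{-2}$ in that term reflects the dependence of $d_T$ on $\underline\gamma$. \textbf{Projected noise.} $\lVert H_m\varepsilon\rVert^2$ is a $\chi^2_{|m|+1}$ variable indexed by at most $T^{2|m|}$ subspaces. A union bound gives $O(\sigma^2k^0\log(T/\eta))$ uniformly, and this is absorbed into the localisation term.

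\textbf{Shrinkage term.} It remains to control $\mu-H_mz$, that is, the deviation of the posterior coefficients from least squares given $m$. Under the iMOM slab with Cauchy tails, the conditional posterior of $\gamma_m$ given $m$ concentrates around $\hat\gamma_m$. Its bias is $O(1/(n_s|\gamma_s|^3))$ by the Rossell--Telesca bias bound quoted in Section~\ref{subsec:bisam1}, and its spread is $\sigma^2/n_s$ per coordinate in the conditional metric. Since $n_s\ge\ell_T/2$ and every included coordinate has $|\hat\gamma_s|\gtrsim\underline\gamma$ (it is $\vartheta$-relevant), the posterior spread of $\lVert\mu-H_mz\rVert^2$ is $O(\sigma^2k^0)$ times a conditioning factor. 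The conditioning factor comes from the Brownian-bridge Gram matrix \eqref{eq:bridge}, whose inverse has entries of order $1/\ell_T$. That yields a bound of order $\sigma^4(k^0)^2\log(KT)/(\underline\gamma^2\ell_T)$, with the $\log(KT)$ uniformity factor and the $\omega_T$ coming from a Markov/Chebyshev step on the posterior. Dividing by $T$ produces the second term of $\epsilon_T^2$.

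I expect the main obstacle to be this last step. The marginal likelihood does not factor under product iMOM (Remark~\ref{rem:whatBF}), so I cannot read the conditional posterior of $\gamma_m$ off a Gaussian directly. My first approach is to write the conditional density as the Gaussian likelihood times $\prod_s\pi_I(\gamma_s)$ and to show that the prior factor is log-Lipschitz with constant $O(1/\underline\gamma)$ on the region $|\gamma_s|\ge\underline\gamma/4$. I would then compare with the Gaussian $N(\hat\gamma_m,\sigma^2(\tilde X_m'\tilde X_m)^{-1})$ through a change-of-measure or Laplace argument. The mass near zero is exponentially small, because $\exp\{-\tau\sigma^2/\gamma^2\}$ combines with the likelihood curvature $n_s\ge\ell_T/2$. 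Finally, I would combine the three pieces with $(a+b+c)^2\le3(a^2+b^2+c^2)$.
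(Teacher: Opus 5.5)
Your skeleton matches the paper's. You use the same split $\mu-\mu^0=(\mu-H_mz)+H_m\varepsilon-(I-H_m)\mu^0$. You use the same cell-by-cell bound $\lVert(I-H_m)\mu^0\rVert^2\le\bar\gamma^2\sum_jD_j$, fed by Proposition~\ref{prop:loc} with a union bound over $j$, and the same gauge bound to discard $h$-far dates. Your uniform $\chi^2$ union bound for $\lVert H_m\varepsilon\rVert^2$ is more careful than the paper's ``conditionally on $m$''.

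\textbf{The step that fails is the shrinkage term.} In the fitted-value norm the Gram matrix cancels. Suppose $\gamma_m\mid m,z$ is close to $\mathcal N(\hat\gamma_m,\sigma^2(\tilde X_m'\tilde X_m)^{-1})$, which is what your change-of-measure argument is meant to deliver. Then $\lVert\tilde X_m(\gamma_m-\hat\gamma_m)\rVert^2$ has conditional mean $\sigma^2\operatorname{tr}\{\tilde X_m(\tilde X_m'\tilde X_m)^{-1}\tilde X_m'\}=\sigma^2|m|$. With the flat intercept, $\lVert\mu-H_mz\rVert^2$ behaves like $\sigma^2\chi^2_{|m|+1}$, however the Brownian-bridge kernel \eqref{eq:bridge} is conditioned. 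The per-coordinate variance $\sigma^2/n_s$ already contains the $1/\ell_T$, and mapping back to $\mathbb R^T$ multiplies it by $n_s$. Applying the $O(1/\ell_T)$ entries of the inverse kernel once more therefore double counts.

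The inverse kernel does legitimately enter the \emph{bias} from the log-gradient $\nabla_m$ of $\prod_{s\in m}\pi_I(\gamma_s)$. That shift has squared fitted-value norm $\sigma^4\nabla_m'(\tilde X_m'\tilde X_m)^{-1}\nabla_m=O(\sigma^4|m|/(\underline\gamma^2\ell_T))$, with constants depending on $\tau$ and $\sigma/\underline\gamma$. The spread, however, does not shrink. Your bound $C\omega_T\sigma^4(k^0)^2\log(KT)/(\underline\gamma^2\ell_T)$ tends to zero when $k^0=O(1)$ and $\omega_T$ grows slowly, because $\ell_T^{k/(k+1)}/\log K\to\infty$. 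So it describes an event whose conditional posterior probability tends to zero, not one. Its $\log(KT)$ and $\omega_T$ factors are not produced by any step of your argument.

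\textbf{The repair keeps your route.} On your good set $|m|\le k^0\lceil 2h/\ell_T\rceil$, which equals $k^0$ once $\ell_T>2h$. A $\chi^2$-type tail bound under the conditional posterior at level $C\sigma^2(|m|+1)\log(NT)$, plus the bias above, puts $\lVert\mu-H_mz\rVert^2$ below $C\sigma^2k^0\log(NT)$ with posterior probability tending to one. This is absorbed into the localisation term exactly as you absorbed $H_m\varepsilon$, since $\bar\gamma\ge\gamma_T$ gives $T\epsilon_T^2\ge C\sigma^2k^0\log(NT)$.

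For comparison, the paper does not attempt your Laplace argument. It asserts $\lVert\mu-H_m\mu^0\rVert^2=O_p(\sigma^2(|m|+1))$ given $m$. It then controls $|m|$ through Corollary~\ref{cor:size} and Markov, $\Pi(|m|>\omega_Tk^0h/\ell_T\mid z)\le1/\omega_T$, with $h\asymp R_T\asymp\sigma^2\underline\gamma^{-2}k^0\log(KT/\eta)$. That is the only source of $\omega_T$, $\sigma^4/\underline\gamma^2$, $(k^0)^2$, $\log(KT)$ and $1/\ell_T$ in the estimation term; no conditioning effect is involved. Your count $|m|\le k^0$ is actually the sharper one, because the bound $2k^0h/\ell_T$ in Corollary~\ref{cor:size} drops the ceiling in $\lceil 2h/\ell_T\rceil$.

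A minor point on localisation. Taking $d=d_T$ in Proposition~\ref{prop:loc} gives $\bar\gamma^2k^0d_T$, which carries $\underline\gamma^{-2}$ and, through $R_T/32$, an extra factor $k^0$. It matches the stated localisation term only when $k^0=O(1)$ and $\gamma_T\asymp\underline\gamma$. The paper instead uses $d=C\sigma^2\gamma_T^{-2}\log(k^0NT/\varrho'_T)$.
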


\begin{proof}
Decompose $\lVert\mu-\mu^0\rVert^2\le2\lVert(I-H_m)\mu^0\rVert^2+2\lVert\mu-H_m\mu^0\rVert^2$, both terms measured on the orthogonal complement of $\mathbf 1$, which is why the intercept is carried in $\mu$ and $\mu^0$ is centred.

\emph{Approximation error.} On $\{\max_jD_j\le d\}$ each true break has an included date within $d$, so by Proposition~\ref{prop:geom}(i) the residual of $\gamma^0_ju_{s^0_j}$ after projection is supported in the cell containing $s^0_j$ with squared norm at most $(\gamma^0_j)^2\min(s^0_j-a,b-s^0_j)\le\bar\gamma^2d$, and the residuals for different $j$ are supported on disjoint cells once $\Delta_T>2d$, so $\lVert(I-H_m)\mu^0\rVert^2\le\bar\gamma^2k^0d$. Taking $d=C\sigma^2\gamma_T^{-2}\log(k^0NT/\varrho'_T)\ge d_T$ and applying Proposition~\ref{prop:loc} with a union bound over $j$ makes $\Post{\max_jD_j>d}=o_p(1)$.

\emph{Estimation error.} $\mu-H_m\mu^0$ lies in $V_m$, of dimension $|m|+1$, and equals the projection of $\varepsilon$ onto $V_m$ plus the deviation of the posterior draw from that projection. Conditionally on $m$ both have squared norm $O_p(\sigma^2(|m|+1))$. By Corollary~\ref{cor:size} and Markov's inequality, $\Post{|m|>\omega_Tk^0h/\ell_T}\le1/\omega_T\to0$, and $h\asymp R_T$ gives $k^0h/\ell_T\asymp\sigma^2\underline\gamma^{-2}(k^0)^2\log(KT/\eta)/\ell_T$. Adding the two terms and dividing by $T$ gives the claim.
\end{proof}

\begin{corollary}[Convergence of the reported windows]\label{cor:hausdorff}
Let $\widehat{\mathcal W}$ be the union of all windows $W$ of width at most $\bar t_{\max}$ that satisfy $\mathrm{PIP}^W>P^{|W|}$, let $\hat k$ be its number of connected components, and let $d_H(A,B):=\max\{\sup_{a\in A}\inf_{b\in B}|a-b|,\sup_{b\in B}\inf_{a\in A}|a-b|\}$. Under Assumptions~\ref{as:nlp}--\ref{as:space}, with $R_T\le h\le\bar t_{\max}$ and $\bar t_{\max}=o(\Delta_T)$,
\[
 \PP\bigl(\hat k=k^0\bigr)\longrightarrow1,\qquad
 \PP\Bigl(\tfrac1T\,d_H\bigl(\widehat{\mathcal W},\mathcal T^0\bigr)
 \le\tfrac{2\bar t_{\max}}{T}\Bigr)\longrightarrow1 ,
\]
so the reported break \emph{fractions} converge to $\{\lambda^0_j\}=\{s^0_j/T\}$ at rate $\bar t_{\max}/T$.
\end{corollary}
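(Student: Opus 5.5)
The plan is to obtain both claims of Corollary~\ref{cor:hausdorff} from two inclusions that hold together with probability tending to one. First, every true date lies in a reported window; this is the potency direction. Second, every reported window lies within $\bar t_{\max}$ of some true date; this is the gauge direction. Everything is done on the intersection of the high-probability events of Proposition~\ref{prop:coverage}, Theorem~\ref{thm:gauge} and Theorem~\ref{thm:potency}, whose probability is at least $1-3\eta$ for every fixed $\eta$, with $\eta$ then taken to zero slowly. Counting statements aggregate over units by a union bound over the $N$ blocks, which the $\log(NT)$ terms in $R_T$ and $t_0$ already absorb.

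\emph{Step 1 (coverage).} For each $j$ take the window $W_j$ of Theorem~\ref{thm:potency}, centred at $s^0_j$, with width $L$ satisfying $L\to\infty$ and $L=o(\bar t_{\max})$. Such an $L$ exists because $\bar t_{\max}\to\infty$ and $R_T\le h\le\bar t_{\max}$, so $d_T\lesssim R_T$ is eventually below $L$. Theorem~\ref{thm:potency} then gives $\mathrm{PIP}^{W_j}\to1$ simultaneously in $j$, while $P^{L}$ stays bounded away from one. Hence each $W_j\subseteq\widehat{\mathcal W}$, and in particular $s^0_j\in\widehat{\mathcal W}$.

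\emph{Step 2 (exclusion).} Let $W$ be any window of width at most $\bar t_{\max}$ all of whose dates are more than $\bar t_{\max}\ge h$ from $\mathcal T^0$. Then $\mathrm{PIP}^W\le\sum_{s\in W}\Post{\delta_s=1}\le\mathbb E[\mathcal F_h\mid z]$, because each such $s$ is $h$-far. Dates in $\mathcal S\setminus\mathcal S_T$ need separate treatment. Under Assumption~\ref{as:space} they are never included, since $\ell_T$ exceeds the boundary gap eventually; this assumes $h=o(\ell_T)$, and otherwise I would fall back on the boundary-inclusive spacing \eqref{eq:spacing}. By Theorem~\ref{thm:gauge}, $\mathbb E[\mathcal F_h\mid z]\to0$ uniformly over all such $W$, because the bound is on the total count and not window by window. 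Since $P^{|W|}\ge P^1=c_1/(c_0+c_1)>0$, no such $W$ is reported. It follows that every point of $\widehat{\mathcal W}$ lies within $2\bar t_{\max}$ of some true date: a reported window meets the $\bar t_{\max}$-neighbourhood of $\mathcal T^0$ and has width at most $\bar t_{\max}$. Combined with Step 1, this gives $d_H(\widehat{\mathcal W},\mathcal T^0)\le2\bar t_{\max}$, which is the second display.

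\emph{Step 3 (component count).} Since $\bar t_{\max}=o(\Delta_T)$, the neighbourhoods $\{s:\operatorname{dist}(s,s^0_j)\le2\bar t_{\max}\}$ are pairwise disjoint and separated by gaps. By Step 2, $\widehat{\mathcal W}$ lies in their union, so $\hat k\ge k^0$ can only fail if one neighbourhood contains several components. I expect this to be the main obstacle: nothing so far rules out a reported window near $s^0_j$ that is disjoint from $W_j$, for instance a window at distance between $L/2$ and $2\bar t_{\max}$ from $s^0_j$. My plan is to use Proposition~\ref{prop:loc}. Under Assumption~\ref{as:space}, only one date of $m$ can lie within $\ell_T/2$ of $s^0_j$, and posterior mass of configurations with an included date in $[s^0_j-2\bar t_{\max},s^0_j+2\bar t_{\max}]$ but outside $W_j$ requires either $D_j\ge L/2$, which has vanishing probability by Proposition~\ref{prop:loc}, or a second included date at distance at least $\ell_T$ from the first but within $h$ of $s^0_j$. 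If $\ell_T$ is not larger than $2\bar t_{\max}$, I would bound that second case by a one-coordinate deletion argument. The second date is not $\vartheta$-relevant once the first is within $d_T$, so Assumption~\ref{as:nlp}(a) and Lemma~\ref{lem:scalar}(i) apply with the mismatch bound of Lemma~\ref{lem:mismatch}. If this fails, I would instead redefine the component count by merging components within distance $\bar t_{\max}$, which Step 2 already controls. Combining the steps gives $\hat k=k^0$ with probability tending to one, and dividing the Hausdorff bound by $T$ gives the rate $\bar t_{\max}/T$ for the break fractions.
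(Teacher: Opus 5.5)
Your Steps~1 and~2 are the paper's argument. Theorem~\ref{thm:potency} puts each $s^0_j$ in a reported window $W_j$. The bound $\mathrm{PIP}^W\le\sum_{s\in W}\Post{\delta_s=1}$, combined with the total-count bound of Theorem~\ref{thm:gauge}, rules out every window far from $\mathcal T^0$; the paper states this via Markov's inequality as $\Post{\exists s\in m:\operatorname{dist}(s,\mathcal T^0)>h}=o_p(1)$. Together these give $d_H(\widehat{\mathcal W},\mathcal T^0)\le2\bar t_{\max}$.

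You depart from the paper on the component count. The paper concludes $\hat k=k^0$ from the fact that the $k^0$ sets $\{s:|s-s^0_j|\le2\bar t_{\max}\}$ are pairwise disjoint and each is met. As you observe, that only yields $\hat k\ge k^0$; your sentence ``$\hat k\ge k^0$ can only fail'' should read $\hat k\le k^0$. Neither Theorem~\ref{thm:gauge}, which controls only $h$-far dates, nor Theorem~\ref{thm:potency} excludes a second reported window separated from $W_j$ by a gap, at distance between $(L-1)/2$ and $h$ from $s^0_j$. Your Step~3 is therefore needed for the statement as written, and the paper's disjointness argument does not supply it.

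Your sketch of Step~3 goes through once tightened.
\begin{itemize}
\item Work on the single event $\{D_j<d_T\}\cap\mathcal B_{R_T}^c$, whose complement has vanishing posterior mass by Propositions~\ref{prop:loc} and~\ref{prop:coverage}. Do not apply Proposition~\ref{prop:loc} date by date, or the bad-event probability gets multiplied by the number of candidates.
\item On that event, take $L\ge2d_T$ and any other included $s$ within $h$ of $s^0_j$ and outside $W_j$. Every true break in the cell of $s$ in $m^{-s}$ then lies within $d_T\vee R_T$ of a flanking date.
\item Lemma~\ref{lem:mismatch} does not apply as stated, since it requires $h$-farness and $\mathcal K_h$. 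The computation in its proof still gives $|\mathbb E[t_s]|\le2\bar\gamma\sqrt{d_T\vee R_T}/\sigma$.
\item With $n_s\ge\ell_T/2$ from Assumption~\ref{as:space}, Lemma~\ref{lem:scalar}(i) applies under the growth condition $(\tau\underline n_T)^{k/(k+1)}\gg\log(NT)+h$ that Theorem~\ref{thm:gauge} already needs. The resulting bound survives summation over all $K$ candidates.
\item Only the upper half of Assumption~\ref{as:nlp}(a) is used, so whether the second date is $\vartheta$-relevant does not matter.
\item Drop the fallback of merging nearby components; it would prove a different statement.
\end{itemize}

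Two smaller remarks. First, your handling of $\mathcal S\setminus\mathcal S_T$ addresses a point the paper passes over, and the fallback you name is the right one. Since $\Delta_T>2h$, those dates are $h$-far, and the per-candidate bound in the proof of Theorem~\ref{thm:gauge} applies to them unchanged. Second, ``$d_T\lesssim R_T$ is eventually below $L$'' does not follow from $R_T\le h\le\bar t_{\max}$ with $L=o(\bar t_{\max})$. Taking $L=\lfloor\bar t_{\max}/2\rfloor$ already keeps $P^L\le(c_1+c_0/2)/(c_0+c_1)$. You still need a margin such as $\bar t_{\max}\ge4d_T$, which the paper also leaves implicit.
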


\begin{proof}
By Theorem~\ref{thm:gauge} and Markov's inequality, $\Post{\exists s\in m:\operatorname{dist}(s,\mathcal T^0)>h}=o_p(1)$. Since $\mathrm{PIP}^W\le\sum_{s\in W}\Post{\delta_s=1}$, every window $W$ with $\operatorname{dist}(W,\mathcal T^0)>h$ has $\mathrm{PIP}^W=o_p(1)$, which is below $P^{|W|}\ge c_1/(c_0+c_1)>0$, and is not reported. Here it matters that Assumption~\ref{as:space} makes $\mathcal F_h$, not merely its isolated part, the object bounded in \eqref{eq:gaugebound}: a pair of adjacent far dates has $n_s\le1$ for both members and so contributes nothing to $\mathcal F^{\mathrm{iso}}_h$, so a bound on the isolated count alone would not exclude a spurious component. Every reported $W$ therefore meets $\{s:\operatorname{dist}(s,\mathcal T^0)\le h\}$ and, having width at most $\bar t_{\max}$, satisfies $W\subseteq\{s:\operatorname{dist}(s,\mathcal T^0)\le h+\bar t_{\max}\le2\bar t_{\max}\}$, which gives one half of $d_H$. By Theorem~\ref{thm:potency} each $s^0_j$ lies in a reported window, which gives the other half and shows each of the $k^0$ sets $\{s:|s-s^0_j|\le2\bar t_{\max}\}$ is met. These sets are pairwise disjoint because $\bar t_{\max}=o(\Delta_T)$, so $\widehat{\mathcal W}$ has exactly $k^0$ components.
\end{proof}

\begin{theorem}[Consistency of the window rule]\label{thm:window_consistency_oa}
Let Assumptions~\ref{as:nlp}--\ref{as:space} hold, let the reporting cap satisfy $\bar t_{\max}\to\infty$ and $\bar t_{\max}=o(\Delta_T)$, equivalently let the cost ratio $\kappa/c_0$ vanish slowly, and let $R_T\le h\le\bar t_{\max}$. Then, as $T\to\infty$:
\begin{enumerate}
\item[\textup{(i)}] \emph{(Potency.)} for each true break $s^0_j$ there is a window
$W_j\ni s^0_j$ of width at most $\bar t_{\max}$ with $\mathrm{PIP}^{W_j}\overset{p}\to1$, whose threshold $P^{|W_j|}$ stays bounded away from one, so the rule \eqref{eq:window_rule_app} reports $W_j$ with probability tending to one;
\item[\textup{(ii)}] \emph{(Gauge.)} every window $W$ of width at most
$\bar t_{\max}$ with $\operatorname{dist}(W,\mathcal T^0)>\bar t_{\max}$ satisfies $\mathrm{PIP}^{W}\overset{p}\to0$ and is not reported.
\end{enumerate}
Consequently $\widehat{\mathcal W}$ has exactly $k^0$ connected components, one containing each true break and each localised to within $2\bar t_{\max}$ of the corresponding true date, with probability tending to one.
\end{theorem}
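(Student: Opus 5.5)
The plan is to assemble Theorem~\ref{thm:window_consistency_oa} from the results already proved, working throughout on the intersection of the high-probability events of Proposition~\ref{prop:coverage} and Theorem~\ref{thm:gauge}, which has probability at least $1-2\eta$. Since $\eta$ is arbitrary, convergence in probability then follows by letting $\eta\downarrow0$ slowly. The first task is to check that the hypotheses are mutually compatible. Choose $h$ with $R_T\le h\le\bar t_{\max}$; this requires $\bar t_{\max}\gtrsim\log(KT)$, which holds along the regime in which $\kappa/c_0$ vanishes slowly. Assumption~\ref{as:design} ($\bar t_{\max}=o(\Delta_T)$) then gives $\Delta_T>2h$ and $2R_T\le\Delta_T$. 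Assumption~\ref{as:space} gives $\underline n_T=\ell_T/2$ with $(\tau\underline n_T)^{k/(k+1)}/\log K\to\infty$. To apply the gauge conclusion we also need this to dominate $h$ itself; I would take $h$ of order $R_T\asymp\log(KT)$, so that the condition $(\tau\underline n_T)^{k/(k+1)}/\{\log(NT)+h\}\to\infty$ of Theorem~\ref{thm:gauge} follows from Assumption~\ref{as:space}.

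For part (i), take $W_j$ to be the window centred at $s^0_j$ of width $L_T$. Choose $L_T\to\infty$ with $2d_T\vee C\underline\gamma^{-2}\sigma^2\log(k^0/\varrho'_T)\le L_T$ and $L_T=o(\bar t_{\max})$; this is possible because $d_T$ and $\log(1/\varrho'_T)$ are $O(\log K)$ by Assumption~\ref{as:prior2}, while $\bar t_{\max}$ may be taken to grow faster. Theorem~\ref{thm:potency} then gives $\mathrm{PIP}^{W_j}\ge1-(c_-\varrho'_T)^{-1}e^{-\gamma_T^2L_T/16\sigma^2}-\Post{\mathcal B_{R_T}}$. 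The last term is at most $(KT/\eta)^{-\zeta}$ by \eqref{eq:covbound}. The middle term vanishes because $\gamma_T^2L_T/\sigma^2\gg\log(k^0/\varrho'_T)$, and a union bound over the $k^0$ windows gives the simultaneous statement. For the threshold, I will use the inequality $P^{L_T}\le\{c_1+c_0(L_T-1)/(\bar t_{\max}-1)\}/(c_0+c_1)\to c_1/(c_0+c_1)<1$ recorded in Theorem~\ref{thm:potency}.

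For part (ii), fix a window $W$ with $|W|\le\bar t_{\max}$ and $\operatorname{dist}(W,\mathcal T^0)>\bar t_{\max}\ge h$. Every $s\in W$ is then $h$-far. Near the sample boundary, candidates outside $\mathcal S_T$ carry zero posterior mass under $\mathcal M_T$ once $\ell_T>h$, or else I would enlarge $\ell_T$ accordingly. By the union bound, $\mathrm{PIP}^W\le\sum_{s\in W}\Post{\delta_s=1}\le\mathbb E[\mathcal F_h\mid z]$, which tends to $0$ in probability by Theorem~\ref{thm:gauge}. Because this bound is uniform over all such $W$, the supremum over $W$ also tends to $0$, and is therefore eventually below $c_1/(c_0+c_1)\le P^{|W|}$.

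For the final claim, the connected-component count and the localisation within $2\bar t_{\max}$ follow verbatim from Corollary~\ref{cor:hausdorff}, whose hypotheses are exactly those just verified. The main obstacle is the bookkeeping of rates in part (i). We need $L_T$ to sit strictly between $\max(d_T,\log(1/\varrho'_T))=O(\log(KT))$ and $\bar t_{\max}$, while also requiring $h\in[R_T,\bar t_{\max}]$ and $h$ to be dominated by $\ell_T^{k/(k+1)}$. I would resolve this by making explicit that ``$\kappa/c_0$ vanishes slowly'' is read as $\bar t_{\max}/\log(KT)\to\infty$. Without that reading, the gap in Remark~\ref{rem:nearmiss} shows the conclusion fails.
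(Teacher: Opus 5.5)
Your proposal is correct and follows essentially the same route as the paper, which likewise obtains (i) from Theorem~\ref{thm:potency} at an intermediate width $L\to\infty$ with $L=o(\bar t_{\max})$, and obtains (ii) together with the component and localisation claim from Corollary~\ref{cor:hausdorff} and its union-bound proof. The only difference is in the bookkeeping, and it works in your favour: the paper applies the corollary at $h=\bar t_{\max}$, whereas your choice $h\asymp R_T$ is what lets the rate condition $(\tau\underline n_T)^{k/(k+1)}/\{\log(NT)+h\}\to\infty$ of Theorem~\ref{thm:gauge} and the restriction to $\mathcal S_T$ (via $\ell_T>h$) follow from Assumption~\ref{as:space}; likewise, your explicit reading $\bar t_{\max}/\log(KT)\to\infty$ states the requirement of Remark~\ref{rem:nearmiss}, which the paper's ``exists by $\bar t_{\max}\to\infty$'' leaves implicit.
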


\begin{proof}
Part~(i) is Theorem~\ref{thm:potency}, applied at an intermediate width $L\to\infty$ with $L=o(\bar t_{\max})$, which exists by $\bar t_{\max}\to\infty$ and is admissible by Assumption~\ref{as:design}. The threshold statement is the last display of that theorem. Part~(ii) and the final claim are Corollary~\ref{cor:hausdorff} and its proof, at $h=\bar t_{\max}$. What the theorem does \emph{not} assert, and what Proposition~\ref{prop:resolution} shows is unavailable at fixed signal, is convergence of the reported window to a point. Remark~\ref{rem:nearmiss} records what survives when the cap is held fixed at its default value.
\end{proof}

\subsection{Resolution: why a window and not a date}

\begin{proposition}[No exact recovery at fixed signal]\label{prop:resolution}
Fix $r\ge1$ with $r=o(\Delta_T)$ and let $m_0,m_r$ be two configurations differing only in that a date sits at $s^0_j$ rather than at $s^0_j+r$, both isolated with flanking dates $a<s^0_j<b$. Then
\begin{enumerate}
\item[\textup{(i)}] the energy left by the displacement is
$(\gamma^0_j)^2\,r\,(n_1-r)/n_1=(\gamma^0_j)^2r\{1+o(1)\}$, where $n_1=s^0_j+r-a$ is the length of the segment containing the mismatch;
\item[\textup{(ii)}] consequently $\log\mathrm{BF}(m_0:m_r)=O_p(1)$: the two
configurations have the same size, hence equal prior mass under Assumption~\ref{as:prior2}, and their posterior odds are bounded in probability. No energy-based argument can therefore deliver $\Post{D_j=0}\to1$ at fixed signal;
\item[\textup{(iii)}] Proposition~\ref{prop:loc} gives $D_j=O_p(\log(KT))$, not
$O_p(1)$: the multiplicity factor $1/\varrho'_T$ is what limits the resolution.
\end{enumerate}
\end{proposition}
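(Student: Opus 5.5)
The plan is to treat the three parts in order. Part (i) is a deterministic cell computation, (ii) is a Bayes-factor comparison of two equal-size configurations built from Lemma~\ref{lem:local} and Assumption~\ref{as:nlp}(b), and (iii) reads off the localisation rate from Proposition~\ref{prop:loc}.

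\emph{Part (i).} Since $m_0$ and $m_r$ share every date except the one at $s^0_j$ versus $s^0_j+r$, and both are isolated with flanking dates $a<s^0_j<b$, the residuals $(I-H_{m_0})\mu^0$ and $(I-H_{m_r})\mu^0$ agree outside $[a,b)$. Spacing $\Delta_T\gg r$ ensures $s^0_j$ is the only true break in $[a,b)$. Under $m_0$ the two segments $[a,s^0_j)$ and $[s^0_j,b)$ are constant for $\mu^0$, so no energy is left in the cell. Under $m_r$ the segment $[s^0_j+r,b)$ is still constant. The segment $[a,s^0_j+r)$, of length $n_1$, carries a two-level contrast with $n_1-r$ periods at one level and $r$ at the other, jump $\gamma^0_j$. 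By the computation in Proposition~\ref{prop:geom}(i) (product of group sizes over their sum, as in Lemma~\ref{lem:energy}), its residual energy is $(\gamma^0_j)^2 r(n_1-r)/n_1$. This equals $(\gamma^0_j)^2 r\{1+o(1)\}$ because $n_1\ge s^0_j-a\to\infty$ while $r$ stays small relative to it.

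\emph{Part (ii).} The configurations have equal size, so under Assumption~\ref{as:prior2} they have equal prior mass, and the posterior odds are $M(z\mid m_0)/M(z\mid m_r)$. Factor these as $M^L G$.

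For the local half, use the non-nested display in the proof of Lemma~\ref{lem:local}: the $\log(1+g)$ terms cancel, leaving $\tfrac{g}{1+g}\{\lVert(I-H_{m_r})z\rVert^2-\lVert(I-H_{m_0})z\rVert^2\}/(2\sigma^2)$. Expand $z=\mu^0+\varepsilon$. The bracket is the energy from (i), which is $O(r)=O(1)$ at fixed $r$, plus a cross term $2\langle A,\varepsilon\rangle$ with $\lVert A\rVert^2$ equal to that energy. It also contains $\lVert H_{m_0}\varepsilon\rVert^2-\lVert H_{m_r}\varepsilon\rVert^2$, which is a difference of projections onto subspaces that agree except on one cell. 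I would write that difference as $\langle\tilde u,\varepsilon\rangle^2/n-\langle\tilde u',\varepsilon\rangle^2/n'$ for the two single residualised directions inside $[a,b)$; each is a $\sigma^2\chi^2_1$, so the difference is $O_p(1)$. Hence $\log M^L(z\mid m_0)/M^L(z\mid m_r)=O_p(1)$.

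For the non-local half, apply Assumption~\ref{as:nlp}(b) in both directions. This needs $|m_0\triangle m_r|=2\le5k^0$ and that the swapped coordinate is $\vartheta$-relevant in both models. I expect this to be the main obstacle. I would verify it with Proposition~\ref{prop:geom}(iii): the partial signal of the date at $s^0_j+r$ in $m_r$ is the CUSUM $|\gamma^0_j|(s^0_j-a)(b-s^0_j-r)/((b-a)\sqrt{n})$. This is of order $\gamma_T\sqrt{\min(s^0_j-a,b-s^0_j)}$ up to $1+o(1)$ when $r=o(\Delta_T)$, and exceeds $\theta$ provided the flanking dates lie at distance $\gtrsim r_T$. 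I would add this as a standing isolation condition if it is not already implied by "isolated". Then $c_-\le G_{m_0}/G_{m_r}\le 1/c_-$, and the log Bayes factor is $O_p(1)$. Since $\Pi(m_r\mid z)\ge e^{-O_p(1)}\,\Pi(m_0\mid z)$, summing over the injective map $m_0\mapsto m_r$ shows $\Pi(D_j=r\mid z)$ is bounded below by a non-vanishing multiple of $\Pi(D_j=0\mid z)$, so $\Pi(D_j=0\mid z)\not\to1$.

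\emph{Part (iii).} Apply Proposition~\ref{prop:loc}: $\Pi(D_j\ge d\mid z)\le (c_-\varrho'_T)^{-1}e^{-(\gamma^0_j)^2d/(8\sigma^2)}+\Pi(\mathcal B_{R_T}\mid z)$. The last term vanishes by Proposition~\ref{prop:coverage}. The first vanishes exactly when $d\ge d_T$ and $d\gg\sigma^2\gamma_T^{-2}\log(1/\varrho'_T)$. Since $\log(1/\varrho'_T)=O(\log K)$ and $d_T\asymp\log(NT^3)\vee R_T$, both are $O(\log(KT))$, so $D_j=O_p(\log(KT))$. Comparing with (ii), the $e^{-O(d)}$ energy penalty can beat the prior odds only once $d\gtrsim\log(1/\varrho'_T)$, which identifies the multiplicity factor as what limits resolution.
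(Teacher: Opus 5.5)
Your parts (i) and (iii) follow the paper's proof. For (ii) you take a different route.

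\textbf{The paper's route for (ii).} The paper's proof of (ii) goes through Lemma~\ref{lem:scalar}. Because $\mathrm{BF}_s$ is a one-coordinate Bayes factor against $m^{-s}$, this amounts to comparing $m_0$ and $m_r$ with their common parent $m':=m_0\setminus\{s^0_j\}$. That comparison invokes only the one-coordinate form, Assumption~\ref{as:nlp}(a). The scalar representation then gives $\log\mathrm{BF}_s=\tfrac12t_s^2-\tfrac12\log\lambda_s+O(1)$ for both added dates, since $p$ is bounded and $w_*$ stays in a compact set away from zero. So the two log Bayes factors differ by $\tfrac12(t^2_{s^0_j}-t^2_{s^0_j+r})+O(1)$, which is the energy of (i) plus $O_p(1)$ noise.

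\textbf{Your route for (ii).} You compare $m_0$ and $m_r$ directly through $M=M^LG_m$. For the local half you use the non-nested display of Lemma~\ref{lem:local}, split explicitly into energy, a Gaussian cross term and a difference of two $\chi^2_1$ terms. For the $G$-ratio you apply Assumption~\ref{as:nlp}(b) in both directions. This is valid, and it makes the noise accounting more transparent than the paper's one-line appeal.

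\textbf{What your route costs.} It needs more checking. Part (b) requires status preservation for \emph{every} coordinate, not just the swapped one. Under the swap, the cells of the flanking dates $a$ and $b$ change, so they must be checked as well. This is generically harmless: if the adjacent cells contain no other true break, then $\vartheta_a(m_0)=0$ and $\vartheta_a(m_r)\le|\gamma^0_j|r/\sqrt{\ell_T}<\theta$, and likewise for $b$. The paper's route through $m'$ avoids this, because part (a) only asks the added coordinate to be relevant.

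\textbf{Relevance of the swapped date is automatic.} The relevance you flag as the main obstacle needs no extra isolation condition. Under Assumption~\ref{as:space}, both $n_{s^0_j}$ and $n_{s^0_j+r}$ are at least $\ell_T/2$. Also $\ell_T^{k/(k+1)}/\log K\to\infty$ forces $\ell_T\gg R_T\gtrsim r_T$, so $\vartheta>\theta$ for both swapped dates.

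\textbf{Two smaller corrections.} First, in (i), $r=o(\Delta_T)$ does not imply that $s^0_j$ is the only true break in $[a,b)$. It says nothing about how far $a$ and $b$ lie from $s^0_j$. The absence of other breaks in the cell is an implicit part of the hypothesis; the paper uses it the same way when it confines the mismatch to $[s^0_j,s^0_j+r)$. Without it, the energy difference is not $(\gamma^0_j)^2r(n_1-r)/n_1$.

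Second, your closing step deriving $\Pi(D_j=0\mid z)\not\to1$ goes beyond what the proposition asserts, and as written it is not justified. Summing over $m_0\mapsto m_r$ requires two things you do not supply:
\begin{enumerate}
\item The $O_p(1)$ bound must hold uniformly over all flanking pairs $(a,b)$. The supremum of each individual $\chi^2_1$ term is $O_p(\log T)$, so you would need to exploit the near-cancellation between the two highly correlated statistics.
\item The map must land in $\mathcal M_T$. This fails whenever $b-s^0_j<\ell_T+r$.
\end{enumerate}
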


\begin{proof}
(i) By Proposition~\ref{prop:geom}(i) the mismatch is confined to the $r$ periods $[s^0_j,s^0_j+r)$ inside the segment $[a,s^0_j+r)$ of length $n_1$, on which the fit is the segment mean. The residual sum of squares of a two-level contrast splitting $n_1$ periods into $n_1-r$ and $r$ is $(\gamma^0_j)^2r(n_1-r)/n_1$. The relevant length is $n_1$, the segment carrying the mismatch, and not $b-a$. (ii) is immediate from (i) and Lemma~\ref{lem:scalar}, and (iii) from Proposition~\ref{prop:loc} at $d\asymp\sigma^2\underline\gamma^{-2}\log(1/\varrho'_T)$.
\end{proof}

\begin{corollary}[Exact recovery under a growing signal]\label{cor:exactrec}
Suppose $\gamma_T^2 / \{\sigma^2\log(NT^3/\eta)\}\to\infty$, so that Assumption~\ref{as:design}'s requirement $\bar\gamma<\infty$ is replaced by a growing signal. Then $d_T\le1$ and Proposition~\ref{prop:loc} at $d=1$ with a union bound over $j$ gives $\Post{\max_jD_j\ge1}\le k^0(c_-\varrho'_T)^{-1}e^{-\gamma_T^2/8\sigma^2}\to0$: $\mathcal T^0$ is recovered exactly and a fixed cap $\bar t_{\max}=1$ suffices.
\end{corollary}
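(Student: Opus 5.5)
The corollary should follow from Proposition~\ref{prop:loc} once we check that its hypotheses survive the growing-signal regime. I would take three steps: check that $d_T\le 1$, apply the localisation bound at $d=1$, and then take a union bound over $j$.

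\emph{Step 1: show $d_T\le1$.} Recall $d_T=\lceil C\sigma^2\underline\gamma^{-2}\log(NT^3/\eta)\rceil\vee\lceil R_T/32\rceil$. Here the floor $\underline\gamma$ is replaced by $\gamma_T$ itself, and $\gamma_T^2/\{\sigma^2\log(NT^3/\eta)\}\to\infty$. So the first term is eventually $\le1$. For the second term we need $R_T/32\le1$. In \eqref{eq:RT}, $R_T$ is proportional to $\sigma^2\gamma_T^{-2}k^0\log(KT/\eta)$, with $k^0$ fixed. Since $\log(KT/\eta)\le\log(NT^2/\eta)\le\log(NT^3/\eta)$, the same growth condition gives $R_T\to$ its minimal value $1$, and hence $\lceil R_T/32\rceil=1$. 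I would note explicitly that $\bar\gamma<\infty$ was only used in Proposition~\ref{prop:loc} through two things: $w_*$ lying in a compact set away from zero, and the proxy radius $r_T$. Under a growing signal $w_*$ diverges. Lemma~\ref{lem:scalar}(ii) still applies, because the Cauchy-like tails give $\log p(w_*)\ge-(\nu+1)\log|w_*|-O(1)$. This is of order $\log(\gamma_T/\sigma)$, which is negligible against $\gamma_T^2/\sigma^2$.

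\emph{Step 2: localisation at $d=1$.} On the event of Proposition~\ref{prop:coverage}, Proposition~\ref{prop:loc} with $d=1\ge d_T$ gives
\[
\Post{D_j\ge1}\le (c_-\varrho'_T)^{-1}e^{-(\gamma^0_j)^2/8\sigma^2}+\Post{\mathcal B_{R_T}}.
\]
The second term is bounded by \eqref{eq:covbound}. With $R_T\ge1$ and a growing $\gamma_T$, the exponent $-\gamma_T^2R_T/16\sigma^2$ dominates $Ck^0\log(KT/\eta)$, so this term vanishes.

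\emph{Step 3: union bound over $j$.} Summing over $j\le k^0$ gives $\Post{\max_jD_j\ge1}\le k^0(c_-\varrho'_T)^{-1}e^{-\gamma_T^2/8\sigma^2}+o_p(1)$. Assumption~\ref{as:prior2} gives $\log(1/\varrho'_T)=O(\log K)$, and $\log K$ is $o(\gamma_T^2/\sigma^2)$. So the right-hand side tends to zero. Then $D_j=0$ for all $j$ means that every true date is included. Combining this with Theorem~\ref{thm:gauge}, which excludes far dates, and with Assumption~\ref{as:space}, which excludes near duplicates, yields exact recovery of $\mathcal T^0$. Since each singleton window $\{s^0_j\}$ then has PIP tending to one, a fixed cap $\bar t_{\max}=1$ with threshold $P^1$ suffices.

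\emph{Main obstacle.} The delicate step is Step 1's claim that the constants in Proposition~\ref{prop:loc} stay uniform once $\bar\gamma$ is unbounded. In particular, the lower bound on $\min p(w)$ near $w_*$ and the relevance threshold $\theta$ both change with the signal. I would handle this by re-deriving the lower bound $\log\mathrm{BF}\ge\tfrac18\gamma_T^2 d/\sigma^2$ directly from Lemma~\ref{lem:scalar}(ii), absorbing the polynomial tail penalty.
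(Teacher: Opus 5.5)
Your argument is the paper's: read $\underline\gamma$ as the growing $\gamma_T$ so that both terms of $d_T$ equal one, apply Proposition~\ref{prop:loc} at $d=1$, and take a union bound over $j$. You are in fact more careful than the paper in two places: you keep the remainder $\Post{\mathcal B_{R_T}}$, which the corollary's displayed bound drops, and you check that $w_*\to\infty$ costs only $O(\log(\gamma_T/\sigma))$ through the Cauchy-like tails.

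The one step that goes beyond the paper needs more care. Invoking Theorem~\ref{thm:gauge} unchanged to exclude spurious dates is not immediate, because its $t_0$ contains $2\bar\gamma\sqrt h/\sigma$, which now diverges. It can be repaired: on $\{\max_jD_j=0\}$ every non-true candidate has $\mathbb E[t_s]=0$, so the gauge argument can be run with $h=0$, and the complement is controlled because $K\cdot\Post{\max_jD_j\ge1}\to0$ still holds.
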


Part~(ii) of Proposition~\ref{prop:resolution} is the Bayesian counterpart of a familiar frequentist fact. At fixed break magnitude the least-squares change-point date estimator is $O_p(1)$ rather than consistent \citep{bai2003critical}, because the competing configurations then differ by an amount that does not grow with the sample, whatever asymptotic regime is adopted. It is why Corollary~\ref{cor:hausdorff} is stated for the window set and in the fraction scale.

\subsection{What is assumed, and why}

\begin{remark}[The clustering limit and the width penalty]\label{rem:cluster}
A candidate adjacent to an included date has $n_s=O(1)$ by Proposition~\ref{prop:geom}(i), exactly $0.995$, $1.980$ and $2.956$ at separations $1$, $2$ and $3$ in a sample of $T=500$, so by \eqref{eq:bfrep} its Bayes factor is $O(1)$ for \emph{any} slab: no prior on $\gamma$ suppresses it at a fixed practical-significance threshold, the only escape being $\tau_T\to\infty$, which is the growing-signal regime of Corollary~\ref{cor:exactrec}. Assumption~\ref{as:space} removes such configurations from the model space, and this is what Corollary~\ref{cor:size} and Corollary~\ref{cor:hausdorff} rest on.

It is worth recording why the restriction is not merely convenient. One would like to argue instead that the model prior suppresses clusters: peeling a run of $q$ adjacent far dates, the last removal is isolated and contributes the factor of Lemma~\ref{lem:scalar}(i) while the other $q-1$ contribute $\varrho_T\cdot O(1)$ each, so that summing over the $O(T)$ run positions gives a convergent geometric series whenever $\varrho_T\to0$. That argument is not available with the bounds proved here: uniformly over configurations, Lemma~\ref{lem:scalar} only gives $\mathrm{BF}_s\le e^{t_0^2/2}$ when $\lambda_s=O(1)$, and $e^{t_0^2/2}\ge2NT^3/\eta$, so each clustered date costs $\varrho_Te^{t_0^2/2}\gg1$ and the series diverges. A multi-coordinate treatment of a whole run faces the same uniformity cost. Under a fixed $\omega$ the prior ratio is the constant $\omega/(1-\omega)$ and there is no suppression at all. Assumption~\ref{as:space}, or an explicit assumption that the posterior mass of clustered far configurations vanishes.

Two consequences survive unchanged. Spurious adjacent dates inflate the \emph{width} of a reported window rather than creating additional windows, which is what the width penalty $\kappa$ prices. The three failure modes of a saturation procedure map onto three distinct devices: isolated false positives onto the non-local slab (Theorem~\ref{thm:gauge}), clustered false positives onto $\kappa$ and the collapsing in Assumption~\ref{as:space}, and missed breaks onto signal strength (Theorem~\ref{thm:potency}).
\end{remark}

\begin{remark}[Relation to the high-dimensional non-local prior literature]\label{rem:eigenvalue}
The model-selection consistency results of \citet{rossell2012nlp_modsel} and \citet{rossell2017nonlocal} are not used, and their regularity conditions are unavailable here: the condition $\lambda_{\min}(X_n'X_n)>nc$ of \citet[Theorem~1]{rossell2012nlp_modsel} is imposed on the full candidate matrix, which contains near-duplicate columns ($\lVert u_s-u_{s+1}\rVert^2=1$), while condition~D2 of \citet{rossell2017nonlocal} is imposed per model and fails for every configuration of growing size, since Proposition~\ref{prop:geom}(i) gives $\lambda_{\min}(X_m'X_m)\le\tfrac12\min_ig_i\le T/\{2(|m|-1)\}$. For $T=400$ and $|m|=40$ equally spaced dates, $\lambda_{\min}=1.63$ against the bound $5.13$. What \emph{is} used from that literature is the factorisation $M=M^LG_m$ of \citet[Proposition~1]{rossell2017nonlocal}, in the form of Lemma~\ref{lem:local} and Assumption~\ref{as:nlp}, together with the univariate rate in the explicit form of Lemma~\ref{lem:scalar}. Because the slab of \eqref{eq:ssvs_pri} is applied independently to each $\gamma_{i,s}$, the joint slab is the product (piMOM) density and the per-coordinate argument is what the product structure licenses. The multivariate iMOM of \citet[Section~3.3]{rossell2010nlp_test} vanishes only when every coordinate does, and Theorem~\ref{thm:gauge} would fail under it.
\end{remark}

%------------------------------

\section{Prior Parametrization and Posterior Sampling Sampler}\label{app:prior_posterior}

This section presents details on the default parametrization and hyperparameter choices for BISAM as well as the sampling strategy for posterior inference.

\subsection{Prior Parametrization}\label{app:prior_settings}

%------------------------------------------------------------------------------
% TABLE -- HYPERPARAMETERS
%------------------------------------------------------------------------------

Table~\ref{tab:hyperpars} collects the default hyperparameter settings of BISAM. Where possible, defaults follow the recommendations for non-local priors in \citet{rossell2010nlp_test}, \citet{rossell2012nlp_modsel} and \citet{rossell2017nonlocal}, adapted to the indicator-saturation setting. The two hyperparameters that matter most in practice are the slab scale $\tau$ and the
prior inclusion probability $\omega$. Both are given an interpretable calibration rather than a conventional default. In particular, $\tau$ is set by inverting $P(|\gamma_{i,s}| \leq \bar{\gamma} \mid \tau) = \alpha$ for a practically relevant threshold $\bar{\gamma}$, which is the indicator-saturation analogue of the ``practical significance" calibration of \citet{rossell2010nlp_test}, who set $\tau = 0.133$ for the piMOM prior so that $P(|\theta_i|/\sigma \leq 0.2) = 0.01$. Since a break of less than one error standard deviation is of little substantive interest in the applications considered here, we use $\bar{\gamma} = \sigma$ instead of $\bar{\gamma} = 0.2\sigma$, which yields the markedly larger values reported below.

\begin{table}[htbp!]
\centering
\footnotesize
\setlength{\tabcolsep}{4pt}
\renewcommand{\arraystretch}{1.2}
\caption{Standard hyperparameter settings for BISAM.}
\label{tab:hyperpars}
\begin{tabular}{@{}>{\raggedright\arraybackslash}p{2.4cm}>{\raggedright\arraybackslash}p{3.9cm}>{\raggedright\arraybackslash}p{2.6cm}>{\raggedright\arraybackslash}p{4.4cm}@{}}
\toprule
\textbf{Parameter} & \textbf{Role} & \textbf{Default} & \textbf{Rationale / source} \\
\midrule
\multicolumn{4}{@{}l}{\textit{Panel A: Break function (slab)}} \\
\addlinespace[2pt]
$\gamma_0$ & Centring value of the iMOM slab & $0$ & Non-locality around the Null of no break. \\
$k$ & Order of the iMOM prior & $1$ & \citet{rossell2010nlp_test}, controlling the rate at which the density vanishes at $\gamma_0$. \\
$\nu$ & Shape (tail index) of the iMOM slab & $1$ & Cauchy-like tails, minimal shrinkage of large breaks. \\
$\tau$ & Slab scale & $1.92$ \newline ($3.32$) & Solves $P(|\gamma_{i,s}| \leq \sigma \mid \tau) = 0.05$ ($0.01$). Analogue of the default in \citet{rossell2010nlp_test}. \\
$\omega$ & Prior inclusion probability of a step-shift & $0.5$ & Uniform prior on the model space. Minimizes misclassification (median probability model of \citet{barbieri2004optimal}). \\
$(\zeta_0,\zeta_1)$ & Beta--Bernoulli hyperparameters (if $\omega$ is treated hierarchically) & $(1,1)$ & Agnostic sparsity-inducing alternative to fixed $\omega$, see \citet{scott2010bayes}. \\
\addlinespace[3pt]
\multicolumn{4}{@{}l}{\textit{Panel B: Mean function and error variance}} \\
\addlinespace[2pt]
$\underline{{b}}$ & Prior mean of ${\beta}$ & Break-free OLS estimate & $f$-prior \citep{ohagan1995}; estimated on observations below the $90$th percentile of the squared OLS residuals to limit outlier contamination. Set to $\bm{0}$ under the $g$-prior \citep{zellner1986}. \\
$g$ & Prior variance scale of $\beta$ & $100$ & Weakly informative to control only the nuisance coefficients. \\
$m_0$ & IG shape of $\sigma^2_i$ & $3$ & Ensures a finite prior variance. \\
$n_0$ & IG rate of $\sigma^2_i$ & Solves $P(\sigma^2_i \leq \hat{\sigma}^2_{OLS}) = 0.9$ & Empirical-Bayes calibration on the break-free benchmark residuals \citep[cf.][]{chipman2010}. Uninformative alternative: $m_0 = n_0 = 0.01$. \\
\addlinespace[3pt]
\multicolumn{4}{@{}l}{\textit{Panel C: Outlier component}} \\
\addlinespace[2pt]
$(a_{\eta}, b_{\eta})$ & Beta prior on the outlier probability $\eta$ & $(1,10)$ & Prior mean of $\approx 9\%$ outlying observations; deliberately sparse. \\
$\tau^{\varepsilon}$ & Scale of the iMOM error component & $10$ & Implies a variance inflation of $2\tau^{\varepsilon} = 20$ and a downweighting factor of $1/\sqrt{2\tau^{\varepsilon}} \approx 0.22$ for flagged observations. \\
$(\nu, k)$ & Shape/order of the iMOM error component & $(3,1)$ & $\nu = 3$ is required for the variance of the iMOM component to exist while maintaining heavy tails. \\
% \addlinespace[3pt]
% \multicolumn{4}{@{}l}{\textit{Panel D: MCMC and decision rules}} \\
% \addlinespace[2pt]
% $R$ & Total number of draws & $10{,}000$ & \\
% $R_0$ & Burn-in draws & $2{,}000$ & \\
% -- & Draws per inner NLP block & $1$ retained \newline ($1$--$2$ burn-in for thinning) & Model-space and truncation samplers are embedded in the outer Gibbs sweep. \\
% $P$ & PIP threshold for individual break selection & $0.5$ & Median probability model \citep{barbieri2004optimal}; Bayes rule under symmetric classification losses ($c_0 = c_1$, $\bar{t} = 1$). \\
% $(c_0, c_1, \kappa)$ & Costs of a missed break, a spurious window, and one excess period of window width & $(1, 1, 1/3)$ & Symmetric classification losses; $\kappa/c_0 = 1/3$ implies that three excess periods of imprecision are as costly as one missed break. \\
% $P^{\bar{t}}$ & PIP threshold for a window of length $\bar{t}$ & $\frac{c_1 + \kappa(\bar{t}-1)}{c_0 + c_1}$ & Bayes rule under the width-penalised loss in \ref{eq:window_rule_app}. \\

\bottomrule
\end{tabular}

\vspace{0.6em}
\begin{minipage}{\textwidth}
\footnotesize
\textit{Notes:} Defaults refer to the standard \texttt{BISAM} implementation, with values in brackets for optional parts. 
% In the simulation study of Section~\ref{sec:simulation} we set $\tau = 3.32$, corresponding to $P(|\gamma| \leq \sigma \mid \tau) = 0.01$, in the empirical application of Section~\ref{sec:applications} we set $\tau = 1.92$, corresponding to $P(|\gamma| \leq \sigma \mid \tau) = 0.05$. 
% Values of $\tau$ are obtained by numerically inverting the prior cdf of the iMOM density in equation~\eqref{eq_pimom}. Note that the default step scales here reflect the threshold $\bar{\gamma} = \sigma$ that is natural when the regressors are $\{0,1\}$ step-shift indicators.
\end{minipage}
\end{table}

\subsection{Posterior Sampler}\label{app:posterior_sampler}

Posterior inference in BISAM relies on a Markov chain Monte Carlo (MCMC) algorithm which combines standard Gibbs updates for the mean function and the error variance with two non-local-prior (NLP) blocks: a model-space sampler for the break indicators $\bm{\delta}^{\gamma}$ \citep[in the spirit of][]{rossell2012nlp_modsel} and a latent-truncation Gibbs sampler for the break magnitudes $\bm{\gamma}$ \citep[following][]{rossell2017nonlocal}. Throughout, $\mid \bullet$ denotes conditioning on all remaining parameters and on the data, and $\hat{\varepsilon}_{i,t} = y_{i,t} - x_{i,t}'\bm{\beta} - \sum_{s} \gamma_{i,s} \mathbbm{1}_{\{t \ge s\}}$ denotes the current residual. Let
\begin{equation}
 \lambda_{i,t} = (1-\delta^{\varepsilon}_{i,t}) + \delta^{\varepsilon}_{i,t}\, 2\tau^{\varepsilon}
 \label{eq:lambda_inflation}
\end{equation}
denote the observation-specific variance inflation factor implied by the outlier component of Section~\ref{subsec:outlier}, so that, conditional on $\delta^{\varepsilon}_{i,t}$, the working error variance is $\sigma^2_i \lambda_{i,t}$ and the model can be treated as a heteroskedastic Gaussian regression. All model-selection and break-magnitude blocks operate on the GLS-transformed (i.e.\ variance-standardised) data $\tilde{y}_{i,t} = (y_{i,t} - x_{i,t}'\bm{\beta}) / \sqrt{\sigma^2_i \lambda_{i,t}}$ and $\tilde{z}_{i,t} = z_{i,t} / \sqrt{\sigma^2_i \lambda_{i,t}}$, so that the working error variance in those blocks equals unity.

Because step-shift indicators are unit-specific, the design matrix $\bm{Z}$ is block-diagonal across cross-sectional units. The model-selection and break-magnitude blocks can therefore be run unit by unit ($i = 1, \dots, N$) on $T-3$ candidate indicators each, rather than jointly on all $q = N(T-3)$ candidates, which reduces the model space explored per sweep from $2^{q}$ to $N$ spaces of size $2^{T-3}$ without altering the target posterior.

Table~\ref{tab:sampler} summarises one sweep of the sampler. Starting values are obtained from a (break-free) benchmark regression: $\bm{\beta}^{(0)} = \hat{\bm{\beta}}_{OLS}$, 
$\bm{\gamma}^{(0)} = \bm{0}$, $\bm{\delta}^{\gamma,(0)} = \bm{0}$,  $\bm{\delta}^{\varepsilon,(0)} = \bm{0}$ and $\sigma^{2,(0)}_i \sim \mathcal{G}^{-1}(m_0, n_0)$ (trimmed, see Table~\ref{tab:hyperpars}).

%------------------------------------------------------------------------------
% TABLE A.1 -- POSTERIOR SAMPLING ALGORITHM
%------------------------------------------------------------------------------
\begin{table}[htbp!]
\centering
\footnotesize
\setlength{\tabcolsep}{4pt}
\renewcommand{\arraystretch}{1.25}
\caption{Blocks of the BISAM posterior sampler. One sweep of the Gibbs algorithm.}
\label{tab:sampler}
\begin{tabular}{@{}c>{\raggedright\arraybackslash}p{3.1cm}>{\raggedright\arraybackslash}p{8.5cm}@{}}
\toprule
\textbf{Step} & \textbf{Quantity} & \textbf{Full conditional and sampling scheme} \\
\midrule

1 & Outlier probability $\eta$ &
$\eta \mid \bullet \sim \mathcal{B}\!\left(a_{\eta} + \textstyle\sum_{i,t}\delta^{\varepsilon}_{i,t}, b_{\eta} + NT - \textstyle\sum_{i,t}\delta^{\varepsilon}_{i,t}\right)$.
Conjugate Beta--Bernoulli update. \\

2 & Outlier indicators $\delta^{\varepsilon}_{i,t}$ &
$\delta^{\varepsilon}_{i,t} \mid \bullet \sim \mathcal{B}er(\bar{\eta}_{i,t})$ with
$\bar{\eta}_{i,t} = \dfrac{\eta\, \pi_{iMOM}(\hat{\varepsilon}_{i,t}; 0,1,3,\tau^{\varepsilon}\sigma^2_i)}
{\eta\, \pi_{iMOM}(\hat{\varepsilon}_{i,t}; 0,1,3,\tau^{\varepsilon}\sigma^2_i) + (1-\eta)\,\pi_{N}(\hat{\varepsilon}_{i,t}; 0,\sigma^2_i)}$,
evaluated on the log scale. Update $\lambda_{i,t}$ according to \eqref{eq:lambda_inflation}. \\

3 & Error variance $\sigma^2_i$ &
$\sigma^2_i \mid \bullet \sim \mathcal{G}^{-1}\!\left(m_0 + \tfrac{T}{2},
 n_0 + \tfrac{1}{2}\sum_{t=1}^{T} \dfrac{\hat{\varepsilon}^2_{i,t}}{\lambda_{i,t}}\right)$
(unit-specific), or with $T$ replaced by $NT$ and the sum taken over all $(i,t)$ under a pooled variance. Weighting by $\lambda_{i,t}^{-1}$ removes the outlier inflation. \\

4 & Mean-function coefficients $\bm{\beta}$ &
$\bm{\beta} \mid \bullet \sim \mathcal{N}(\bar{\bm{b}}, \bar{\bm{B}})$ with
$\bar{\bm{B}} = \left(\bm{X}'\bm{W}\bm{X} + \underline{\bm{B}}^{-1}\right)^{-1}$ and
$\bar{\bm{b}} = \bar{\bm{B}}\left(\bm{X}'\bm{W}\bm{y}^{\ast} + \underline{\bm{B}}^{-1}\underline{\bm{b}}\right)$,
where $\bm{y}^{\ast} = \bm{y} - \bm{Z}\bm{\gamma}$ and
$\bm{W} = \mathrm{diag}\{(\sigma^2_i \lambda_{i,t})^{-1}\}$.
Under the $f$-prior, $\underline{\bm{B}}^{-1} = \bm{X}'\bm{W}\bm{X}/g$ and $\underline{\bm{b}}$ is the break-free (trimmed) OLS estimate. Under the $g$-prior, $\underline{\bm{b}} = \bm{0}$. For stability, drawn via a Cholesky factor of $\bar{\bm{B}}$. \\

5 & Break indicators $\bm{\delta}^{\gamma}_{i,\cdot}$ \newline (for each unit $i$) & One sweep of the model-space Gibbs sampler of \citet{rossell2012nlp_modsel} on the $T-3$ candidate indicators of unit $i$, applied to the standardised data $(\tilde{y}_{i,\cdot}, \tilde{\bm{Z}}_{i,\cdot})$. For each candidate $s$, the current model $\mathcal{M}_{curr}$ is compared to the model $\mathcal{M}_{cand}$ that flips $\delta^{\gamma}_{i,s}$. The flip is accepted with probability 
$r = \dfrac{
m_{cand}(\tilde{y})\,p(\mathcal{M}_{cand})
} {
m_{cand}(\tilde{y})\,p(\mathcal{M}_{cand}) + m_{curr}(\tilde{y})\,p(\mathcal{M}_{curr})
}$,
where $p(\mathcal{M}) \propto \omega^{|\mathcal{M}|}(1-\omega)^{T-3-|\mathcal{M}|}$. Marginal likelihoods $m(\cdot)$ under the iMOM slab have no closed form and are evaluated by (approximate) Laplace approximation \citep[Eq.~9--11 in][]{rossell2012nlp_modsel}. The block returns both the sampled model and the marginal inclusion probabilities (PIPs). \\

6 & Break magnitudes $\gamma_{i,s}$, $s: \delta^{\gamma}_{i,s}=1$ &
Conditional on the selected model, draw from the iMOM posterior using the latent-truncation representation of \citet{rossell2017nonlocal}, which writes
$\pi_{iMOM}(\bm{\gamma} \mid \sigma^2) = \mathcal{N}(\bm{\gamma}; \bm{0}, \tau_N \sigma^2 \bm{I}) \prod_{s} d(\gamma_{s}, \sigma^2)$ with
$d(\gamma_{s}, \sigma^2) = \pi_{iMOM}(\gamma_{s} \mid \sigma^2) / \mathcal{N}(\gamma_{s}; 0, \tau_N \sigma^2)$.
Setting $\tau_N = 2\tau$ renders $d(\cdot)$ monotone increasing in $\gamma_{s}^2$, so that $d^{-1}(\cdot)$ exists. With $\bm{S} = \tilde{\bm{Z}}'\tilde{\bm{Z}}$ and $\bm{m} = \bm{S}^{-1}\tilde{\bm{Z}}'\tilde{\bm{y}}$:
\newline
\hspace*{0.4em}(a) $\lambda^{\gamma}_{s} \sim \mathcal{U}(0,\, d(\gamma^{(k-1)}_{s}, \sigma^2))$ for every included $s$;
\newline
\hspace*{0.4em}(b) $\bm{\gamma} \sim \mathcal{N}(\bm{m}, \bm{S}^{-1}) \prod_{s} \mathbbm{1}\{ |\gamma_{s}| > d^{-1}(\lambda^{\gamma}_{s}; \sigma^2)\}$,
i.e. a multivariate Normal truncated away from a neighbourhood of the origin. The region is non-convex and draws are obtained after orthogonalisation, with $d^{-1}(\cdot)$ evaluated by an asymptotic approximation combined with a linear interpolation search. Excluded coefficients are set exactly to zero (Dirac spike).\\

% 7 & Storage &
% Retain $\bm{\beta}, \bm{\gamma}, \bm{\delta}^{\gamma}, \bm{\delta}^{\varepsilon}, \sigma^2$ and the PIPs after burn-in. Posterior statistics such as means and equal-tailed credible intervals are computed from the retained draws. \\
\bottomrule
\end{tabular}

\vspace{0.6em}
\begin{minipage}{\textwidth}
\footnotesize
\textit{Notes:} $\mathcal{B}$, $\mathcal{B}er$, $\mathcal{N}$, $\mathcal{U}$ and $\mathcal{G}^{-1}$ denote the Beta, Bernoulli, Normal, Uniform and inverse Gamma distribution, respectively, while $\pi_{N}$ and $\pi_{iMOM}$ denote the Gaussian and iMOM densities as defined in equation~\eqref{eq_pimom}. Steps 5 and 6 are executed unit by unit, exploiting the block-diagonal structure of $\bm{Z}$. Steps 1--2 are omitted when the model is estimated without the outlier component, in which case $\lambda_{i,t} \equiv 1$. Steps 5 and 6 are implemented through a tailored version of the \texttt{mombf} package \citep{rossell2017nonlocal} where within each sweep, the model-space sampler and the latent-truncation sampler are run for a single retained draw, which preserves the invariant distribution of the overall chain.
\end{minipage}
\end{table}

\clearpage

\section{Additional simulation results}
\label{app:add_res_sim}

\begin{figure}[ht!]
 \centering
 \includegraphics[width=.9\linewidth]{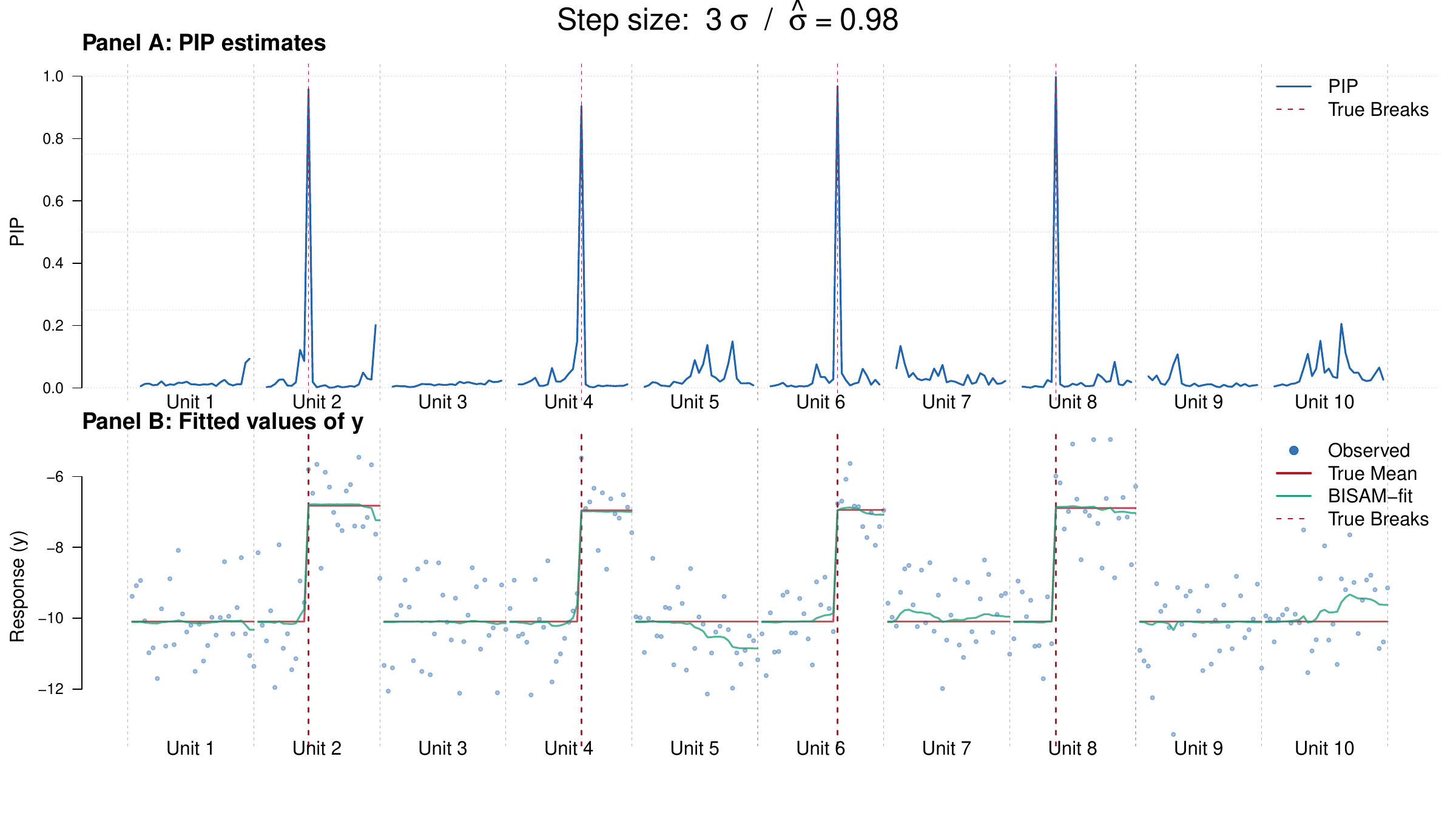}
 \caption{Sparse simulation setup. Posterior inclusion probabilities in the upper panel. Lower panel shows the simulated data points (blue dots), the true mean (red line) and the fit of BISAM. Red dashed lines indicate true breaks.}
 \label{fig:sim_setup_sparse}
\end{figure}

\begin{figure}[ht!]
 \centering
 \includegraphics[width=.9\linewidth]{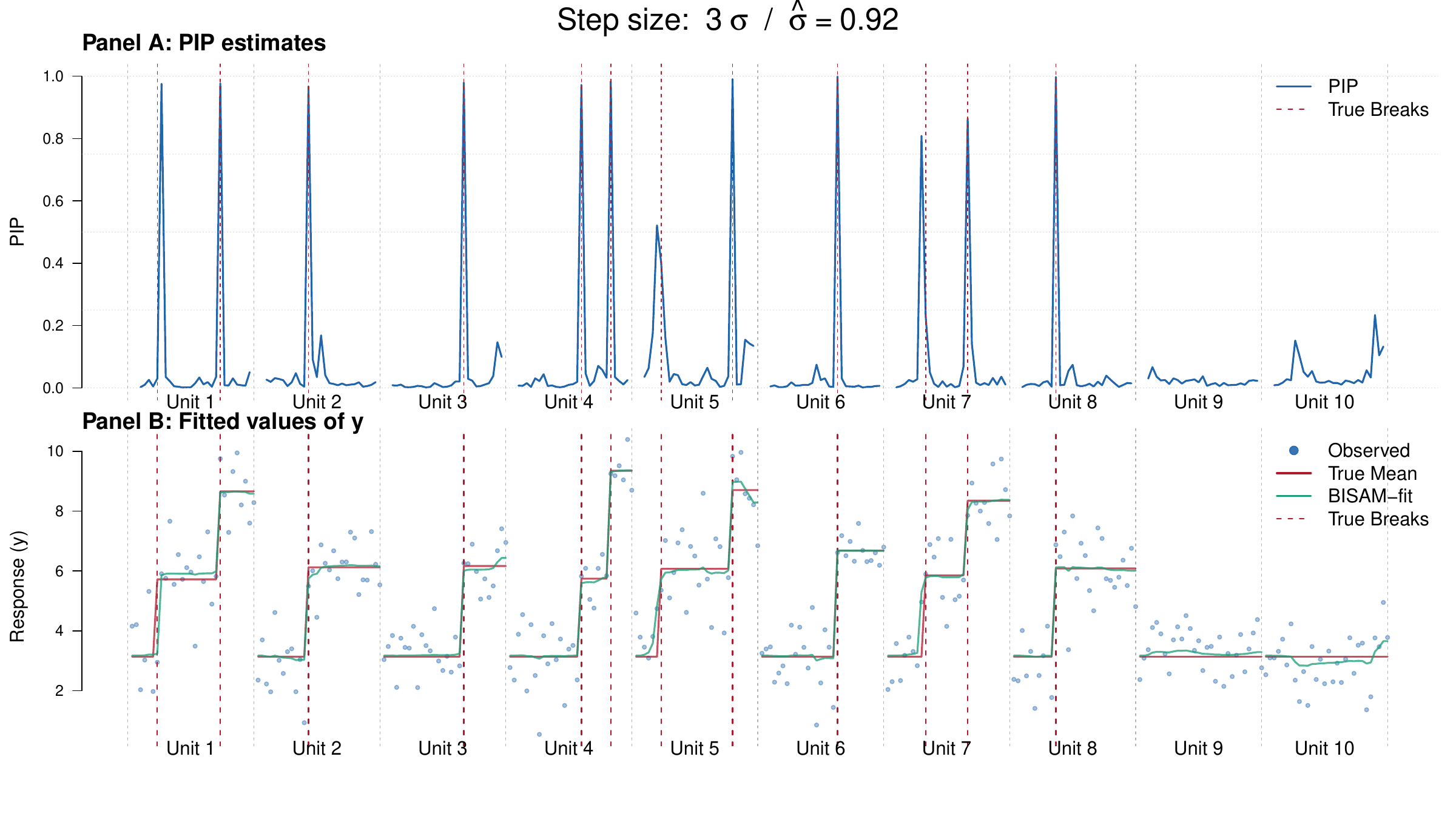}
 \caption{Dense simulation setup. Posterior inclusion probabilities in the upper panel. Lower panel shows the simulated data points (blue dots), the true mean (red line) and the fit of BISAM. Red dashed lines indicate true breaks.}
 \label{fig:sim_setup_dense}
\end{figure}

\begin{figure}[ht!]
 \centering
 \includegraphics[width=.9\linewidth]{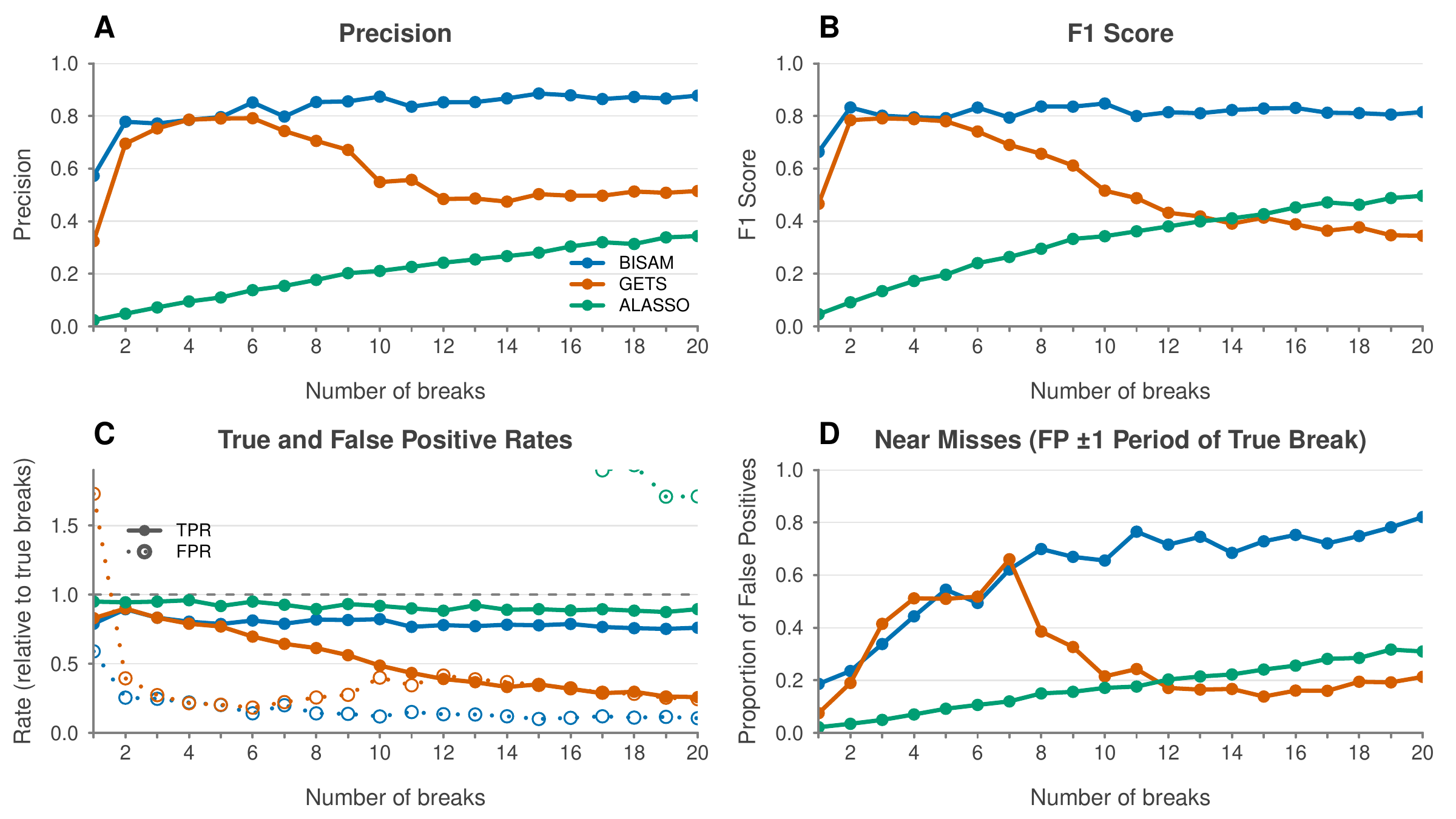}
 \caption{Performance metrics (\textcolor{ssvs}{BISAM}, \textcolor{gets}{GETS}, \textcolor{alasso}{ALASSO}) for simulations assuming a relative break size of three standard deviations of the error term and varying number of breaks. FPR of ALASSO in panel (C) not shown due to being exceeedingly high.}
 \label{fig:sim_results_bn}
\end{figure}

\begin{figure}[ht!]
 \centering
 \includegraphics[width=\linewidth, page = 1]{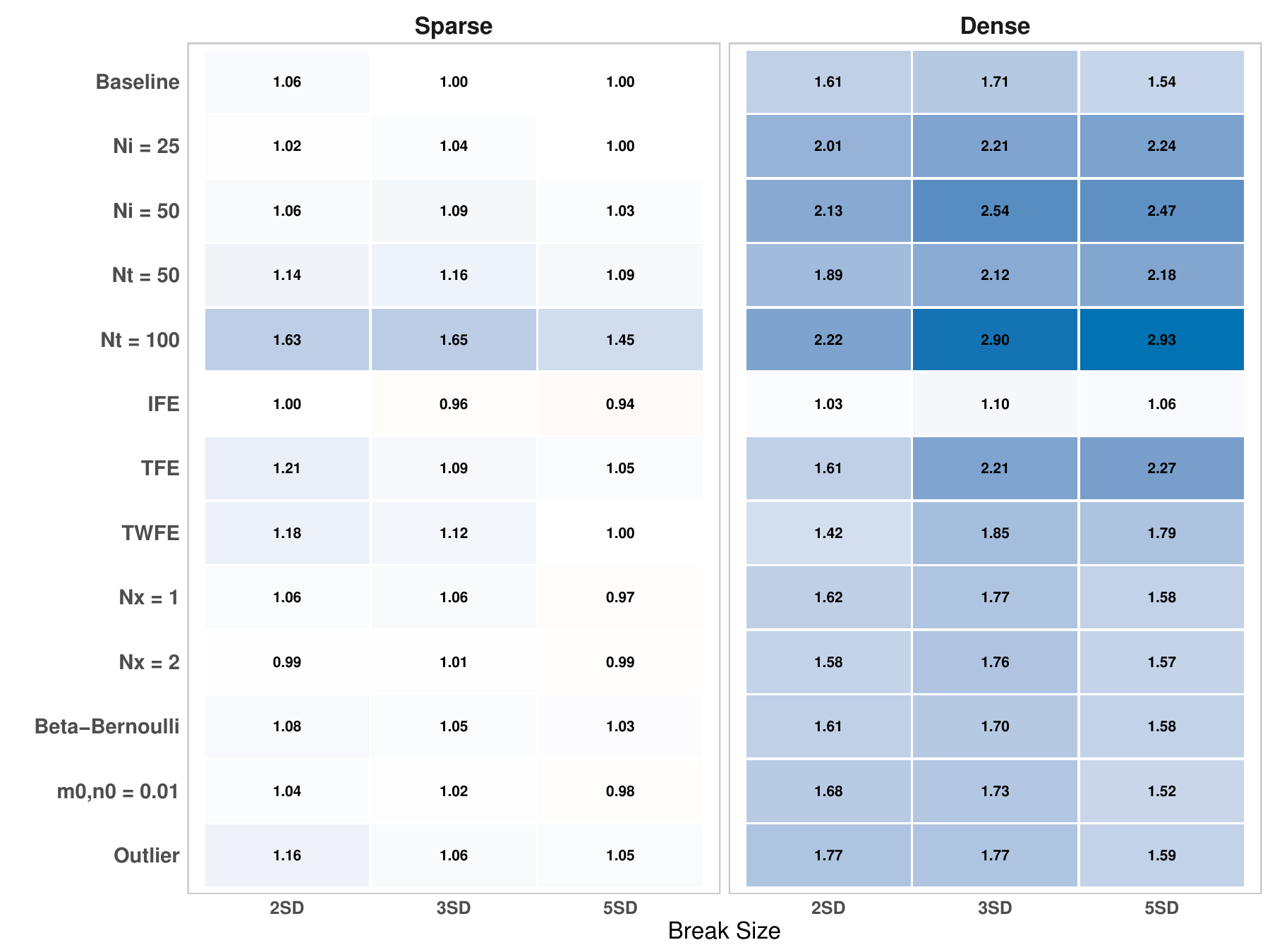}
 \caption{Simulation variations. Relative F1 scores of BISAM compared to GETS for different specification variations. 
 Numbers refer to ratio between F1 scores achieved by BISAM and GETS for a given specification. Blue (orange) cell shadings indicate relatively higher (lower) performance of BISAM compared to GETS. 
 Variations include baseline (1\textsuperscript{st} row), using an increased cross-section of 25 or 50 units (2\textsuperscript{nd} \& 3\textsuperscript{rd} rows), using an increased time dimension of 50 and 100 time periods (4\textsuperscript{th} \& 5\textsuperscript{th} rows), 
 inclusion of unit fixed effects (6\textsuperscript{th} row), time fixed effects (7\textsuperscript{th} row), two-way fixed effects (8\textsuperscript{th} row), inclusion of one or two external regressors (9\textsuperscript{th} \& 10\textsuperscript{th} rows), using a hierarchical Beta-Bernoulli prior setup for $\omega_{i,s}$ (11\textsuperscript{th} row), using uninformative prior values $m_0 = n_0 = 0.01$ for the error term (12\textsuperscript{th} row), using the outlier detection framework as detailed in Section~\ref{subsec:outlier} (13\textsuperscript{th} row).}
 \label{fig:sim_variations_f1_ratio}
\end{figure}

\begin{figure}[ht!]
 \centering
 \includegraphics[width=\linewidth, page = 2]{img/revision/figure_heatmap_2026-07-24_variations.pdf}
 \caption{Simulation variations. Difference of F1 scores between BISAM and GETS for different specification variations. 
 Numbers refer to difference between F1 scores achieved by BISAM and GETS for a given specification. Blue (orange) cell shadings indicate relatively higher (lower) performance of BISAM compared to GETS.
 Variations include baseline (1\textsuperscript{st} row), using an increased cross-section of 25 or 50 units (2\textsuperscript{nd} \& 3\textsuperscript{rd} rows), using an increased time dimension of 50 and 100 time periods (4\textsuperscript{th} \& 5\textsuperscript{th} rows), 
 inclusion of unit fixed effects (6\textsuperscript{th} row), time fixed effects (7\textsuperscript{th} row), two-way fixed effects (8\textsuperscript{th} row), inclusion of one or two external regressors (9\textsuperscript{th} \& 10\textsuperscript{th} rows), using a hierarchical Beta-Bernoulli prior setup for $\omega_{i,s}$ (11\textsuperscript{th} row), using uninformative prior values $m_0 = n_0 = 0.01$ for the error term (12\textsuperscript{th} row), using the outlier detection framework as detailed in Section~\ref{subsec:outlier} (13\textsuperscript{th} row).}
 \label{fig:sim_variations_f1_diff}
\end{figure}

\clearpage

\section{Additional replication results}
\label{app:add_res_repl}

\begin{figure}[htbp!]
 \centering
 \includegraphics[width=\linewidth, height=0.25\textheight, keepaspectratio=false, page = 1]{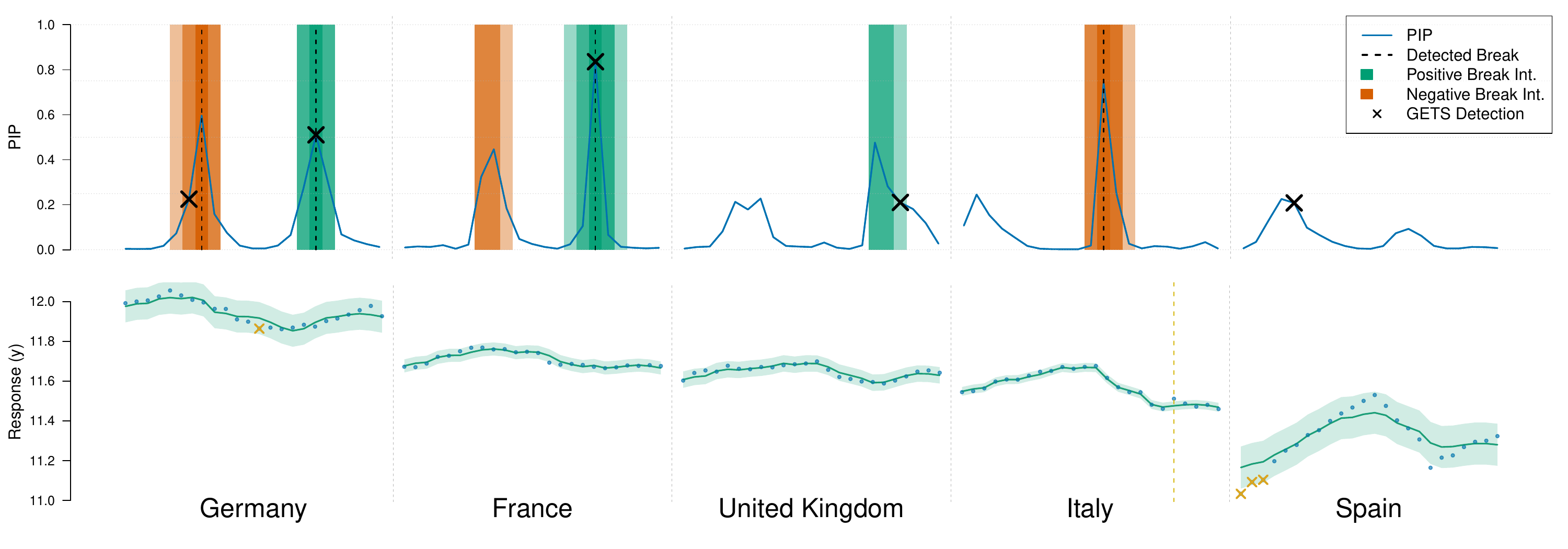}
 \includegraphics[width=\linewidth, height=0.25\textheight, keepaspectratio=false, page = 2]{img/revision/figure_5_outlier-TRUE-10_beta-f-10_sisprior-imom-1.92072941034706_modprior-beta_bern-0.5.pdf}
 \includegraphics[width=\linewidth, height=0.25\textheight, keepaspectratio=false, page = 3]{img/revision/figure_5_outlier-TRUE-10_beta-f-10_sisprior-imom-1.92072941034706_modprior-beta_bern-0.5.pdf}
 \caption{Break detection in EU transport emissions using a Beta-Bernoulli setup for the prior inclusion probabilities $\omega_{i,s}$. Breaks for individual dates detected are marked by vertical dashed lines (for BISAM) and black crosses (for GETS) in the upper panels, detected outliers are marked similarly in the lower panels. Green (orange) shaded areas indicate periods in which BISAM detects positive (negative) breaks for window lengths $\bar{t}_{\{1,2,3\}}$. Transparency of the shading indicates the length of windows under which a break is detected, with longer windows being more transparent.}
 \label{fig:replication_beta-bern}
\end{figure}

\newpage

\end{document}